\begin{document}

\title{High-dimensional Asymptotics of Generalization Performance in Continual Ridge Regression}

\author{\name Yihan Zhao \email zhao-yh23@mails.tsinghua.edu.cn \\
       \addr Department of Mathematical Sciences\\
       Tsinghua University\\
       Beijing, 100084, China
       \AND
       \name Wenqing Su \email suwenqing@snnu.edu.cn \\
        \addr School of Mathematics and Statistics \\
       Shaanxi Normal University\\
       Xi'an, 710119, China
        \AND
        \name Ying Yang \email yangying@tsinghua.edu.cn \\
     \addr Department of Statistics and Data Science\\
       Tsinghua University\\
       Beijing, 100084, China\\~\\
       }
\editor{My editor}

\maketitle
\begin{abstract}
Continual learning is motivated by the need to adapt to real-world dynamics in tasks and data distribution while mitigating catastrophic forgetting. Despite significant advances in continual learning techniques, the theoretical understanding of their generalization performance lags behind. This paper examines the theoretical properties of continual ridge regression in high-dimensional linear models, where the dimension is proportional to the sample size in each task. Using random matrix theory, we derive exact expressions of the asymptotic prediction risk, thereby enabling the characterization of three evaluation metrics of generalization performance in continual learning: average risk, backward transfer, and forward transfer. Furthermore, we present the theoretical risk curves to illustrate the trends in these evaluation metrics throughout the continual learning process. Our analysis reveals several intriguing phenomena in the risk curves, demonstrating how model specifications influence the generalization performance. Simulation studies are conducted to validate our theoretical findings.
\end{abstract}

\begin{keywords}
  Continual learning, continual ridge regression, high-dimensional asymptotics, generalization performance, risk curves
\end{keywords}

\section{Introduction}
Continual learning, also termed incremental learning or lifelong learning, trains models on sequential tasks with evolving data distributions. The ideal learner is obtained using both previous data and current data to ensure that the current model is adapted to the current task while maintaining the performance on previous tasks. However, due to memory limitations, the previous data may not be available when the new task arrives, which results in performance reduction on previous tasks. This phenomenon is referred to as catastrophic forgetting \citep{MCCLOSKEY1989109, WOS:A1995RL52000001}. To mitigate the influence of catastrophic forgetting, substantial progress has been made in continual learning by designing various techniques, leading to impressive success in practical applications \citep{9349197, 10444954}. Despite their methodological innovations, theoretical understandings of continual learning methods remain scarce even in relatively simple models. When a provably effective continual learning method is applied to a statistical model, how can we characterize its generalization performance (the ability to maintain stable predictions across previous tasks while adapting to new prediction tasks)? Specifically, how do model specifications, such as model complexity and task similarity, influence the generalization performance under different patterns of task dynamics?  Addressing this question comprehensively relies on an insightful theory to capture the stepwise performance in continual learning, which is underdeveloped in most existing work.

To bridge this gap, we begin with the linear regression framework, since it provides a principled starting point that balances analytical tractability and practical utility. We focus on the theoretical properties of continual ridge regression \citep{pmlr-v232-li23b} in high-dimensional random-designed linear models. As a standard continual learning method in regression models, continual ridge regression adds $l_2$-regularization terms to the loss functions to constrain changes in the regression coefficients. Motivated by the question mentioned above, our primary concern is to establish a rigorous characterization of the stepwise generalization performance in continual ridge regression. Building upon this characterization, we aim to interpret how the generalization performance is influenced by key model specifications. Besides, the severity of catastrophic forgetting can be modulated to some extent by the choice of regularization parameters. This allows us to quantitatively investigate how catastrophic forgetting affects the generalization performance in continual ridge regression. Our contributions are summarized as follows. 

\begin{itemize}
    \item Using asymptotic random matrix theory, we derive the exact expression of asymptotic prediction risk on a single prediction task in a high-dimensional regime where the parameter dimension grows proportionally with the training sample size in each task. 
    This result explicitly characterizes how the asymptotic prediction risk depends on model complexity and task similarity, where the model complexity is quantified by the ratio of parameter dimension to sample size, and task similarity is characterized by the joint empirical spectral distribution of task-specific covariance matrices.

    \item 
    We establish the asymptotic behavior of three evaluation metrics for continual ridge regression---average risk, backward transfer, and forward transfer---and characterize their precise dependence on model complexity and task similarity in the high-dimensional setting. These metrics capture the key capabilities of interest in continual learning: average risk measures the overall performance across tasks, backward transfer evaluates how learning new tasks influences performance on previously learned tasks, and forward transfer quantifies how learning a current task improves performance on future tasks.

    \item We demonstrate our theoretical findings through three representative examples with different structures of task-specific covariance matrices.
    In each example, the asymptotic risk curves are derived to visualize the stepwise generalization performance in the procedure of continual ridge estimation. We find that the asymptotic risk curves behave differently according to the dynamics of task-specific covariance matrices and the choice of regularization parameters. In addition, simulations are also conducted to verify our theoretical risk curves.
\end{itemize}

\subsection{Related Works}
Over the past few years, researchers have developed a massive number of continual learning methods. Roughly speaking, these methods can be categorized into three groups, including 
regularization-based approach \citep{pnas.1611835114, 8107520, NEURIPS2018_f31b2046, 10.5555/3495724.3496032}, replay-based approach \citep{10.5555/3295222.3295393, 8100070, DBLP:conf/iclr/RiemerCALRTT19, 9412614, 10.5555/3540261.3542497} and architecture-based approach \citep{ramesh2021model, pmlr-v162-gurbuz22a, WOS:000870759102067}. Regularization-based approaches introduce regularization terms into loss functions to balance the information from old tasks and new tasks. \cite{pnas.1611835114} used weighted regularization methods to overcome catastrophic forgetting in neural networks.
\cite{8107520} applied function regularization methods to enable learning without forgetting in convolutional neural networks. Replay-based approaches store a few previous data in a memory buffer and replay them in future tasks. \cite{10.5555/3295222.3295393} proposed the Gradient Episodic Memory (GEM) model, which alleviated forgetting while allowing beneficial transfer of knowledge to previous tasks. \cite{10.5555/3540261.3542497} proposed the Gradient based Memory EDiting (GMED) framework, which edited the examples stored in the replay memory. Architecture-based approaches construct task-specific parameters to prevent catastrophic forgetting.
\cite{ramesh2021model} grouped the associated tasks and split the learning capacity across sets of synergistic tasks. \cite{pmlr-v162-gurbuz22a} employed  connection rewriting in sparse neural networks to create new plastic paths that reused existing knowledge on novel tasks. 

In recent years, several theoretical studies have been conducted on continual learning. For example, \cite{pmlr-v139-lee21e} studied continual learning in the teacher-student setup to find out the theoretical reasons for interference between tasks. \cite{li2022} established sample complexity and generalization error bounds for new tasks in continual representation learning problems. \cite{yang2023optimizing} proposed a theoretical analysis of a SPCA-based continual learning algorithm using high-dimensional statistics. \cite{pmlr-v235-wen24f} analyzed  the theoretical properties of contrastive continual learning methods. The results in the above studies are not as explicit as ours since they focus on model-free prediction problems. 
By contrast, some theoretical works have focused on results within concrete models, especially regression models. \cite{pmlr-v178-evron22a} theoretically characterized the worst-case catastrophic forgetting in over-parameterized linear regression models. Different from our work, they assumed that the collection of tasks has cyclic task orderings, so that the mechanism of forgetting is quite different. \cite{pmlr-v232-li23b} presented a fixed-design analysis of the $l_2$-regularized continual learning algorithm for linear regression models. However, they only consider a two-task setting and provide risk bounds, which cannot adequately explain the generalization performance of continual ridge regression. 
Very recently, \cite{pmlr-v235-zhao24n} initiated a statistical analysis of a family of generalized $l_2$-regularized continual learning algorithms for linear regression models. Although they derived precise results of estimation error, their analysis did not provide a comprehensive evaluation of the generalization ability of estimators in continual learning. \cite{goldfarb2025} demonstrated that overparameterization can mitigate forgetting by considering a two-task latent space regression model. While their work revealed the influence of model complexity on generalization performance, it did not account for the impact of task similarities.

\subsection{Organizations and Notations}
The rest of this paper is organized as follows. In Section \ref{60}, we introduce the settings of continual risk regression, formally define the continual ridge estimator and compute its prediction risk. In Section \ref{61}, we present our main result of asymptotic prediction risk and provide three concrete examples of covariance structures to illustrate our main theorem. In Section \ref{62}, we derive asymptotic risk curves to evaluate the performance of continual ridge regression, and conduct experiments to verify our results. We also discuss the procedure of choosing regularization parameters in continual learning frameworks. In Section \ref{63}, we conclude the paper and provide possible extensions. Technical proofs are included in the Appendix.

Throughout this article, $\mathbb{C}^+=\{z\in \mathbb{C},\text{Im}(z)>0\}$, and $\lfloor x \rfloor$ denotes the greatest integer less than or equal to $x$. Let $\Vert \cdot \Vert:=\Vert \cdot \Vert_2$ denote the Euclidean norm of a real vector, and $\langle \cdot,\cdot \rangle$ denote the inner product induced by the Euclidean norm. For any symmetric real matrix $A \in \mathbb{R}^{p \times p}$, $\lambda_{max}(A)$ and $\lambda_{min}(A)$ denote the largest and smallest eigenvalues of $A$ respectively.  Let $\{\lambda_t\}_{t\ge 1 } \in \mathbb{R}^{p}$ be a sequence of positive numbers, and $\{A_t\}_{t\ge 1 } \in \mathbb{R}^{p \times p}$ be a sequence of positive definite matrices.
To handle boundary cases in recursive expressions uniformly (e.g., in Lemma \ref{46} and Theorem \ref{10}), we define the empty scalar product as 
 $\prod_{t=k}^{l}\lambda_t=1$ if $k>l$, and the empty matrix product as $A_lA_{l-1}\cdots A_k=I_p$ if $k>l$. 

\section{Continual Ridge Regression}
\label{60}
Consider a continual learning problem with a sequence of $T$ tasks. In the $t$-th task ($t=1,\cdots,T$), we observe a dataset $\mathcal{D}_t=\{(x_{t,i}, y_{t,i})\in \mathbb{R}^p\times \mathbb{R} \}_{i=1}^{n_t}$, where $x_{t,i}$ is a random feature vector, $y_{t,i}$ is the corresponding response, and $n_t$ denotes the sample size of $\mathcal{D}_t$. We assume that the feature vectors in different tasks are independent, and a shared linear regression model governs all tasks, such that
\begin{align*}
y_{i,t} = x_{i, t}^\top \beta + \epsilon_{i, t}, \quad t=1,\cdots,T,\quad  i =1,\cdots, n_t,
\end{align*}
where the feature vectors in the $t$-th task $\{x_{t,i}\}_{i=1}^{n_t}$ are i.i.d. with $\mathbb{E}(x_{t, i}) = 0, \text{Cov}(x_{t, i}) = \Sigma_t$, and the noise terms
$\{\epsilon_{t,i}\}_{t=1,\cdots,T, i =1,\cdots, n_t,} $ across all tasks are i.i.d. with $\mathbb{E}(\epsilon_{t, i}) = 0, \text{Var}(\epsilon_{t, i}) = \sigma^2$.  For analytical convenience, we rewrite the linear model for each task $t$ as the matrix form
\begin{align}
y_t=X_t\beta + \epsilon_t, \quad t=1,\cdots,T,
\label{1}
\end{align}
where $X_t=(x_{1,t}, \cdots, x_{n_t,t})^\top \in \mathbb{R}^{n_t \times p}, y_t = (y_{1,t}, \cdots, y_{n_t,t})^\top \in \mathbb{R}^{n_t}$ and $\epsilon_t 
= (\epsilon_{1,t}, \cdots, \epsilon_{n_t, t})^\top 
\\ \in \mathbb{R}^{n_t}$.

In continual learning frameworks, the estimation of $\beta$ is updated upon the arrival of each new task. Let $\hat{\beta}_t$ denote the estimator after the arrival of task $t$, then $\hat{\beta}_t$ is computed using the current dataset $\mathcal{D}_{t}$ while retaining no direct access to previous datasets $\mathcal{D}_1, \cdots, \mathcal{D}_{t-1}$ due to memory constraints. Instead of previous datasets, historical information is incorporated through the previous estimator $\hat{\beta}_{t-1}$. Continual ridge regression updates the estimator by fitting the current dataset with ridge regularization to constrain the change from previous estimator $\hat{\beta}_{t-1}$ to the new estimator $\hat{\beta}_{t}$. Specifically, the updating rule is
\begin{align}
\hat{\beta}_t= \mathop{\arg\min} \limits_\beta \{\frac{1}{n_t}\Vert X_t\beta - y_t\Vert^2 + \lambda_t\Vert\beta - \hat{\beta}_{t-1}\Vert^2\},
\quad t=1,\cdots,T,
\label{59}
\end{align}
where $\lambda_t>0$ is the ridge tuning parameter at step $t$. The solution of continual ridge estimator is 
\begin{align*}
\hat{\beta}_t=  \hat{\beta}^{\rm ridge}_{t} + \lambda_t(\hat{\Sigma}_t+\lambda_tI_p)^{-1}\hat{\beta}_{t-1},\quad t=1,\cdots,T,
\end{align*}
where $\hat{\beta}^{\rm ridge}_{t}=(\hat{\Sigma}_t+\lambda_t I_p)^{-1}(\frac{1}{n_t}X_t^\top y_t)$ is the ridge estimator using dataset $\mathcal{D}_t$, and $\hat{\Sigma}_t=\frac{1}{n_t}X_t^\top X_t$ denotes the sample covariance matrix of $X_t$. If we initialize $\hat{\beta}_0=0$, then the explicit form of the continual ridge estimator is given by
\begin{equation}
\hat{\beta}_t=  \hat{\beta}^{\rm ridge}_{t} + A_t\hat{\beta}^{\rm ridge}_{t-1} + A_tA_{t-1}\hat{\beta}^{\rm ridge}_{t-2} + \cdots + A_tA_{t-1}\cdots A_2\hat{\beta}^{\rm ridge}_{1}, \quad t=1,\cdots,T,
\label{2}
\end{equation}
where $A_t=\lambda_t(\hat{\Sigma}_t+\lambda_tI_p)^{-1}$. 

To evaluate the generalization performance of continual learning methods in linear regression model, we first define the out-of-sample excess risk conditional on training data $X=(X_1^\top, \cdots, X_T^\top)^\top$. If the input test data $x_0$ satisfies $\mathbb{E}(x_{0}) = 0, \text{Cov}(x_{0}) = \Sigma_0$, and the response variable is denoted as $y_0$, then the corresponding prediction risk of an estimator $\hat{\beta}$ is
\begin{align*}
R_X(\hat{\beta}; \beta,\Sigma_0)=\mathbb{E}[(x_0^\top \hat{\beta}-y_0)^2|X]-\mathbb{E}[(x_0^\top \beta-y_0)^2|X]=\mathbb{E}[(x_0^\top \hat{\beta}-x_0^\top \beta)^2|X].
\end{align*}

Our primary interest is to evaluate the generalization performance of continual ridge estimator given by (\ref{2}). Advised by \cite{10444954}, we consider the following three aspects of evaluation rules:

\begin{itemize}
    \item \textbf{Overall performance}, which is evaluated by (weighted) average risk
    \begin{align*}
\bar{R}_X(\hat{\beta}_T; \beta)=\sum_{t=1}^T\omega_tR_X(\hat{\beta}_T; \beta,\Sigma_t),
\end{align*}
where $\omega_t>0$ for all $t$ and $\sum_{t=1}^T \omega_t=1$.
The weights are chosen according to the prior knowledge of the prediction task. By default, the test data at step $T$ comes from one of the $T$ training tasks with probability proportional to the sample size of each training dataset. Under this assumption, we may choose $\omega_t=n_t/\sum_{k=1}^T n_k$.
    \item \textbf{Memory stability}, which is evaluated by backward transfer
     \begin{align*}
{BWT}_X(\hat{\beta}_T; \beta)=\sum_{t=1}^{T-1}\tilde{\omega}_t\big(R_X(\hat{\beta}_T; \beta,\Sigma_t)-R_X(\hat{\beta}_t; \beta,\Sigma_t)\big),
\end{align*}
where $\tilde{\omega}_t=n_t/\sum_{k=1}^{T-1} n_k$. Backward transfer on a sequence of tasks quantifies the average influence of learning new tasks on old tasks. If the backward transfer is less than 0, learning new tasks are beneficial for predicting on old tasks. Otherwise, learning new tasks in this task sequence produce performance reduction on old tasks.
    \item \textbf{Learning plasticity}, which is evaluated by forward transfer
     \begin{align*}
{FWT}_X(\hat{\beta}_T; \beta)=\sum_{t=2}^{T}\bar{\omega}_t\big(R_X(\hat{\beta}_t; \beta,\Sigma_t)-R_X(\hat{\beta}^{\rm ridge}_{t}; \beta,\Sigma_t)\big),
\end{align*}
where $\bar{\omega}_t=n_t/\sum_{k=2}^{T} n_k$. Forward transfer on a sequence of tasks quantifies the average influence of historical information on learning current tasks. Note that the risk of continual ridge estimator measures the performance when learning with historical information, while the risk of ridge estimator at current step measures the performance when learning without historical information. If the forward transfer is less than 0, using historical information is beneficial for learning current tasks. Otherwise, a continual learning method is worse than simply learning current tasks.
\end{itemize}

From the preceding definitions, we observe that these evaluation metrics can be expressed as linear functions of the prediction risk associated with specific continual ridge estimators (the standard ridge estimator corresponds to the special case where the task number $T=1$). Therefore, it remains to analyze the prediction risk of continual ridge estimators.

Note that the prediction risk has a bias-variance decomposition \citep{10.1214/21-AOS2133}, 
\begin{align*}
R_X(\hat{\beta}; \beta,\Sigma_0)=B_X(\hat{\beta}; \beta,\Sigma_0)+V_X(\hat{\beta}; \beta,\Sigma_0),
\end{align*}
where
\begin{equation}
B_X(\hat{\beta}; \beta,\Sigma_0)=\big(\mathbb{E}(\hat{\beta}|X)-\beta\big)^\top\Sigma_0\big(\mathbb{E}(\hat{\beta}|X)-\beta\big),
\label{3}
\end{equation}
\begin{equation}
V_X(\hat{\beta}; \beta,\Sigma_0)={\rm Tr}\big(\text{Cov}(\hat{\beta}|X)\Sigma_0\big).
\label{4}
\end{equation}
To end this section, we derive the expressions of bias and variance terms for continual ridge estimators.

\begin{lemma}
Under model (\ref{1}), the bias and variance terms of continual ridge estimator (\ref{2}) are respectively
\begin{align*}
B_X(\hat{\beta}_T; \beta,\Sigma_0)&=\beta^\top A_1A_2\cdots A_T\Sigma_0 A_T\cdots A_2A_1\beta,\\
V_X(\hat{\beta}_T; \beta,\Sigma_0)&=\sigma^2\sum_{t=1}^{T} \frac{1}{\lambda_tn_t}{\rm Tr}\big[A_TA_{T-1}\cdots A_{t+1}(A_t-A_t^2)A_{t+1}\cdots A_{T-1}A_T\Sigma_0\big].
\end{align*}
\label{46}
\end{lemma}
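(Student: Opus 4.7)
The plan is to reduce the two formulas to a per-task decomposition of the individual ridge estimators $\hat\beta_t^{\rm ridge}$ and then exploit cross-task independence plus a telescoping identity for the bias. The key algebraic fact I would establish first is that $(\hat\Sigma_t+\lambda_tI_p)^{-1}\hat\Sigma_t=I_p-A_t$ and, equivalently, $(\hat\Sigma_t+\lambda_tI_p)^{-1}=A_t/\lambda_t$, which makes $A_t$ symmetric and lets me rewrite everything in terms of the $A_t$'s alone.

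Using model (\ref{1}), I would plug $y_t=X_t\beta+\epsilon_t$ into the definition of $\hat\beta_t^{\rm ridge}$ to get the conditional decomposition
\begin{align*}
\hat\beta_t^{\rm ridge}\mid X=(I_p-A_t)\beta+(\hat\Sigma_t+\lambda_tI_p)^{-1}\tfrac{1}{n_t}X_t^\top\epsilon_t,
\end{align*}
so that $\mathbb{E}(\hat\beta_t^{\rm ridge}\mid X)=(I_p-A_t)\beta$ and, using $(\hat\Sigma_t+\lambda_tI_p)^{-1}\hat\Sigma_t(\hat\Sigma_t+\lambda_tI_p)^{-1}=(A_t-A_t^2)/\lambda_t$, the conditional covariance is $\mathrm{Cov}(\hat\beta_t^{\rm ridge}\mid X)=\frac{\sigma^2}{\lambda_t n_t}(A_t-A_t^2)$.

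For the bias, I would write (\ref{2}) as $\hat\beta_T=\sum_{t=1}^{T}B_t\hat\beta_t^{\rm ridge}$ with $B_t=A_TA_{T-1}\cdots A_{t+1}$ (so $B_T=I_p$ by the empty-product convention), take conditional expectation, and notice the telescoping
\begin{align*}
\mathbb{E}(\hat\beta_T\mid X)=\sum_{t=1}^{T}B_t(I_p-A_t)\beta=\sum_{t=1}^{T}(B_t-B_{t-1})\beta=(I_p-A_TA_{T-1}\cdots A_1)\beta,
\end{align*}
since $B_{t-1}=B_tA_t$ and $B_0=A_TA_{T-1}\cdots A_1$. Substituting into (\ref{3}) and using the symmetry of each $A_t$ gives the stated bias formula $\beta^\top A_1A_2\cdots A_T\Sigma_0 A_T\cdots A_2A_1\beta$.

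For the variance, I would use that the noise vectors $\epsilon_1,\ldots,\epsilon_T$ are independent across tasks, so conditional on $X$ the estimators $\hat\beta_1^{\rm ridge},\ldots,\hat\beta_T^{\rm ridge}$ are (conditionally) independent. Hence
\begin{align*}
\mathrm{Cov}(\hat\beta_T\mid X)=\sum_{t=1}^{T}B_t\,\mathrm{Cov}(\hat\beta_t^{\rm ridge}\mid X)\,B_t^\top=\sum_{t=1}^{T}\frac{\sigma^2}{\lambda_tn_t}B_t(A_t-A_t^2)B_t^\top,
\end{align*}
and the variance formula then follows from (\ref{4}) and the cyclicity of the trace, again using that each $A_t$ is symmetric so $B_t^\top=A_{t+1}\cdots A_T$. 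The only non-routine step is the telescoping identity for the bias; once that is spotted, everything else is bookkeeping with symmetry and independence.
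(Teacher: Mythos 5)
Your proposal is correct and follows essentially the same route as the paper's proof: expand $\hat\beta_T$ in terms of the per-task ridge estimators, compute $\mathbb{E}(\hat\beta_t^{\rm ridge}\mid X)=(I_p-A_t)\beta$ and $\mathrm{Cov}(\hat\beta_t^{\rm ridge}\mid X)=\frac{\sigma^2}{\lambda_t n_t}(A_t-A_t^2)$, use the telescoping of $B_t(I_p-A_t)=B_t-B_{t-1}$ for the bias, and cross-task conditional independence for the variance. The paper just writes out the telescoping sum term by term rather than introducing the shorthand $B_t$; otherwise the two arguments are identical.
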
 
The details of calculations above can be found in Appendix \ref{58}. 

\section{Asymptotics of Prediction Risk}
\label{61}
In this section, we investigate the asymptotic behaviors of the prediction risk for continual ridge estimators. The computation of high-dimensional asymptotic risk is intimately connected to the limiting spectral properties of sample covariance matrices. Random matrix theory (RMT) is a powerful tool to characterize these properties \citep{book1, Yao_Zheng_Bai_2015}. To establish our theory, we begin by reviewing key concepts from RMT.

Let $A \in \mathbb{R}^{p \times p}$ be a symmetric matrix, then the empirical spectral distribution (ESD) of $A$ is defined as $F_A(x)=p^{-1}\sum_{i=1}^p \textbf{1}\{\lambda_i(A) \le x\}$. For any distribution $F$ supported on $[0, \infty)$, its Stieltjes transform is defined as 
\begin{align*}
    m_F(z)=\int \frac{1}{x-z}dF(x) , z \in \mathbb{C} \backslash [0, \infty).
\end{align*}
If $A$ is positive semi-definite, the Stieltjes transform of $F_A$ is given by 
\begin{align*}
     m_A(z)=\int \frac{1}{x-z}dF_A(x)=\frac{1}{p}{\rm Tr}(A-zI_p)^{-1}, z \in \mathbb{C} \backslash [0, \infty).
\end{align*}
Assume now that the data matrix is generated as $X=Z\Sigma^{1/2}$, where $\Sigma\in \mathbb{R}^{p \times p}$ is a deterministic positive semidefinite matrix, and $Z \in \mathbb{R}^{n \times p}$ has i.i.d. entries with zero mean and unit variance. The well-known Mar{\v{c}}enko-Pastur theorem \citep{Marenko1967DISTRIBUTIONOE, SILVERSTEIN1995175} states that, if $p,n \rightarrow \infty$ such that $p/n \rightarrow \gamma \in (0, \infty)$, and $F_\Sigma$ converges weakly to some limit distribution $H$, then the ESD of sample covariance matrix $\hat{\Sigma}=\frac{1}{n}X^\top X$ converges weakly to a limiting distribution $F_{H,\gamma}$. The  characterization of 
$F_{H,\gamma}$ relies on its Stieltjes transform.
Consider the corresponding matrix $\tilde{\Sigma}=\frac{1}{n}XX^\top $. By the definition of Stieltjes transform, $m_{\hat\Sigma}(z)$ and $m_{\tilde\Sigma}(z)$ are linked by
\begin{align*}
  m_{\hat\Sigma}(z)=\frac{1}{\gamma} {m}_{\tilde\Sigma}(z)+\frac{1-\gamma}{\gamma z}.
\end{align*}
Note that, the Mar{\v{c}}enko-Pastur theorem also ensures that $F_{\tilde{\Sigma}}$ has a limiting distribution, denoted as $\tilde{F}_{H,\gamma}$. Let $m_{H,\gamma}(z)$ and $\tilde{m}_{H,\gamma}(z)$ be the Stieltjes transform of $F_{H,\gamma}$ and $\tilde{F}_{H,\gamma}$ respectively, then they are liked by
\begin{align}
  {m}_{H,\gamma}(z)=\frac{1}{\gamma} \tilde{m}_{H,\gamma}(z)+\frac{1-\gamma}{\gamma z}. 
  \label{39}
\end{align}
From the Mar\v{c}enko-Pastur Theorem, $\tilde{m}_{H,\gamma}(z)$ is given by the unique solution of  
\begin{align}
\tilde{m}_{H,\gamma}(z)=\Big( -z+\gamma\int \frac{t}{1+\tilde{m}_{H,\gamma}(z)t}dH(t) \Big)^{-1}, (z,\tilde{m}_{H,\gamma}(z))\in \mathbb{C}^+\times\mathbb{C}^+.
\label{38}
\end{align}
In the special case where $\Sigma = I_p$ for all $n,p$, the  Stieltjes transform of limiting distribution $m_\gamma(z)$ has an explicit form
\begin{equation}
     m_\gamma(z)=\frac{-(1-\gamma-z)+\sqrt{(1-\gamma-z)^2-4\gamma z}}{-2\gamma z}.
\label{6}
\end{equation}

\begin{remark}
The Mar\v{c}enko-Pastur Theorem only provides the expression of Stieltjes transform $m_{H,\gamma}(z)$ on $z \in \mathbb{C}^+$. To complete the definition of Stieltjes transform $m_{H,\gamma}(z)$ on $\mathbb{C} \backslash [0, \infty)$, we need to notice that (\ref{39}) and (\ref{38}) can be extended to $z \in \mathbb{C} \backslash [0, \infty)$ by continuity. The restriction $z \in \mathbb{C}^+$ makes sure that the equation (\ref{38}) has unique solution (see, e.g., \cite{Yao_Zheng_Bai_2015}).
\end{remark}

\subsection{The Main Theorem}
To ensure the existence of the limiting risk in general settings, some technical assumptions are needed.
\begin{assumption}
In each dataset $\mathcal{D}_t$, the sample matrix $X_t=Z_t\Sigma_t^{1/2}$, where $Z_t$ has i.i.d. entries with zero mean, unit variance, and finite $16$-th moment. Besides, the test data $x_0=\Sigma_0^{1/2}z_0$, and $z_0$ has zero mean and unit variance.  
\label{56}
\end{assumption}

\begin{assumption}
The task number $T$ is a constant. In each dataset $\mathcal{D}_t$, $n_t,p \rightarrow \infty$ and $p/n_t \rightarrow \gamma_t \in (0, \infty)$.
\label{57}
\end{assumption}

\begin{assumption}
The signal $\Vert \beta \Vert^2 \rightarrow r^2 \in (0, \infty)$,  as $p \rightarrow \infty$. 
\label{48}
\end{assumption}
Assumption \ref{56} describes the data generation setting,  Assumption \ref{57} defines the asymptotic regime in each dataset, and Assumption \ref{48} fixes the asymptotic scale of regression coefficients. These assumptions are classical in analysis of high-dimensional regression models. The last two assumptions establish the relationship between different tasks.

\begin{assumption}
The covariance matrices $\{\Sigma_t,t=1,\cdots,T\}$ and $\Sigma_0$ are commutable, and there exists two constants $c<C$ such that $0<c\le \lambda_{min}(\Sigma_t) \le \lambda_{max}(\Sigma_t) \le C < \infty$ for $t=1,\cdots,T$ and $t=0$.
\label{51}
\end{assumption}
The commutability of covariance matrices in Assumption \ref{51} ensures that, there exists an orthogonal matrix $U \in \mathbb{R}^{p\times p}$ such that $\Sigma_t=UD_t U^\top$ for $t=1,\cdots,T$ and $\Sigma_0=UD_0 U^\top$, where $\{D_t\}_{t=1,\cdots,T}$ and $D_0$ are diagonal matrices. Let $\Sigma_t=\sum_{i=1}^p d_{t,i}u_iu_i^\top$ be the simultaneous eigenvalue decomposition of $\Sigma_t$ for $t=1,\cdots,T$ and $t=0$, then we define the joint ESD as 
\begin{align}
 H_n(x_1, \cdots, x_T, x_0)=\frac{1}{p}\sum_{i=1}^p 
 \textbf{1}\{d_{1,i}\le x_1,\cdots,d_{T,i}\le x_T,d_{0,i}\le x_0\},
 \label{31}
\end{align}
and a weighted joint ESD as
\begin{align}
 G_n(x_1, \cdots, x_T, x_0)=\frac{1}{\Vert\beta\Vert^2}\sum_{i=1}^p \langle\beta,u_i\rangle^2\textbf{1}\{d_{1,i}\le x_1,\cdots,d_{T,i}\le x_T,d_{0,i}\le x_0\}.
 \label{36}
\end{align}
Here we note that, although $\{\Sigma_t,t=1,\cdots,T\}$ and $\Sigma_0$ can be diagonalized simultaneously by different orthogonal matrix, the values of joint ESD (\ref{31}) and weighted joint ESD (\ref{36}) are unique regardless of the choice of orthogonal matrix $U$. 

The next assumption describes the relationship between covariance matrices and the true coefficients in the limiting form.
\begin{assumption}
The joint distributions $H_n$ and $G_n$ converge weakly to  limit joint distributions $H$ and $G$, respectively.
\label{52}
\end{assumption}
By Assumption \ref{52}, the marginal ESDs $F_{\Sigma_t}$ and $F_{\Sigma_0}$ converges weakly to $H_t$ and $H_0$, respectively, where $H_t$ and $H_0$ are the marginal distributions of $H$ with respect to $x_t$ and $x_0$.

For $t=1,\cdots,T$, define $\tilde{m}_t=\tilde{m}_{H_t,\gamma_t}(-\lambda_t)$, and 
\begin{align*}
\mu_t&=\Big[\Big(1+\int \frac{\gamma_t s}{\lambda_t(1+\tilde{m}_ts)}dH_t(s)\Big)^2-\int \frac{\gamma_t s^2}{\lambda_t^2(1+\tilde{m}_ts)^2}dH_t(s)\Big]^{-1},\\
a_t&=\int \frac{s_ts_0}{\prod_{j=t}^T\lambda_j^2(1+\tilde{m}_js_j)^2}dH(\textbf{s}), \quad 
b_{t}=\int \frac{\lambda_t(1+\tilde{m}_ts_t)\cdot s_0}{\prod_{j=t}^T\lambda_j^2(1+\tilde{m}_js_j)^2}dH(\textbf{s}),\\
c_{t}&=\int \frac{s_0}{\prod_{j=t}^T\lambda_j^2(1+\tilde{m}_js_j)^2}dH(\textbf{s}), \quad
g_{t}=\int \frac{s_0}{\prod_{j=t}^T\lambda_j^2(1+\tilde{m}_js_j)^2}dG(\textbf{s}).
\end{align*}
where $\textbf{s}=(s_1,\cdots,s_T,s_0)$. For $1 \le t \le l \le T$, define
\begin{align*}
a_{t,l}=\int \frac{s_ts_l}{\prod_{j=t}^l\lambda_j^2(1+\tilde{m}_js_j)^2}dH(\textbf{s}),\quad
b_{t,l}=\int \frac{\lambda_t(1+\tilde{m}_ts_t)\cdot s_l}{\prod_{j=t}^l\lambda_j^2(1+\tilde{m}_js_j)^2}dH(\textbf{s}),\\
c_{t,l}=\int \frac{s_l}{\prod_{j=t}^l\lambda_j^2(1+\tilde{m}_js_j)^2}dH(\textbf{s}),\quad
g_{t,l}=\int \frac{s_l}{\prod_{j=t}^l\lambda_j^2(1+\tilde{m}_js_j)^2}dG(\textbf{s}).
\end{align*}
Next, we define $\{\rho_t,1\le t\le T\}$ recursively as $\rho_T=a_T$, and
\begin{align*}
\rho_{t}=a_t+\sum_{j=t+1}^T\gamma_j\mu_ja_{t,j}\rho_{j}, \quad 1\le t\le T-1.
\end{align*}
Define $\{\rho_{s,t}^{(1)},1\le s \le t\le T\}$ as $\rho_{s,s}^{(1)}=0$ and 
\begin{align*}
\rho_{s,t}^{(1)}=b_{s,t}+\sum_{j=s}^{t-1}\gamma_j\mu_ja_{j,t}\rho_{s,j}^{(1)}, \quad 1\le s< t\le T.
\end{align*}
Define $\{\rho_{s,t}^{(2)},1\le s  \le t\le T\}$ as $\rho_{s,s}^{(2)}=c_{s,s}$ and 
\begin{align*}
\rho_{s,t}^{(2)}=c_{s,t}+\sum_{j=s}^{t-1}\gamma_j\mu_ja_{j,t}\rho_{s,j}^{(2)}, \quad 1\le s< t\le T.
\end{align*}

We are now ready to state our main theorem on the asymptotic prediction risk of continual ridge estimators in general settings.
\begin{theorem}
Under Assumption \ref{56}-\ref{52}, it holds almost surely that
\begin{align*}
   B_X(\hat{\beta}_T; \beta,\Sigma_0) &\rightarrow  \tilde{B}_T(r, \bm{\gamma}, \bm{\lambda},G,H):=r^2 \Big(\prod_{j=1}^T\lambda_j^2\Big)\Big(g_1+ 
   \sum_{t=1}^T\gamma_t\mu_t\rho_tg_{1,t}\Big),\\
   V_X(\hat{\beta}_T; \beta,\Sigma_0) &\rightarrow \tilde{V}_T(\bm{\gamma}, \bm{\lambda},H):=\sigma^2
\sum_{t=1}^{T}\gamma_{t}\big(\prod_{s=t+1}^T\lambda_s^2\big)(L_{1,t}-\lambda_tL_{2,t}),
\end{align*}
and 
\begin{align*}
   R_X(\hat{\beta}_T; \beta,\Sigma_0) \rightarrow \tilde{R}_T(r, \bm{\gamma}, \bm{\lambda},G,H ):=\tilde{B}_T(r, \bm{\gamma}, \bm{\lambda},G,H)+\tilde{V}_T(\bm{\gamma}, \bm{\lambda},H), 
\end{align*}
where
\begin{align*}
L_{1,t}=b_t+\sum_{j=t}^T\gamma_j\mu_ja_{j}\rho_{t,j}^{(1)},
\quad L_{2,t}=c_t+\sum_{j=t}^T\gamma_j\mu_ja_{j}\rho_{t,j}^{(2)}.
\end{align*}
\label{10}
\end{theorem}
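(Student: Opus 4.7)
The plan is to start from the bias-variance decomposition in Lemma \ref{46} and compute the almost-sure limit of each trace using random matrix theory. The structural fact that drives everything is that $X_1,\ldots,X_T$ are independent, so the resolvents $A_1,\ldots,A_T$ are mutually independent; this permits peeling the product $A_T\cdots A_1$ apart one resolvent at a time and invoking deterministic equivalents on each factor while treating the remaining factors as deterministic matrices of bounded operator norm (each $\Vert A_t\Vert\le 1$).

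I would first diagonalize $\Sigma_0,\Sigma_1,\ldots,\Sigma_T$ in the shared basis $\{u_i\}$ guaranteed by Assumption \ref{51}, so that the joint ESD (\ref{31}) and its $\beta$-weighted version (\ref{36}) become honest spectral measures on $\mathbb{R}^{T+1}$. The main analytic input is the Marchenko--Pastur--Silverstein deterministic-equivalent theory: for any deterministic $B_p$ of bounded operator norm that is diagonal in $\{u_i\}$ and independent of $X_t$, one has, with $b(\cdot)$ denoting its spectral density,
\begin{equation*}
\frac{1}{p}\text{Tr}(B_p A_t)\longrightarrow \int \frac{b(s_t)}{1+\tilde m_t s_t}\,dH_t(s_t)\quad\text{a.s.},
\end{equation*}
and, for the bilinear counterpart $\frac{1}{p}\text{Tr}(B_p A_t C_p A_t)$ in which the two $A_t$'s are correlated through the same $X_t$, an extra correction appears whose magnitude is exactly $\gamma_t\mu_t$; this can be verified by differentiating the Silverstein fixed-point equation (\ref{38}) in $\lambda_t$ and matching the resulting combination with the displayed definition of $\mu_t$.

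With these tools in hand I would handle the bias first. Conditioning on $X_1,\ldots,X_{T-1}$ and applying the bilinear deterministic equivalent to the outermost $A_T(\cdot)A_T$ sandwich collapses it into a matrix diagonal in $\{u_i\}$ with a leading piece corresponding to $a_T$ and a $\gamma_T\mu_T$-weighted self-correlation piece. Iterating the peel-off for $t=T-1,\ldots,1$ exactly reproduces the recursion for $\rho_t$: the leading $a_t$ comes from the main term at each step, while the sum $\sum_{j>t}\gamma_j\mu_j a_{t,j}\rho_j$ collects the self-correlation corrections accumulated in the earlier peel-offs. Integrating the resulting diagonal matrix against the spectral weight $dG$ of $\beta$ yields the $g_1$ and $g_{1,t}$ integrals, while the prefactor $r^2\prod_j\lambda_j^2$ is the combined normalization (each $A_t$ contributes a $\lambda_t$ on each side). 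The variance is identical in spirit: each summand of Lemma \ref{46} has a single $(A_t-A_t^2)$ insertion which splits into an $L_{1,t}$ part coming from $A_t$ (leading to the $b$-type integrals and the auxiliary recursion $\rho^{(1)}_{s,t}$) and a $-\lambda_t L_{2,t}$ part coming from $A_t^2$ (leading to the $c$-type integrals and $\rho^{(2)}_{s,t}$); the recursions for $\rho^{(1)},\rho^{(2)}$ collect the $\mu_j$ corrections produced while peeling off $A_{t+1},\ldots,A_T$.

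The main obstacle I expect is not the formal derivation but the rigorous justification of the peel-off: after conditioning on $\{X_s\}_{s\ne t}$, the inner matrix is only approximately deterministic, so one needs quantitative anisotropic or bilinear local laws whose error terms can be summed over the $O(T^2)$ many substitutions without blowing up, and this is where the finite $16$-th moment in Assumption \ref{56} becomes essential. A secondary subtlety is ordering the recursions for $\rho$, $\rho^{(1)}$, $\rho^{(2)}$ correctly so that at each inductive step the operator-norm bound on the current inner matrix remains uniform in $p$; if done naively, the recursions tangle and the closed forms no longer match the clean expressions in the theorem.
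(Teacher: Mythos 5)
Your plan coincides in essentials with the paper's proof: it starts from the bias--variance split of Lemma~\ref{46}, exploits independence of $X_1,\ldots,X_T$ to peel resolvents one at a time, and applies a bilinear deterministic equivalent for $QAQ$ (the paper's Theorem~\ref{25}) at each step, with the $\gamma_t\mu_t$ correction term generating the recursions for $\rho_t$, $\rho^{(1)}_{s,t}$, $\rho^{(2)}_{s,t}$. The only structural difference is orientation: you peel the innermost sandwich $A_T\Sigma_0 A_T$ first and work outward toward $A_1$, whereas the paper writes $B_t=Q_tB_{t+1}Q_t$ and peels the outermost resolvent first, treating the entire inner product $B_{t+1}$ as the (random, independent, bounded-norm) matrix $A$ in Theorem~\ref{25} and tracking the resulting stochastic traces $\tilde{\rho}_s$ whose limits are then pinned down by the same recursion. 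Both orderings yield identical formulas.

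Where you go astray is in diagnosing the main obstacle. There is no need for quantitative anisotropic or bilinear local laws with error terms summed over $O(T^2)$ substitutions: Assumption~\ref{57} fixes the task number $T$ independently of $(n,p)$, so the argument is a \emph{finite} composition of almost-sure convergences and nothing accumulates. The technical investment that actually makes the proof go through is establishing Theorem~\ref{25} once, in a form that admits a \emph{random} inner matrix $A$ independent of $Q$ and of bounded operator norm; since $\Vert A_t\Vert_{op}\le 1$ and $\Vert\Sigma_0\Vert_{op}\le C$ uniformly by Assumption~\ref{51}, every inner block automatically qualifies, so each peel-off is a single legal application with no error terms to stitch together. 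The finite $16$-th moment is consumed entirely inside the proof of Theorem~\ref{25} --- in the martingale-difference concentration and in bounding $\mathbb{E}\Vert D\Vert^2$ --- not in controlling accumulated peel-off errors. A genuine consequence of your inside-out order (which the paper's order sidesteps) is that the test vector $A_{t-1}\cdots A_1\beta$ at each stage is random, so you would need an extra conditioning/Fubini argument to apply the deterministic-equivalent statement; the paper's outside-in order keeps $\beta$ as the deterministic test vector throughout and puts all the randomness in the $A$-slot of Theorem~\ref{25}, where it is explicitly permitted.
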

The proof of Theorem \ref{10} is provided in Appendix \ref{43}. In Theorem \ref{10}, the regularization parameters $\bm{\lambda}=(\lambda_1\cdots,\lambda_T)^\top$ are treated as constants independent on $n$ and $p$. Note that the asymptotic bias term is related to the coefficients $\beta$ and the asymptotic variance term is related to the variance of noise $\sigma^2$. As $n \rightarrow \infty$, the variance of noise is assumed to remain constant, while the dimension of regression coefficients is growing. The variation of $\beta$ in this limiting process is characterized by the signal strength $r^2$ and the joint distribution $G$ through $g_1$ and $g_{1,t}$. In addition, $\bm{\gamma}=(\gamma_1,\cdots,\gamma_T)^\top$ is intuitively interpreted as model complexity parameters, and the joint distribution $H$ characterizes the relationship of covariance matrices in different tasks. These factors influence both the asymptotic bias term and the asymptotic variance term.

\subsection{Examples}
\label{54}
To get a better understanding of our main theorem, we focus on some concrete examples by considering different structures of covariance matrices in each task. Note that Assumption \ref{51} guarantees that $\Sigma_1, \cdots, \Sigma_T$ and $\Sigma_0$ are simultaneously diagonalizable.
Suppose $\Sigma_t=UD_t U^\top$ for $t=1,\cdots,T$ and $\Sigma_0=UD_0 U^\top$, where $\{D_t\}_{t=1,\cdots,T}$ and $D_0$ are diagonal matrices, and $U$ is an orthogonal matrix. Let $\mathring{Z}_t=Z_tU$, by rotational invariance of $Z_t$, the entries of $\mathring{Z}_t$ and $Z_t$ have the same joint distribution, and $\mathring{Z}_1, \cdots, \mathring{Z}_T$ are independent. If we define $\mathring{X}_t=\mathring{Z}_tD_t^{1/2}, \mathring{\Sigma}_t=\frac{1}{n_t}\mathring{X}_t^\top \mathring{X}_t$, and $ \mathring{A}_t=\lambda_t(\mathring{\Sigma}_t+\lambda_tI_p)^{-1}$, we have
\begin{align*}
\hat{\Sigma}_t=\frac{1}{n_t}UD_t^{1/2}(Z_tU)^\top (Z_tU)D_t^{1/2}U^\top{=}U\mathring{\Sigma}_tU^\top,
\end{align*}
and $A_t{=}U^\top \mathring{A}_tU$. Thus by Lemma \ref{46},
\begin{align*}
B_X(\hat{\beta}; \beta,\Sigma_0)&=(U\beta)^\top \mathring{A}_1\mathring{A}_2\cdots \mathring{A}_TD_0 \mathring{A}_T\cdots \mathring{A}_2\mathring{A}_1(U\beta),\\
V_X(\hat{\beta}; \beta,\Sigma_0)&=\sigma^2\sum_{t=1}^{T} \frac{1}{\lambda_tn_t}{\rm Tr}\big[\mathring{A}_T\mathring{A}_{T-1}\cdots \mathring{A}_{t+1}(\mathring{A}_t-\mathring{A}_t^2)\mathring{A}_{t+1}\cdots \mathring{A}_{T-1}\mathring{A}_TD_0\big].
\end{align*}
From the above derivation, we observe that the prediction risk can be expressed in terms of
diagonal covariance matrices and the transformed coefficients $\mathring{\beta}=U\beta$. Therefore, in the following cases, we assume that all the covariance matrices are diagonal matrices.

\subsubsection{Identity Covariance Matrices}
We first consider a trivial case where  $\Sigma_1=\cdots=\Sigma_T=\Sigma_0 = I_p$. In scenarios where task distributions are identical, the continual learning framework effectively reduces to an online learning framework, as no task-specific adaptation or catastrophic forgetting mitigation is required. Nevertheless, it remains important to investigate the properties of continual ridge estimation in this setting, since this algorithm is closely related to online learning methods. Taking the derivative in the updating rule (\ref{59}), we have
\begin{align*}
\hat{\beta}_t=\hat{\beta}_{t-1}+\lambda^{-1}n_t^{-1}X_t^\top(X_t\hat\beta_t-y_t).
\end{align*}
To enhance the updating efficiency, one may substitute $\hat{\beta}_{t-1}$ for $\hat{\beta}_{t}$ in the right-hand side of the equation, yielding
\begin{align*}
\hat{\beta}_t=\hat{\beta}_{t-1}-\eta n_t^{-1}X_t^\top(y_t-X_t\hat\beta_{t-1}),
\end{align*}
which is the updating rule for online (batch) gradient descent method with the learning rate $\eta=\lambda^{-1}$ (see, e.g., \cite{HOI2021249}). Assuming computational efficiency is not a constraint, the updating rule of continual ridge regression seems to be a reasonable choice for online learning.

The following result shows the asymptotic risk in this setting, which has been largely simplified compared to the main theorem.
\begin{theorem}
If $\Sigma_1=\cdots=\Sigma_T=\Sigma_0=I_p$, under Assumption \ref{56}-\ref{48}, it holds almost surely that
\begin{align*}
   B_X(\hat{\beta}_T; \beta, \Sigma_0) &\rightarrow  \tilde{B}_T(r, \bm{\gamma}, \bm{\lambda}):=r^2 \prod_{t=1}^T\big[\lambda_t^2m_{\gamma_t}'(-\lambda_t)\big],\\
   V_X(\hat{\beta}_T; \beta,\Sigma_0) &\rightarrow \tilde{V}_T(\bm{\gamma}, \bm{\lambda}):=
\sigma^2   \sum_{t=1}^{T}\gamma_{t}\upsilon_{t}\prod_{s=t+1}^T\big[\lambda_s^2m_{\gamma_s}'(-\lambda_s)\big],
\end{align*}
and 
\begin{align*}
   R_X(\hat{\beta}_T; \beta,\Sigma_0) \rightarrow \tilde{R}_T( r,\bm{\gamma}, \bm{\lambda}):=\tilde{B}_T(r, \bm{\gamma}, \bm{\lambda})+\tilde{V}_T(\bm{\gamma}, \bm{\lambda}), 
\end{align*}
where
\begin{align*}
 \upsilon_t=m_{\gamma_t}(-\lambda_t)-\gamma_tm_{\gamma_t}'(-\lambda_t),
\end{align*}
and $m_{\gamma_t}$ has an explicit form as (\ref{6}).
\label{5}
\end{theorem}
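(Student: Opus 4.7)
The plan is to derive Theorem \ref{5} as a specialization of Theorem \ref{10}. When $\Sigma_1=\cdots=\Sigma_T=\Sigma_0=I_p$, every deterministic eigenvalue $d_{t,i}$ equals $1$, so both the joint ESD $H_n$ and the weighted joint ESD $G_n$ converge to the Dirac mass at $(1,\ldots,1)\in\mathbb{R}^{T+1}$ under Assumptions \ref{56}--\ref{48}. Every integral in Theorem \ref{10} then collapses to a product. Writing $\alpha_t := [\lambda_t^2(1+\tilde m_t)^2]^{-1}$ and $\nu_t := \gamma_t\mu_t$, one obtains $a_{s,t}=c_{s,t}=g_{s,t}=\prod_{j=s}^{t}\alpha_j$ and $b_{s,t}=[\lambda_s(1+\tilde m_s)]^{-1}\prod_{j=s+1}^{t}\alpha_j$. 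In particular $a_t=c_t=g_t$ for the $H$- and $G$-integrals appearing in the theorem.

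The next step is to turn the Marchenko--Pastur fixed-point equation $\tilde m_t=(\lambda_t+\gamma_t/(1+\tilde m_t))^{-1}$ into the algebraic identities
\begin{align*}
\mu_t = \lambda_t^2\tilde m_{\gamma_t}'(-\lambda_t),\qquad \frac{1+\nu_t\alpha_t}{(1+\tilde m_t)^2} = \lambda_t^2 m_{\gamma_t}'(-\lambda_t) = \frac{1}{(1+\tilde m_t)(1+\lambda_t\tilde m_t^2)}.
\end{align*}
The first follows by substituting $s_t=1$ into the definition of $\mu_t$ and eliminating $\gamma_t/(1+\tilde m_t)=(1-\lambda_t\tilde m_t)/\tilde m_t$; the second follows by implicitly differentiating the fixed-point equation in $\lambda_t$ and using $m_{\gamma_t}(z)=\tilde m_{\gamma_t}(z)/\gamma_t+(1-\gamma_t)/(\gamma_t z)$. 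Together they provide the algebraic bridge from the $\tilde m_t$-variables of Theorem \ref{10} to the $m_{\gamma_t}'(-\lambda_t)$-variables of Theorem \ref{5}.

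With these identities in hand, I would solve the three recursions in closed form. Dividing $\rho_t$ by $a_t$ turns its recursion into $\psi_t = 1+\sum_{j=t+1}^{T}\nu_j\alpha_j\psi_j$, which telescopes to $\psi_t=\prod_{k=t+1}^{T}(1+\nu_k\alpha_k)$; the same substitution yields $\rho_{s,t}^{(2)}=c_{s,t}\prod_{k=s}^{t-1}(1+\nu_k\alpha_k)$ and $\rho_{s,t}^{(1)}=b_{s,t}\prod_{k=s+1}^{t-1}(1+\nu_k\alpha_k)$. Plugging these into $\tilde B_T$ in Theorem \ref{10} and invoking the telescoping identity $1+\sum_{t=1}^{T}\nu_t\alpha_t\prod_{k=t+1}^{T}(1+\nu_k\alpha_k)=\prod_{k=1}^{T}(1+\nu_k\alpha_k)$ collapses the bias to $r^2\prod_{t=1}^{T}(1+\nu_t\alpha_t)/(1+\tilde m_t)^2$, which equals $r^2\prod_{t=1}^{T}\lambda_t^2 m_{\gamma_t}'(-\lambda_t)$ by the per-index identity above.

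The variance is handled the same way. The telescoping identity applied to $L_{1,t}$ and $L_{2,t}$ gives $L_{1,t}=[\lambda_t(1+\tilde m_t)]^{-1}\prod_{k=t+1}^{T}\alpha_k(1+\nu_k\alpha_k)$ and $L_{2,t}=\prod_{k=t}^{T}\alpha_k(1+\nu_k\alpha_k)$. Forming $L_{1,t}-\lambda_t L_{2,t}$, factoring out $\prod_{s=t+1}^{T}\alpha_s(1+\nu_s\alpha_s)$, and multiplying by $\prod_{s=t+1}^{T}\lambda_s^2$ converts that product into $\prod_{s=t+1}^{T}\lambda_s^2 m_{\gamma_s}'(-\lambda_s)$ via the per-index identity, while the leftover scalar prefactor $(\tilde m_t-\nu_t\alpha_t)/[\lambda_t(1+\tilde m_t)^2]$ simplifies, after substituting $\gamma_t=(1+\tilde m_t)(1-\lambda_t\tilde m_t)/\tilde m_t$ and the explicit form of $\tilde m_t'$, to $\upsilon_t=m_{\gamma_t}(-\lambda_t)-\gamma_t m_{\gamma_t}'(-\lambda_t)$. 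The main obstacle is establishing the two per-index Marchenko--Pastur identities and the closed-form for $\upsilon_t$; once these algebraic bridges are in place, the telescoping reductions and the rest of the proof are routine.
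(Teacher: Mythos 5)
Your route is genuinely different from the paper's. The paper proves Theorem~\ref{5} directly: it iterates Corollary~\ref{9} (the $\Sigma=I_p$ case of Theorem~\ref{25}) on the chains $B_t=Q_t\cdots Q_T Q_T\cdots Q_t$, $C_{s,t}$, $D_{s,t}$, keeping $m_{\gamma_t}$ as the central variable throughout and using the one-line identity $m_{\gamma_t}^2(-\lambda_t)(1+\gamma_t\mu_t)=m_{\gamma_t}'(-\lambda_t)$ to collapse the iteration. You instead specialize the general Theorem~\ref{10} (so both $H$ and $G$ become the Dirac mass at $(1,\dots,1)$), solve the three recursions for $\rho_t$, $\rho_{s,t}^{(1)}$, $\rho_{s,t}^{(2)}$ in closed form by telescoping, and then translate from the $\tilde m_t$-language of Theorem~\ref{10} to the $m_{\gamma_t}$-language of Theorem~\ref{5} via the Marchenko--Pastur companion relation. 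Your two bridge identities $\mu_t=\lambda_t^2\tilde m_{\gamma_t}'(-\lambda_t)$ and $(1+\nu_t\alpha_t)/(1+\tilde m_t)^2=\lambda_t^2 m_{\gamma_t}'(-\lambda_t)$ are both correct, and I checked the closed forms for $\psi_t$, $\rho_{s,t}^{(1)}$, $\rho_{s,t}^{(2)}$, $L_{1,t}$, $L_{2,t}$: they do collapse as you claim. What your approach buys is an explicit consistency check that Theorem~\ref{10} really does contain Theorem~\ref{5}, at the cost of substantially more algebra; the paper's direct route is shorter because, for identity covariance, every deterministic equivalent is a scalar multiple of $I_p$ so the chain collapses without a change of variables.

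One concrete error in the final step of the variance part, worth flagging because it also exposes an inconsistency in the paper's own statement. Your leftover scalar prefactor $(\tilde m_t-\nu_t\alpha_t)/[\lambda_t(1+\tilde m_t)^2]$ does \emph{not} simplify to $m_{\gamma_t}(-\lambda_t)-\gamma_t m_{\gamma_t}'(-\lambda_t)$. Using $m_\gamma(z)=\tilde m_\gamma(z)/\gamma+(1-\gamma)/(\gamma z)$ and $\nu_t\alpha_t=\gamma_t\tilde m_t'/(1+\tilde m_t)^2$, one finds that it equals
$m_{\gamma_t}(-\lambda_t)-\lambda_t m_{\gamma_t}'(-\lambda_t)$,
i.e.\ with $\lambda_t$ in place of $\gamma_t$. (Sanity check at $\gamma=2,\lambda=1$: $\tilde m\approx0.414$, $\tilde m'\approx0.207$, $m\approx0.707$, $m'\approx0.603$; the prefactor is $\approx0.103=m-\lambda m'$, whereas $m-\gamma m'\approx-0.5$, which would make the variance negative.) The paper's own proof of Theorem~\ref{5}, via equations~(\ref{14}) and~(\ref{15}), in fact produces $m_{\gamma_t}-\lambda_t m_{\gamma_t}'$ as well; the $\gamma_t$ in the displayed definition of $\upsilon_t$ in the statement of Theorem~\ref{5} (and in the subsequent remark) appears to be a typo, which your write-up inherits. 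So your derivation is essentially right, but the final line should read $\upsilon_t=m_{\gamma_t}(-\lambda_t)-\lambda_t m_{\gamma_t}'(-\lambda_t)$, and you should not assert that the prefactor reduces to the form written in the theorem statement without noticing that the two only coincide when $\gamma_t=\lambda_t$.
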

The proof of Theorem \ref{5} is provided in Appendix \ref{24}. The main difference between Theorem \ref{5} and Theorem \ref{10} is that, the asymptotic prediction risk of continual ridge estimation depends on $\beta$ only through the limiting signal strength $r^2$ in the bias term. 

\begin{remark}
The asymptotic result of continual ridge regression coincides with that of classical high-dimensional ridge regression. When $T=1$, the continual learning framework reduces to the ridge regularization. To describe the impact of the true parameter on the asymptotic risk, we define the limiting signal-to-noise ratio by $\text{SNR}=r^2/\sigma^2$. According to our analysis, the asymptotic prediction risk is given by
\begin{align*}
  \tilde{R}(r, \gamma, \lambda)=\sigma^2\Big\{\lambda^2m_{\gamma}'(-\lambda) \text{SNR}+ \gamma[m_{\gamma}(-\lambda)-\gamma m_{\gamma}'(-\lambda)]\Big\},
\end{align*}
which is consistent with existing results. (see, e.g., \cite{10.1214/17-AOS1549}, \cite{10.1214/21-AOS2133}).
\end{remark}

\subsubsection{Isotropic Covariance Matrices with Different Scales}
If the covariance matrices are different in each task, it may be hard to find a simplified consequence of limiting risk. Moreover, even the numeral calculation of the results in Theorem \ref{10} is not a trivial task, unless the covariance matrices have simple structures. A natural idea is to assume that all the covariance matrices are isotropic, and the difference of tasks embody in the variation of scales. Specifically, we assume that $\Sigma_t=\delta_tI_p, t=1,\cdots,T$ and $\Sigma_0=\delta_0I_p$, where $\{\delta_t\}_{t=1,\cdots,T}$ and $\delta_0$ are different positive constants. In this setting, we have an explicit form of Stieltjes transform
\begin{align*}
     m_t=\frac{-[\delta_t(1-\gamma_t)+\lambda_t]+\sqrt{[\delta_t(1-\gamma_t)+\lambda_t]^2+4\delta_t\gamma_t \lambda_t}}{2\gamma_t \delta_t \lambda_t},
\end{align*}
and $ \tilde{m}_t=m_t\gamma_t+\frac{1-\gamma_t}{\lambda_t}$.
Furthermore, the elements of the expressions in Theorem \ref{10} can be simplified as 
\begin{align*}
 \mu_t&=\Big\{\big[1+\frac{\gamma_t\delta_t}{\lambda_t(1+\tilde{m}_t\delta_t)}\big]^2-\frac{\gamma_t\delta_t^2}{\lambda_t^2(1+\tilde{m}_t\delta_t)^2}\Big\}^{-1},\\
a_t&=\frac{\delta_t\delta_0}{\prod_{j=t}^T\lambda_j^2(1+\tilde{m}_j\delta_j)^2},\quad a_{tl}=\frac{\delta_t\delta_l}{\prod_{j=t}^l\lambda_j^2(1+\tilde{m}_j\delta_j)^2},\\
b_t&=\frac{\lambda_t(1+\tilde{m}_t\delta_t)\delta_0}{\prod_{j=t}^T\lambda_j^2(1+\tilde{m}_j\delta_j)^2},\quad b_{tl}=\frac{\lambda_t(1+\tilde{m}_t\delta_t)\delta_l}{\prod_{j=t}^l\lambda_j^2(1+\tilde{m}_j\delta_j)^2},\\
c_t&=\frac{\delta_0}{\prod_{j=t}^T\lambda_j^2(1+\tilde{m}_j\delta_j)^2},\quad c_{tl}=\frac{\delta_l}{\prod_{j=t}^l\lambda_j^2(1+\tilde{m}_j\delta_j)^2}.
\end{align*}
Since all covariance matrices are isotropic, any $\beta$ with $\Vert \beta \Vert^2=r^2$ satisfies Assumption \ref{48} and Assumption \ref{52}, and it holds that $g_1=c_1$ and $g_{1,t}=c_{1,t}$.

\subsubsection{Covariance Matrices with Different Block Sizes}
We finally focus on a fundamentally different setting where covariance matrices are not isotropic. For simplicity, we assume that each covariance matrix has two different eigenvalues, the first of which is $\delta$ and the second is 1. However, in different covariance matrices, the proportion of numbers of these two eigenvalues may be different. For $\Sigma_t \in \mathbb{R}^{p\times p}$, we suppose the first $\lfloor\pi_tp\rfloor$ eigenvalues are all $\delta$ $(\delta \not= 1)$ and the others are 1, where $\pi_t \in (0,1)$ is a constant. Namely, $\Sigma_t = \text{diag}\{\delta I_{\lfloor\pi_tp\rfloor},I_{p-\lfloor\pi_tp\rfloor}\}, t=1,\cdots,T$ and $t=0$. 

In this example, the joint ESD $H_n$ converges weakly to a limiting distribution $H$. Let $\{q_1,\cdots,q_{T+1}\} = \{1,\cdots, T+1\}$ such that $\pi_{q_t}=\pi_{(t)}$, where $\pi_{(1)}\le \pi_{(2)} \le \cdots \le \pi_{(T+1)}$ is the ordered sequence of $\pi_1, \cdots, \pi_T, \pi_0$. Then the limiting distribution $H$ satisfies
\begin{align}
  \int f(\textbf{s})dH(\textbf{s})=\sum_{t=0}^{T+1} (\pi_{(t+1)}-\pi_{(t)})f({s_t}),
  \label{49}
\end{align}
where $\pi_{(0)}=0, \pi_{(T+2)}=1$, and
\begin{align*}
  {s_t}=\delta(1,\cdots,1)^\top+(1-\delta)\sum_{k=1}^t{e}_{q_k} \in \mathbb{R}^{T+1}, t=0,1,\cdots,T+1.
\end{align*}
Here ${e}_{1}=(1,0,\cdots,0)^\top, \cdots, {e}_{T+1}=(0,0,\cdots,1)^\top$.

By (\ref{49}), we have
\begin{align*}
 \mu_t&=\Big\{\big[1+\frac{\pi_t\gamma_t\delta}{\lambda_t(1+\tilde{m}_t\delta)}+\frac{(1-\pi_t)\gamma_t}{\lambda_t(1+\tilde{m}_t)}\big]^2-\big[\frac{\pi_t\gamma_t\delta^2}{\lambda_t^2(1+\tilde{m}_t\delta)^2}+\frac{(1-\pi_t)\gamma_t}{\lambda_t^2(1+\tilde{m}_t)^2}\big]\Big\}^{-1},\\
a_t&=\sum_{k=0}^{T+1} (\pi_{(k+1)}-\pi_{(k)})\frac{s_{k,t}s_{k,T+1}}{\prod_{j=t}^T\lambda_j^2(1+\tilde{m}_js_{k,j})^2},
a_{t,l}=\sum_{k=0}^{T+1} (\pi_{(k+1)}-\pi_{(k)})\frac{s_{k,t}s_{k,l}}{\prod_{j=t}^l\lambda_j^2(1+\tilde{m}_js_{k,j})^2},\\
b_t&=\sum_{k=0}^{T+1} (\pi_{(k+1)}-\pi_{(k)})\frac{\lambda_t(1+\tilde{m}_ts_{k,t})s_{k,T+1}}{\prod_{j=t}^T\lambda_j^2(1+\tilde{m}_js_{k,j})^2},
b_{t,l}=\sum_{k=0}^{T+1} (\pi_{(k+1)}-\pi_{(k)})\frac{\lambda_t(1+\tilde{m}_ts_{k,t})s_{k,l}}{\prod_{j=t}^l\lambda_j^2(1+\tilde{m}_js_{k,j})^2},\\
c_t&=\sum_{k=0}^{T+1} (\pi_{(k+1)}-\pi_{(k)})\frac{s_{k,T+1}}{\prod_{j=t}^T\lambda_j^2(1+\tilde{m}_js_{k,j})^2},
c_{t,l}=\sum_{k=0}^{T+1} (\pi_{(k+1)}-\pi_{(k)})\frac{s_{k,l}}{\prod_{j=t}^l\lambda_j^2(1+\tilde{m}_js_{k,j})^2},
\end{align*}
where $s_{k,t}$ is the $t$-th elements of $\vec{s_k}$. By (\ref{38}), $\tilde{m}_t$ is determined by
\begin{align*}
\tilde{m}_t=\Big(\lambda_t+\frac{\pi_t\gamma_t\delta}{1+\tilde{m}_t\delta}+\frac{(1-\pi_t)\gamma_t}{1+\tilde{m}_t}\Big)^{-1}.
\end{align*}
Unlike the isotropic case, it is tedious to derive the explicit expression of $\tilde{m}_t$. We may calculate $\tilde{m}_t$ using iterative methods.

The main difference between isotropic settings and the present case is that, the limiting risk may not exist for all $\beta$ satisfying Assumption \ref{48}. The source of this problem is that $G_n$ may not converge weakly to $G$ for complex structures of covariance matrices. However, if the components of $\beta$ are equally distributed along the eigenvectors of $\{\Sigma_t\}_{t=1,\cdots, T}$ (which are the same for different covariance matrices), it always holds that $G=H$, and the limiting risk exists regardless of the structure of covariance matrices. If $\beta$ is fixed, a natural example is $\beta=\frac{r}{\sqrt{p}}(1,\cdots,1)^\top$, since we assume that the covariance matrices are all diagonal. In more cases, $\beta$ can be regarded as random regression coefficients \citep{10.1214/16-AOS1456, JMLR:v21:19-277}. If the prior distribution of $\beta$ satisfies $\mathbb{E}\beta=0$ and $\text{Cov}(\beta)=\frac{r^2}{p}I_p$, then $\beta$ is equally distributed and the limiting posterior risk exists. Anyhow, if $\beta$ are equally distributed, we have $g_1=c_1$ and $g_{1,t}=c_{1,t}$, and the limiting risk can be calculated by Theorem \ref{10}.

\section{Experiments}
\label{62}
In this section, we analyze the asymptotic risk curves for continual ridge estimators and verify our results by experiments. 
For convenience, we assume that each task has identical sample size $n$, so that $\gamma_1=\cdots=\gamma_T=p/n$.
 Thereby, the asymptotics of our evaluation metrics for generalization performance are simplified to
\begin{align*}
&\bar{R}_X(\hat{\beta}; \beta)\rightarrow \frac{1}{T}\sum_{t=1}^T \tilde{R}_{T,t},\quad a.s.,\\
&{BWT}_X(\hat{\beta}_T; \beta)\rightarrow \frac{1}{T-1}\sum_{t=1}^{T-1} (\tilde{R}_{T,t}-\tilde{R}_{t,t}),\quad a.s.,\\
&{FWT}_X(\hat{\beta}_T; \beta)\rightarrow \frac{1}{T-1}\sum_{t=2}^{T} (\tilde{R}_{t,t}-\tilde{R}^{\rm ridge}_{t}),\quad a.s.,
\end{align*}
where $\tilde{R}_{T,t}$ and $\tilde{R}^{\rm ridge}_{t}$ denote  the almost sure limits of $R_X(\hat{\beta}_T; \beta,\Sigma_t)$ and $R_X(\hat{\beta}^{\rm ridge}_{t}; \beta,\Sigma_t)$ from Theorem \ref{10}, respectively.

In each experiment, we plot the theoretical risk curves using Theorem \ref{10}. Meanwhile, we calculate the evaluation metrics above by simulated data and compare the results with the theoretical risk curves. We consider the data generated settings mentioned in section
\ref{54}. In each setting, we are interested in the relationship between the evaluation metrics and the task number $T$ under different model complexity (quantified by $\gamma=p/n$). Besides, we also consider the influence of the choice of regularization parameters $\bm{\lambda}$ on generalization performance.

\subsection{Tuning Parameters Selection}
Before turning to explicit cases, we focus on the choice of regularization parameters $\bm{\lambda}=(\lambda_1, \cdots, \lambda_T)^\top$. From (\ref{59}), we know that the choice of $\lambda_t$ is related to the balance between the knowledge from $\mathcal{D}_t$ and previous tasks. If $\lambda_t$ is too large, the estimation is not adapted to the current task. Alternatively, if $\lambda_t$ is too small, the catastrophic forgetting may happen on previous tasks. Intuitively, there exists an optimal choice, and the oracle optimal regularization parameter is the minima of the risk function. However, this is obviously not a reasonable choice in continual learning procedures, because the risk function contains information from all steps, which is unavailable when we choose regularization parameters at each step before future tasks arrive. In other words, a reasonable strategy of parameter selection may determine $\lambda_t$ at step $t$. If we write $\tilde{R}_{t}=t^{-1}\sum_{s=1}^t \tilde{R}_{t,s}$ as the average risk function at time $t$, we may choose $\bm{\lambda}=(\lambda_1, \cdots, \lambda_T)^\top$ entrywise by
\begin{align}
\lambda_t= \mathop{\arg\min} \limits_{\lambda>0} \tilde{R}_{t}(\lambda_1, \cdots, \lambda_{t-1}, \lambda),
\quad t=1,\cdots, T.
\label{55}
\end{align}
It is a greedy selection strategy, since we choose the (oracle) presently optimal parameter at each step. Although there is no evidence to show that the present choice is optimal for the final average risk function, this strategy is somehow acceptable if there is no prior information about future tasks at present time.

\begin{remark}
In practice, the regularization parameters may be chosen from a data-driven criterion, but standard techniques, such as cross-validation, may be invalid in continual learning frameworks. Similar as the determination of oracle parameters, it is important to notice that a reasonable criterion chooses $\lambda_t$ using the information at time $t$. Specifically, only the dataset $\mathcal{D}_t$ and the estimator $\hat{\beta}_{t-1}=\hat{\beta}_{t-1}(\lambda_1, \cdots, \lambda_{t-1})$ are available, where $\lambda_1, \cdots, \lambda_{t-1}$ are determined at previous steps. Designing such a criterion is beyond the scope of this article.
\end{remark}

\subsection{Experimental Results}
In each of the following experiments, we set the maximum task number $T=20$, sample size $n=100$ and the parameter dimension $p=\lfloor n\gamma \rfloor$, where $\gamma$ is the metric of model complexity. We consider three levels of model complexity: $\gamma=0.6,1.2,2.4$. In each experiment, we assume that the noise is Gaussian with variance $\sigma^2=1$. The signal strength $r^2$ is set to be 1, and the true parameter $\beta$ is generated as $\beta=\frac{r}{\sqrt{p}}(\beta_1,\cdots,\beta_p)^\top$, where $\beta_k$ is randomly sampled by $\textbf{P}(\beta_k=1)=\textbf{P}(\beta_k=-1)=1/2$. For the choice of regularization parameters, the standard setting $\bm{\lambda}_{st}$ follows from (\ref{55}), while a contrast setting is designed as $\bm{\lambda}_{st}/20$, which means there exists catastrophic forgetting in the learning procedure. Each experiment is repeated $B=100$ times and we present the average results as estimation curves. The asymptotic results are presented as theoretical curves in each plot.

\begin{figure*}[!htbp]
    \centering
    \begin{subfigure}{0.31\textwidth}
    	\includegraphics[width=\textwidth]{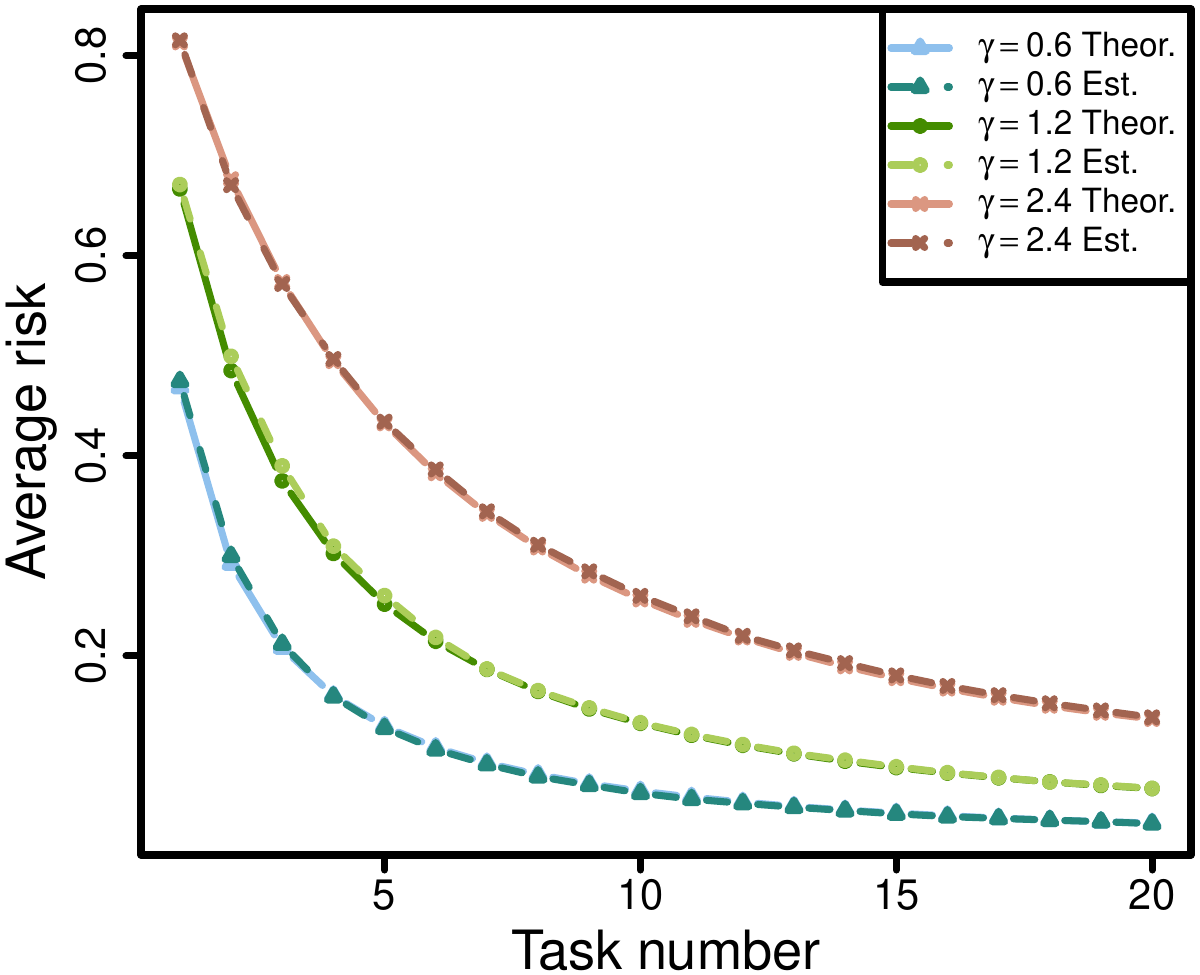}
    	\vspace{-0.13\textwidth}
        \caption{Average risk}
    \end{subfigure}
    \hspace{0.01\textwidth}
    \begin{subfigure}{0.31\textwidth}
        \includegraphics[width=\textwidth]{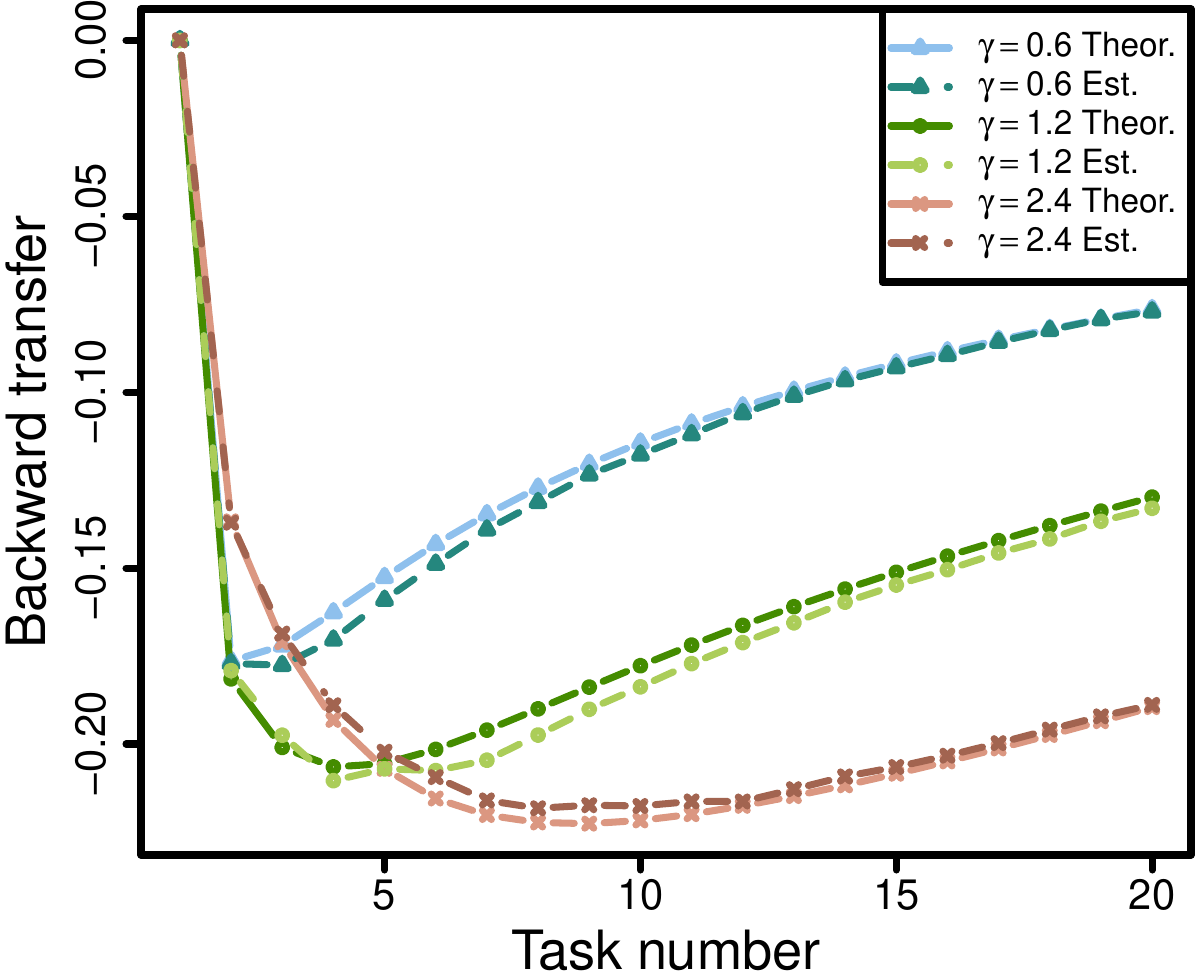}
    	\vspace{-0.13\textwidth}
        \caption{Backward transfer}
    \end{subfigure}
    \hspace{0.01\textwidth}
    \begin{subfigure}{0.31\textwidth}
        \includegraphics[width=\textwidth]{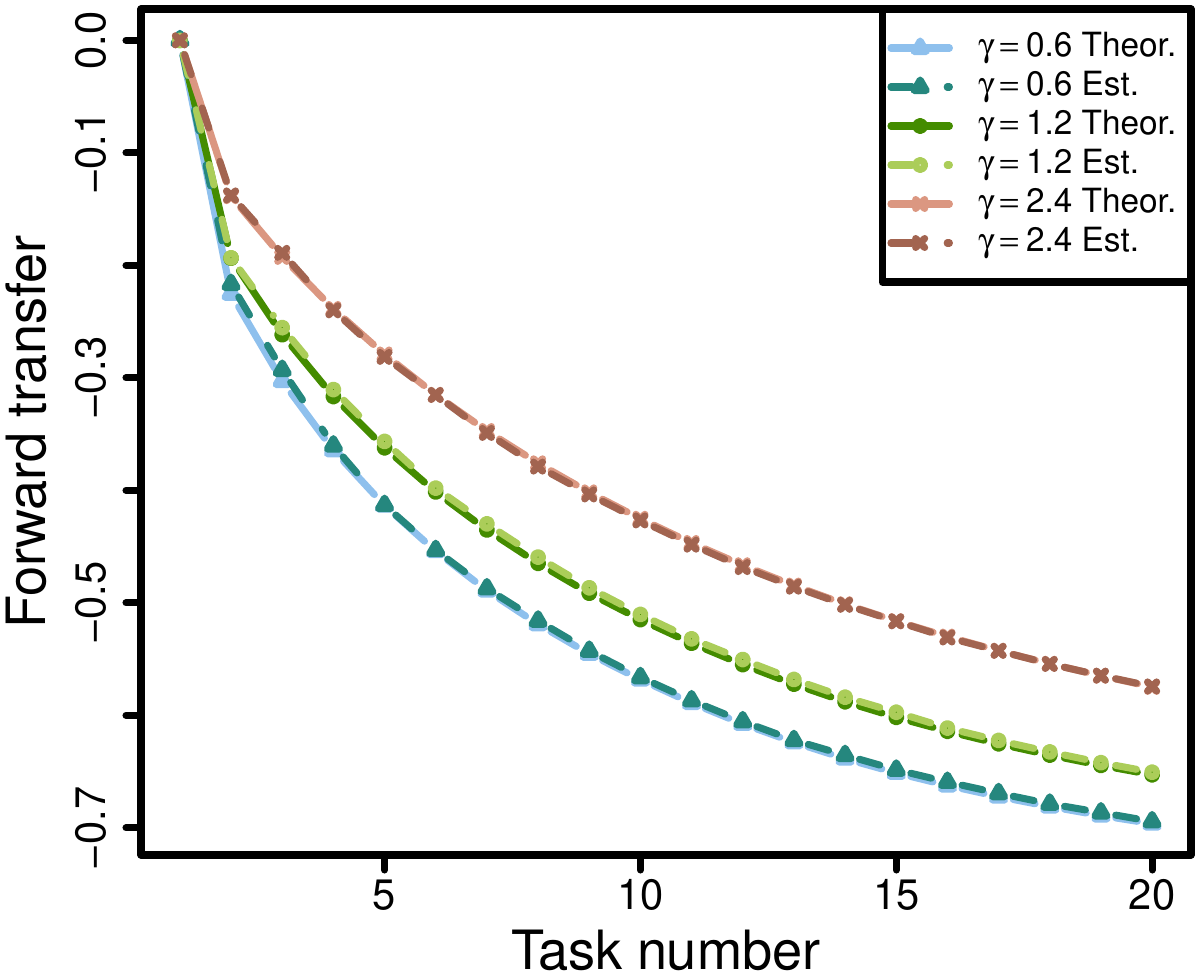}
    	\vspace{-0.13\textwidth}
        \caption{Forward transfer}
    \end{subfigure} \\
    \vspace{0.02\textwidth}
    \begin{subfigure}{0.31\textwidth}
    	\includegraphics[width=\textwidth]{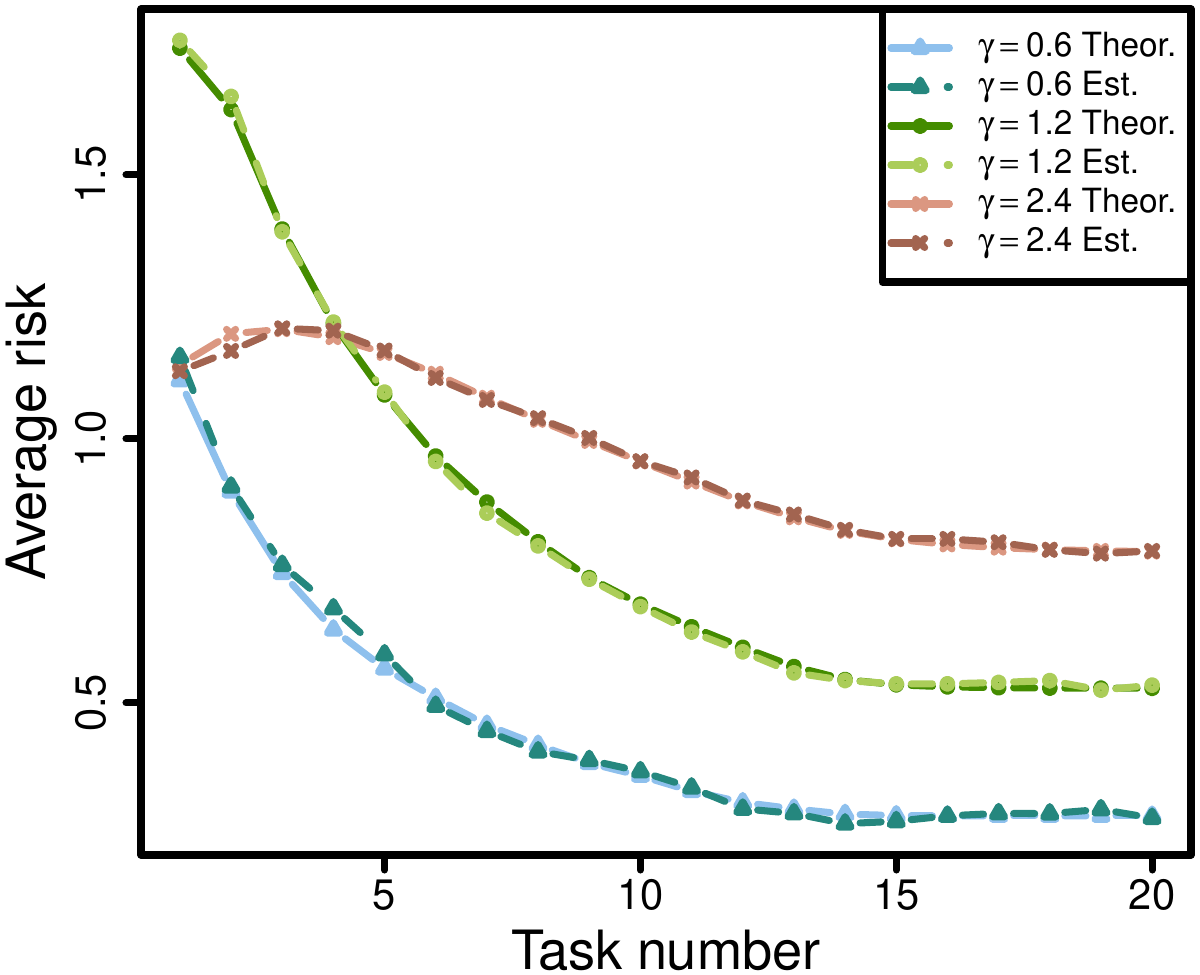}
    	\vspace{-0.13\textwidth}
        \caption{Average risk}
    \end{subfigure}
    \hspace{0.01\textwidth}
    \begin{subfigure}{0.31\textwidth}
        \includegraphics[width=\textwidth]{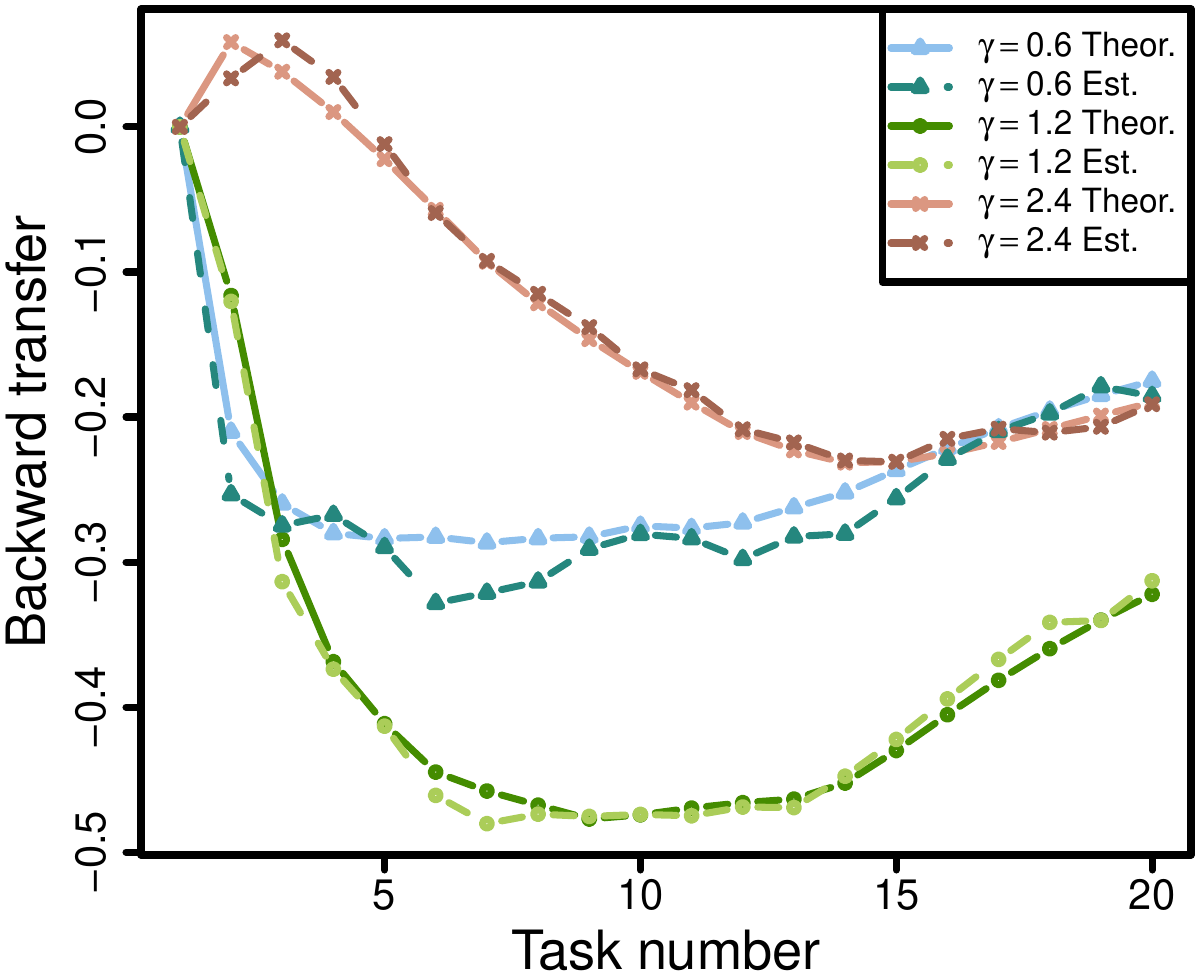}
    	\vspace{-0.13\textwidth}
        \caption{Backward transfer}
    \end{subfigure}
    \hspace{0.01\textwidth}
    \begin{subfigure}{0.31\textwidth}
        \includegraphics[width=\textwidth]{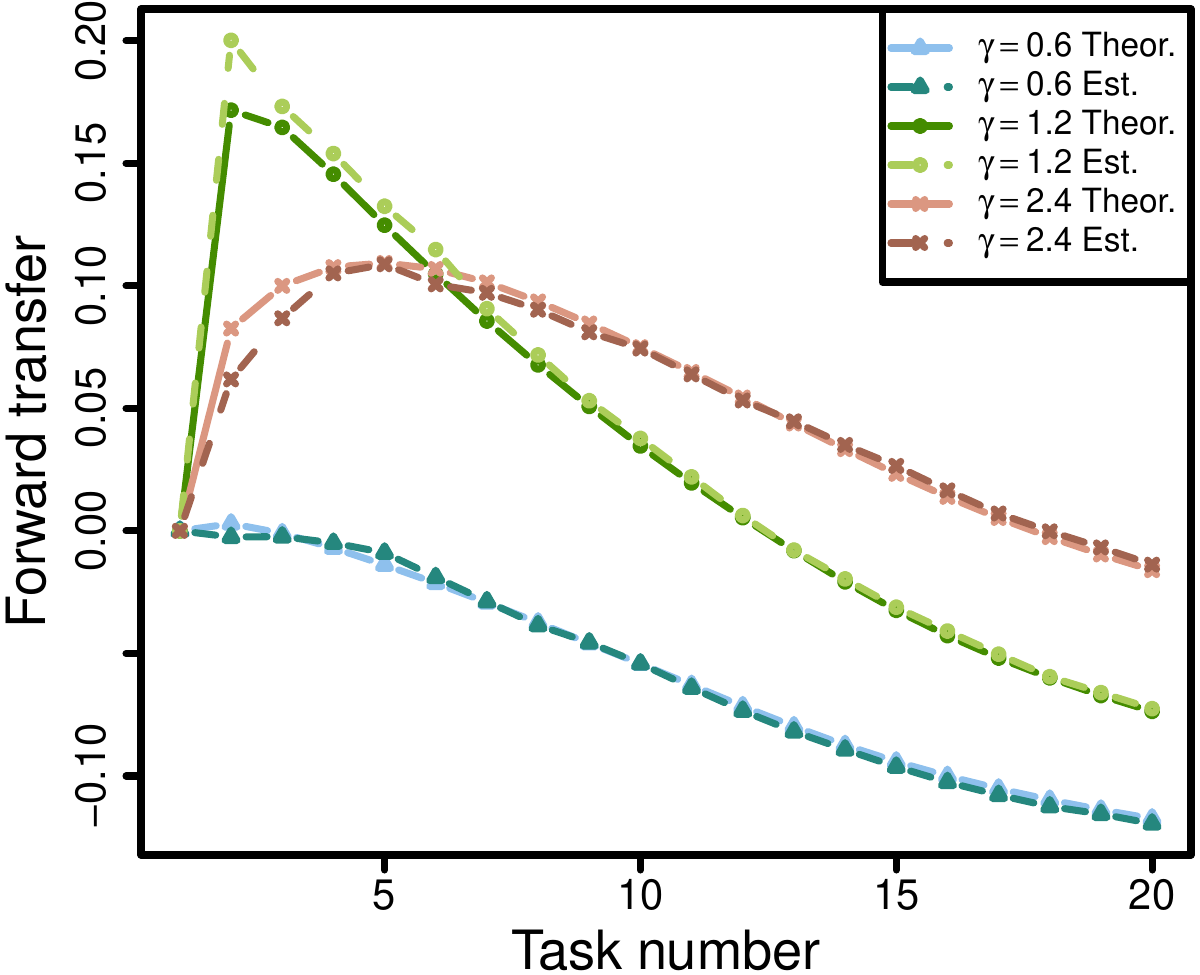}
    	\vspace{-0.13\textwidth}
        \caption{Forward transfer}
    \end{subfigure} 
    \caption{Risk curves with respect to the task number: The covariance matrices are all identity matrices. In the first row, the regularization parameters are $\bm{\lambda}=\bm{\lambda}_{st}$, while in the second row,  $\bm{\lambda}=\bm{\lambda}_{st}/20$. }
    \label{100}
\end{figure*}	
\paragraph{Identity covariance matrices} 
In this example, we consider a setting where the covariance matrices are identical across all tasks. Under this assumption, increasing the number of tasks would be equivalent to increasing the sample size if catastrophic forgetting is disregarded. Therefore, one would expect the average risk to decrease monotonically as more tasks are observed. The simulation results, presented in Figure \ref{100}, confirm this behavior when the regularization parameters are appropriately tuned. Moreover, the results demonstrate that continual ridge estimation exhibits both forward transfer and backward transfer capabilities, and the dominant transfer shifts from backward transfer to forward transfer as the task number increases. However, when regularization parameters are set too small, continual ridge estimation has significant performance reduction, manifested on larger average risk and delayed emergence of forward transfer.


\paragraph{Isotropic covariance matrices with different scales} 
In this example, we consider $\Sigma_t=\delta_tI_p, t=1,\cdots T$, where $\{\delta_t\}_{t=1,\cdots T}$ are different positive constants. The parameters $\bm{\delta}=(\delta_1,\cdots,\delta_T)^\top$ describe the patterns of covariance shift. We investigate two distinct mechanisms: (1) random shift: $\delta_1,\cdots,\delta_T \sim i.i.d. U(0.5,3.5)$; (2) increasing shift: $\delta_t=4t/(T+1)$. The simulation results are presented in Figure \ref{101}. When the regularization parameters are appropriately tuned, we observe that the average risk decreases monotonically under the random shift mechanism, while rising to a peak and then decreasing under the increasing shift mechanism. This behavior suggests that the continual ridge estimation identifies the trend of covariance shift in the early steps, then leveraging the acquired knowledge to enhance overall performance. Besides, the performance of transfer capacities are similar to the first example. When regularization parameters are set too small, we explore that the performance of continual estimator is significantly disturbed by randomness of covariance shift, and the forward transfer behaves poorly since the trends of covariance shift are not identified effectively.

\paragraph{Covariance matrices with different block sizes} 
In the last example, we consider $\Sigma_t = \text{diag}\{\delta I_{\lfloor \pi_tp\rfloor},I_{p-\lfloor\pi_tp\rfloor}\}, t=1,\cdots T$, where $\{\pi_t\}_{t=1,\cdots T} \in [0,1]$ and $\delta>1$ are positive constants. The parameters $\bm{\pi}=(\pi_1,\cdots,\pi_T)^\top$ describe the variation of block size, and $\delta$ describes the scale of the main block. In this experiment we set $\delta=5$. We also investigate two distinct mechanisms: (1) random block size: $\pi_1,\cdots,\pi_T \sim i.i.d. U(0,1)$; (2) increasing block size: $\pi_t=t/T$. The simulation results are presented in Figure \ref{102}. When the regularization parameters are appropriately tuned, the average risk decreases nearly monotonically for both covariance block mechanisms, and the performance of transfer capacities are similar to the first two examples. In the under-regularized setting, there is also 
performance reduction on average risk and forward transfer due to  catastrophic forgetting. \\

\begin{figure*}[!htbp]
    \centering
    \begin{subfigure}{0.31\textwidth}
    	\includegraphics[width=\textwidth]{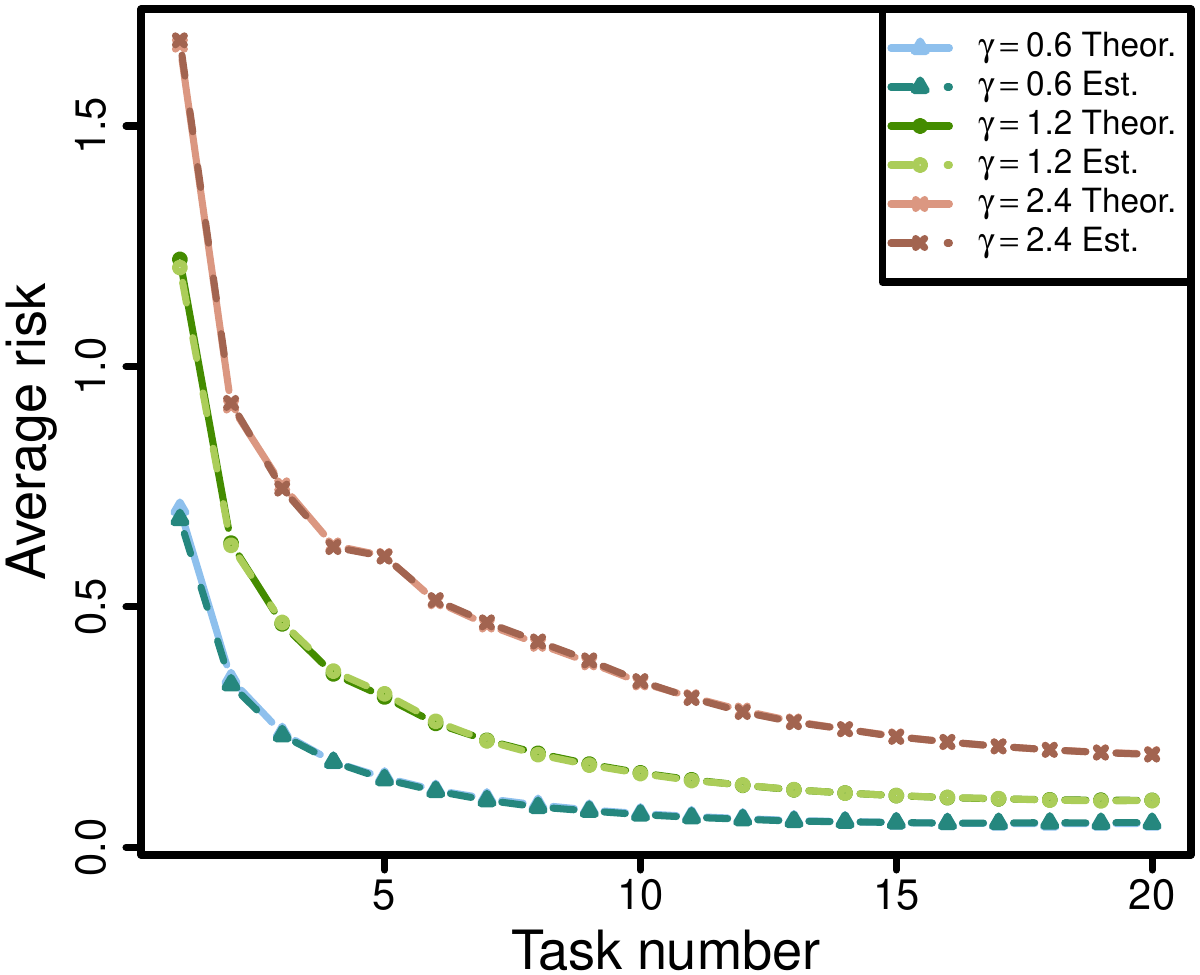}
    	\vspace{-0.13\textwidth}
        \caption{Average risk}
    \end{subfigure}
    \hspace{0.01\textwidth}
    \begin{subfigure}{0.31\textwidth}
        \includegraphics[width=\textwidth]{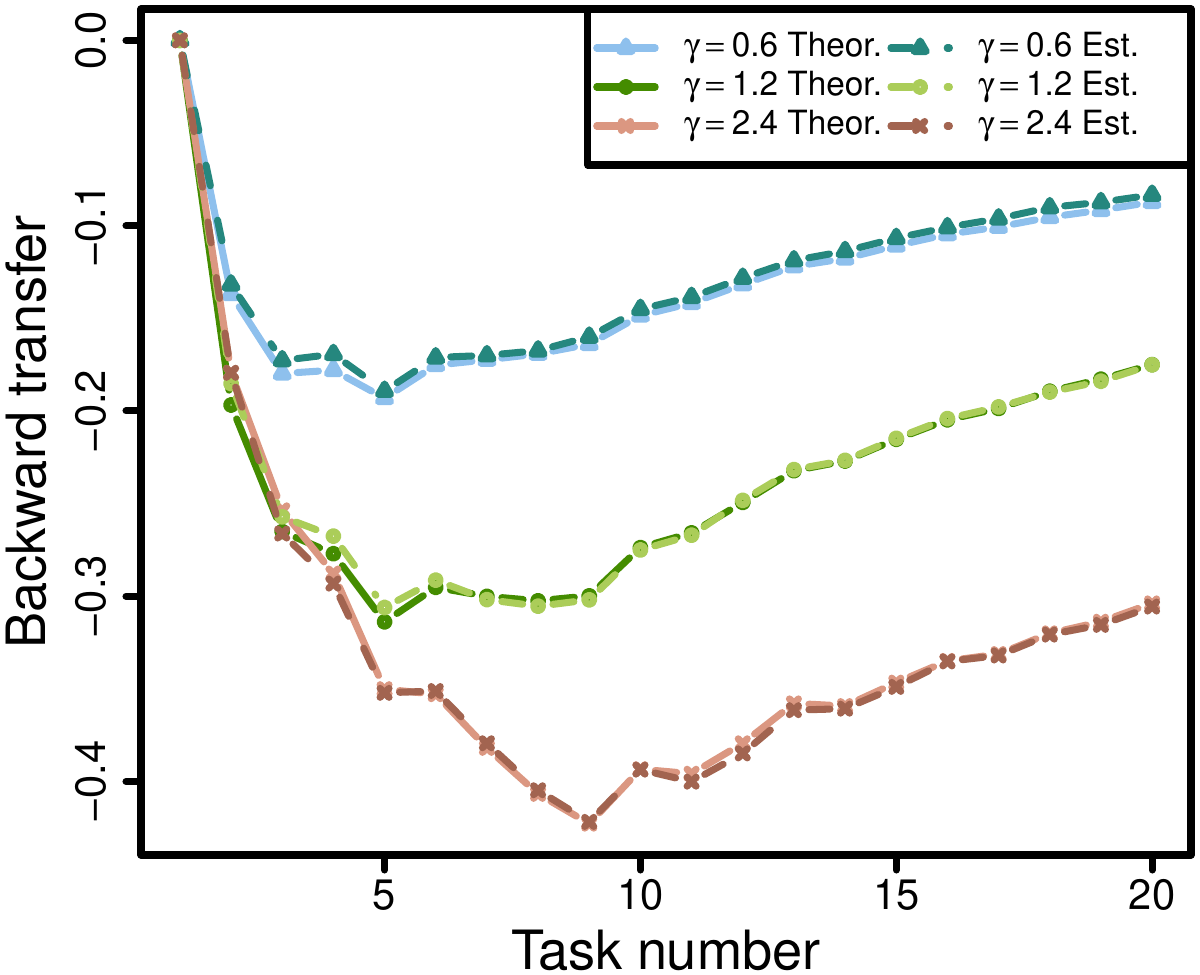}
    	\vspace{-0.13\textwidth}
        \caption{Backward transfer}
    \end{subfigure}
    \hspace{0.01\textwidth}
    \begin{subfigure}{0.31\textwidth}
        \includegraphics[width=\textwidth]{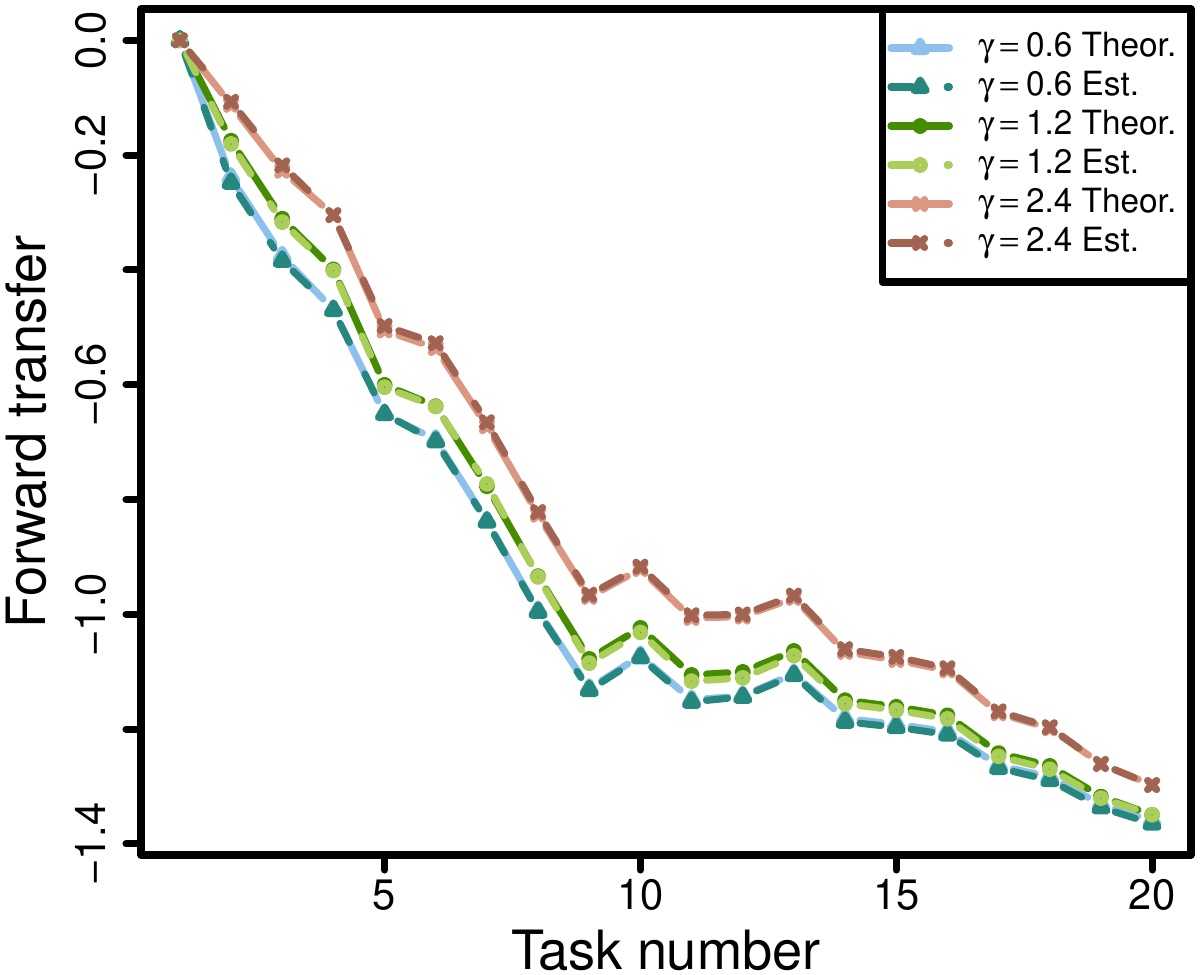}
    	\vspace{-0.13\textwidth}
        \caption{Forward transfer}
    \end{subfigure} \\
    \vspace{0.02\textwidth}
    \begin{subfigure}{0.31\textwidth}
    	\includegraphics[width=\textwidth]{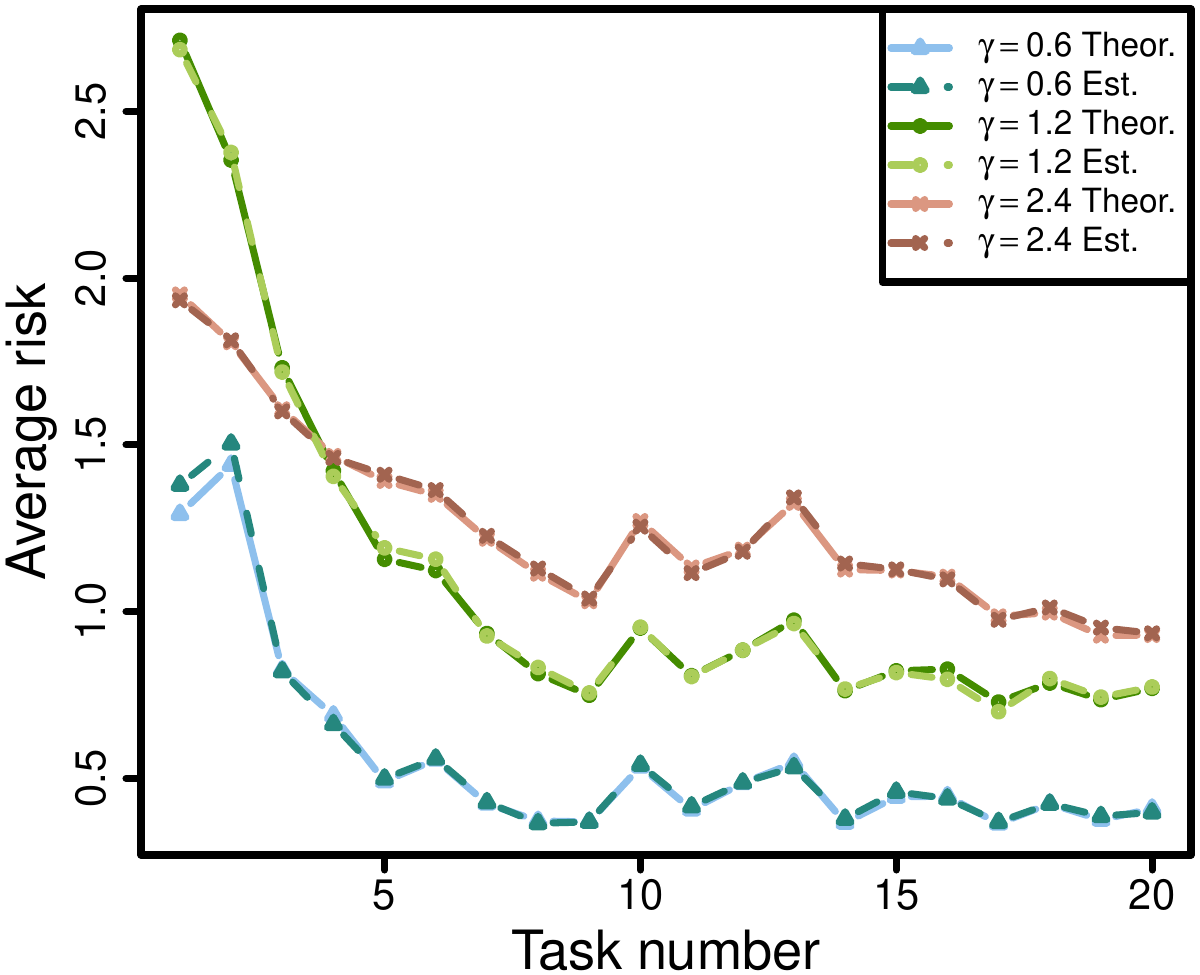}
    	\vspace{-0.13\textwidth}
        \caption{Average risk}
    \end{subfigure}
    \hspace{0.01\textwidth}
    \begin{subfigure}{0.31\textwidth}
        \includegraphics[width=\textwidth]{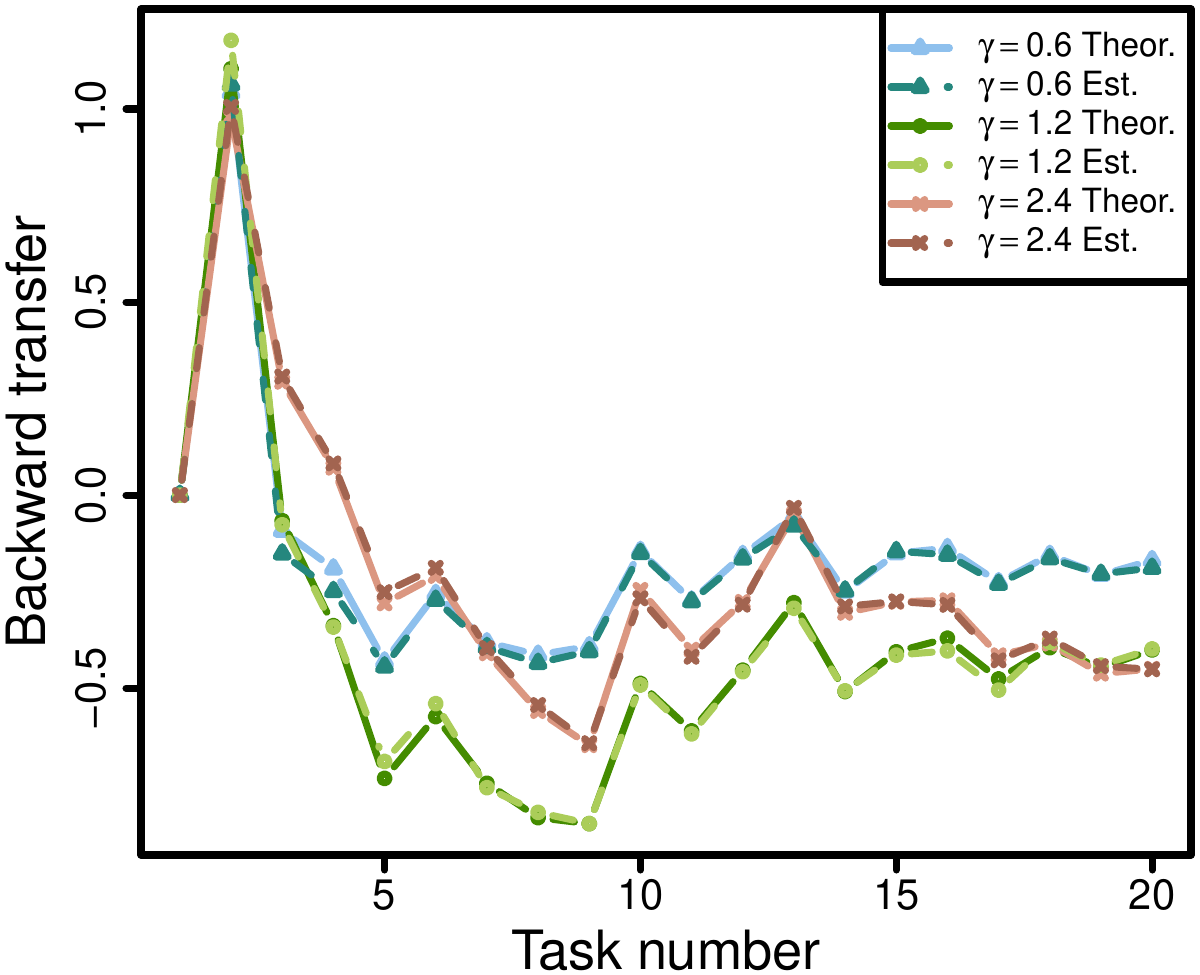}
    	\vspace{-0.13\textwidth}
        \caption{Backward transfer}
    \end{subfigure}
    \hspace{0.01\textwidth}
    \begin{subfigure}{0.31\textwidth}
        \includegraphics[width=\textwidth]{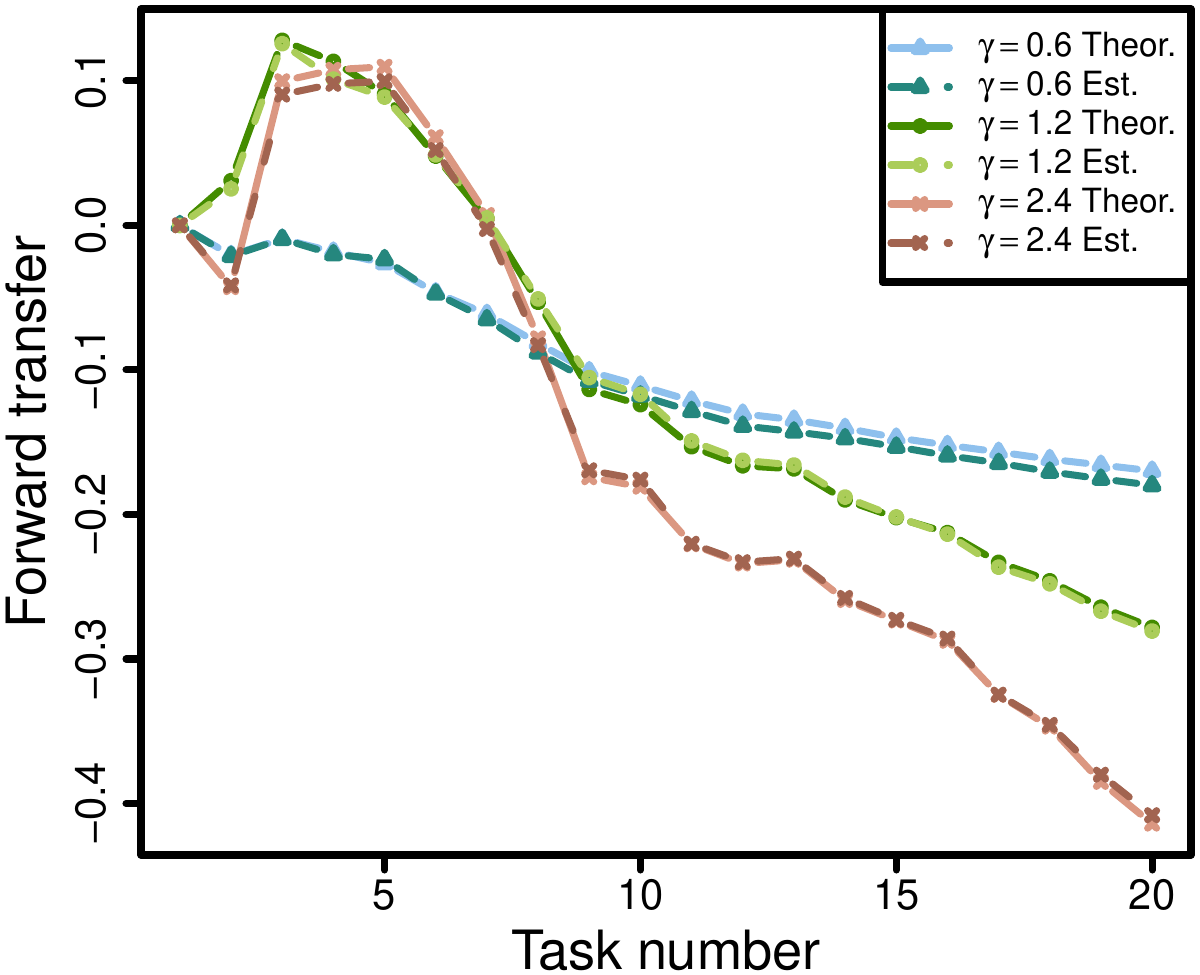}
    	\vspace{-0.13\textwidth}
        \caption{Forward transfer}
    \end{subfigure}\\
    \vspace{0.02\textwidth}
    \begin{subfigure}{0.31\textwidth}
    	\includegraphics[width=\textwidth]{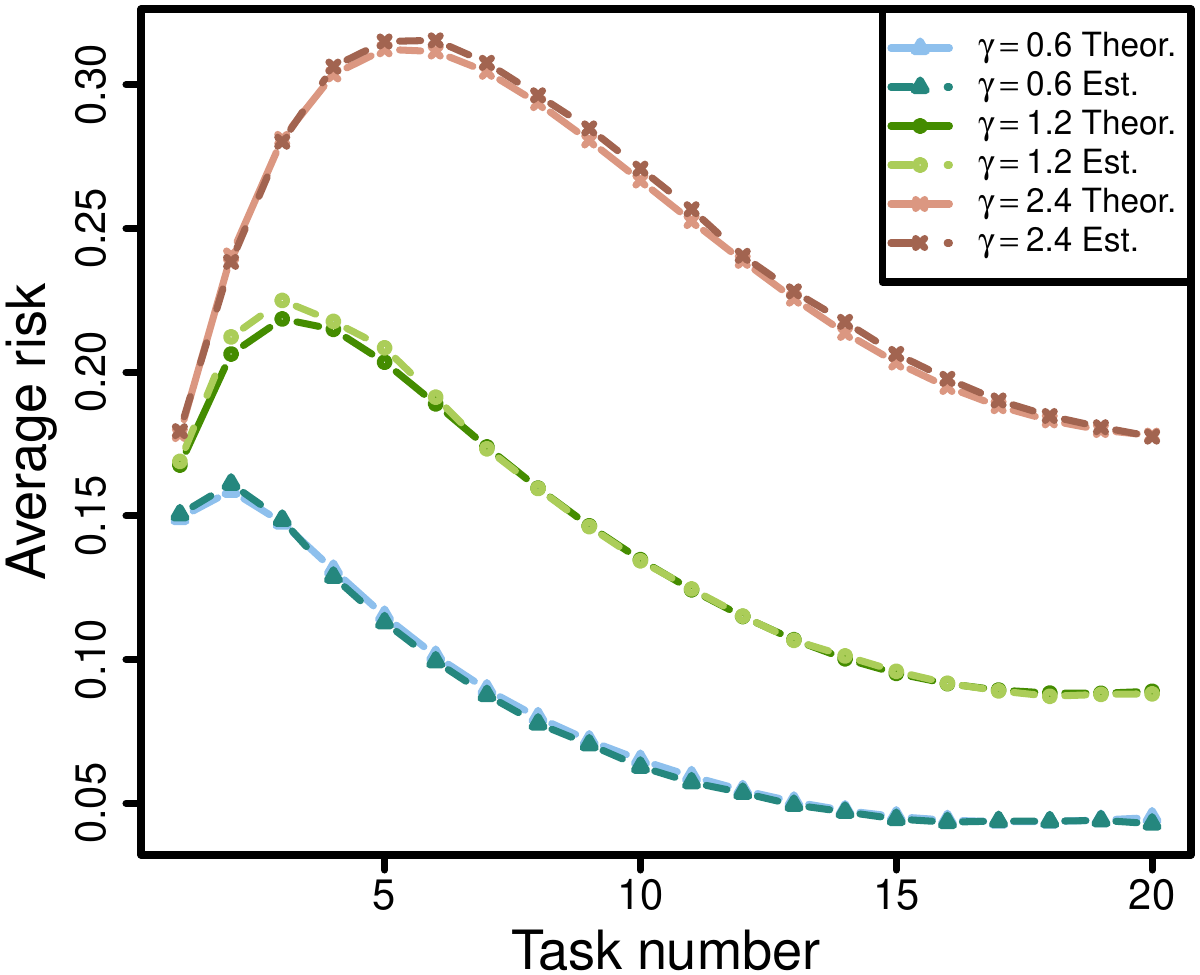}
    	\vspace{-0.13\textwidth}
        \caption{Average risk}
    \end{subfigure}
    \hspace{0.01\textwidth}
    \begin{subfigure}{0.31\textwidth}
        \includegraphics[width=\textwidth]{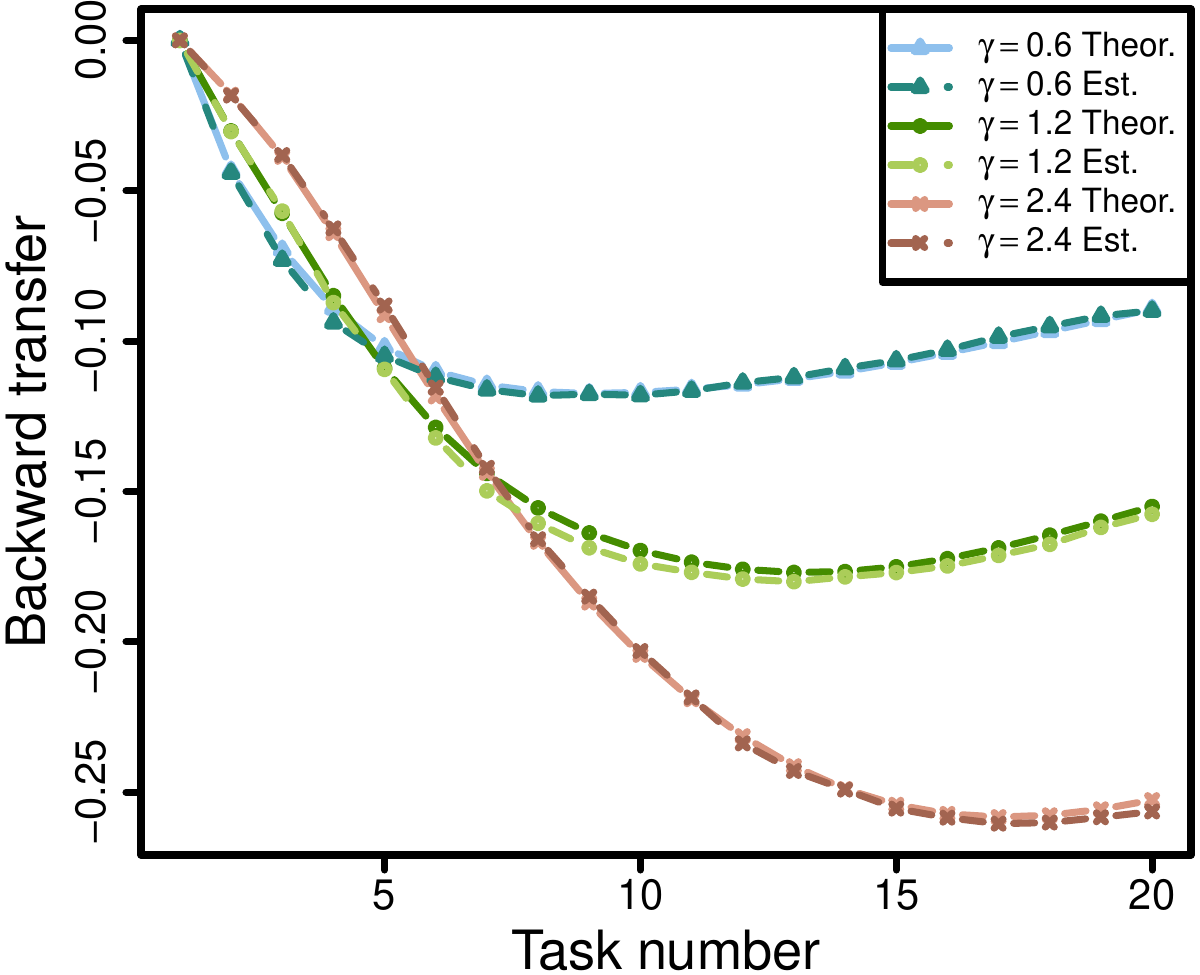}
    	\vspace{-0.13\textwidth}
        \caption{Backward transfer}
    \end{subfigure}
    \hspace{0.01\textwidth}
    \begin{subfigure}{0.31\textwidth}
        \includegraphics[width=\textwidth]{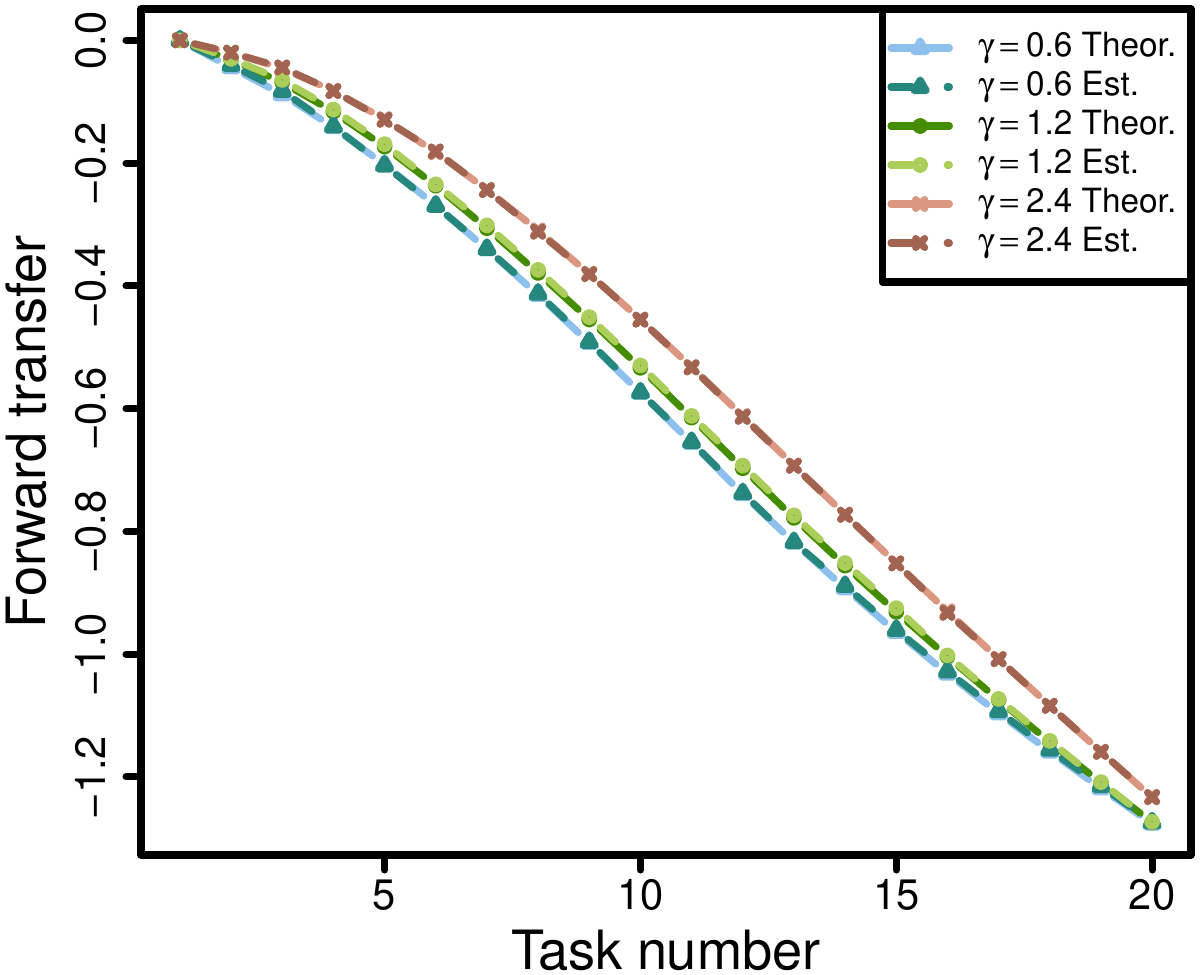}
    	\vspace{-0.13\textwidth}
        \caption{Forward transfer}
    \end{subfigure} \\
    \vspace{0.02\textwidth}
    \begin{subfigure}{0.31\textwidth}
    	\includegraphics[width=\textwidth]{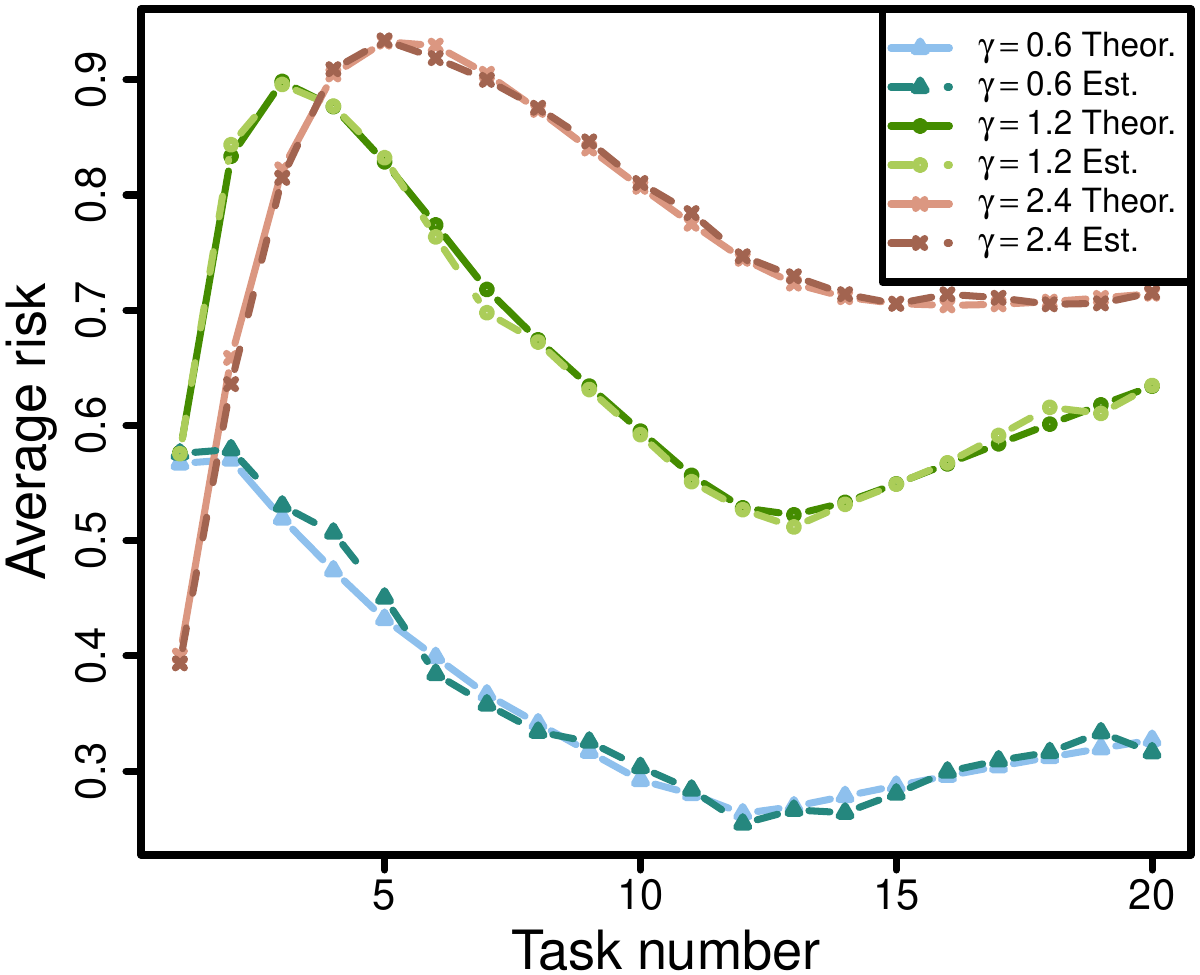}
    	\vspace{-0.13\textwidth}
        \caption{Average risk}
    \end{subfigure}
    \hspace{0.01\textwidth}
    \begin{subfigure}{0.31\textwidth}
        \includegraphics[width=\textwidth]{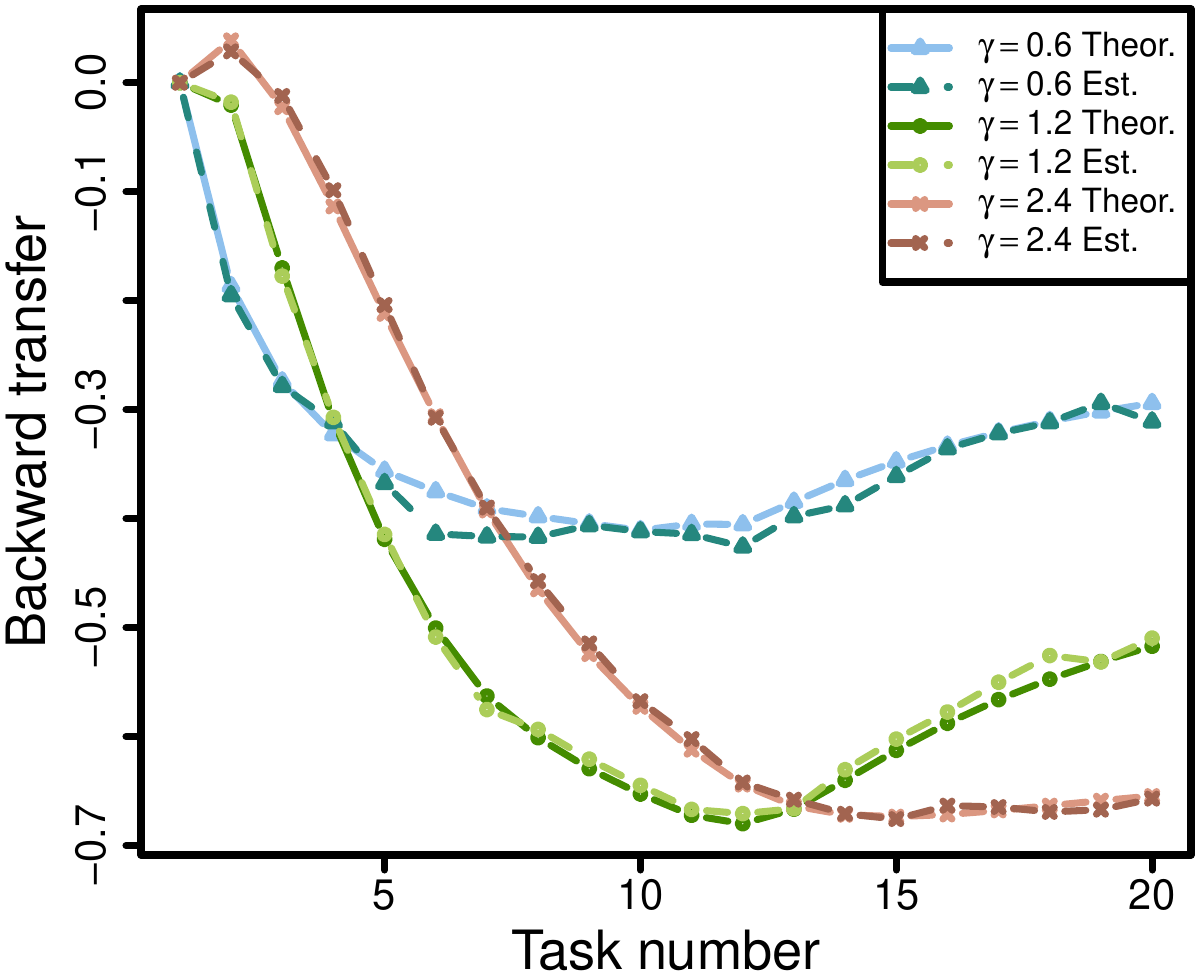}
    	\vspace{-0.13\textwidth}
        \caption{Backward transfer}
    \end{subfigure}
    \hspace{0.01\textwidth}
    \begin{subfigure}{0.31\textwidth}
        \includegraphics[width=\textwidth]{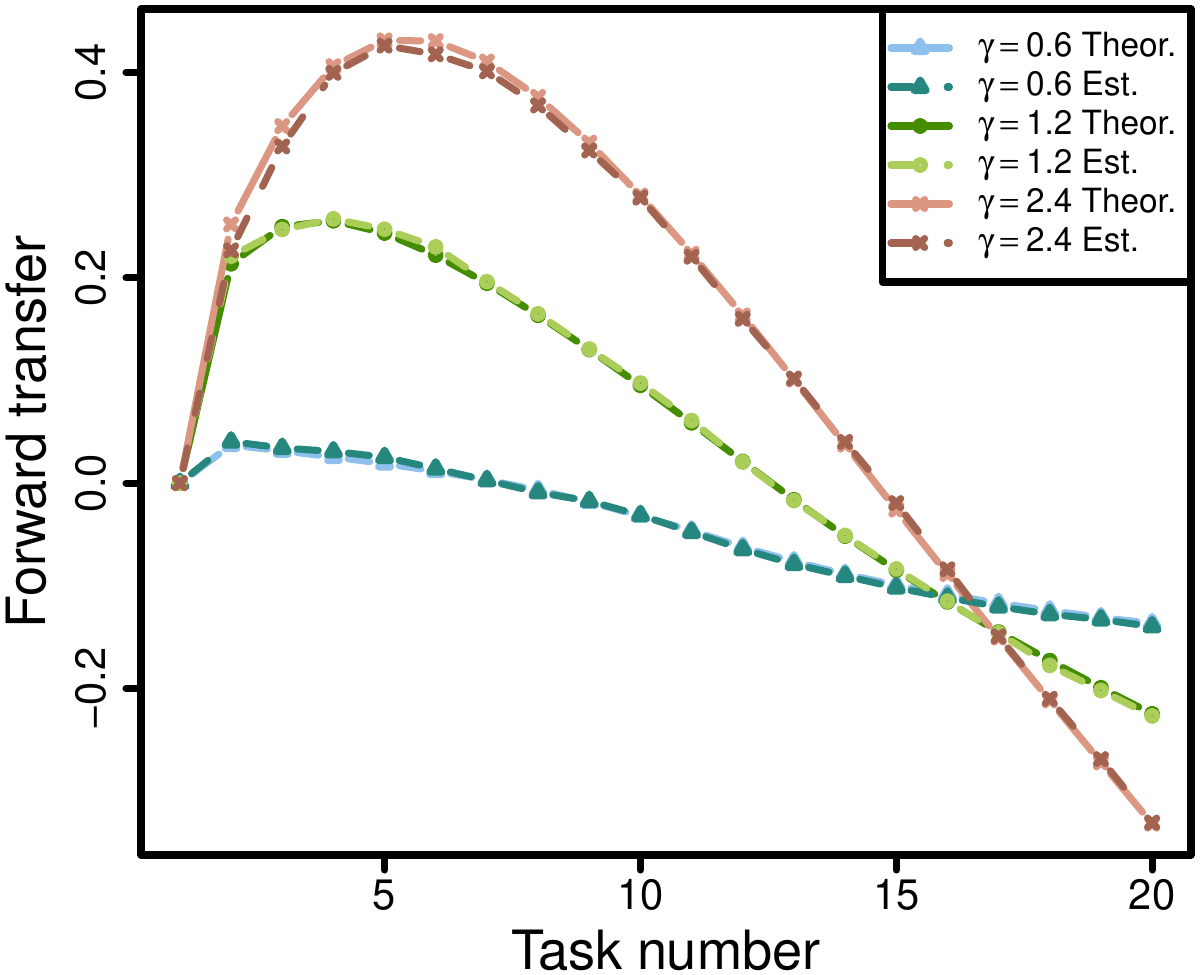}
    	\vspace{-0.13\textwidth}
        \caption{Forward transfer}
    \end{subfigure} 
    \caption{Risk curves with respect to the task number: In the first and second row, the covariance matrices satisfy random shift mechanism with $\bm{\lambda}=\bm{\lambda}_{st}$ and $\bm{\lambda}=\bm{\lambda}_{st}/20$ respectively. In the third and fourth row, the covariance matrices satisfy increasing shift mechanism with $\bm{\lambda}=\bm{\lambda}_{st}$ and $\bm{\lambda}=\bm{\lambda}_{st}/20$ respectively.}
    \label{101}
\end{figure*}


\begin{figure*}[!htbp]
    \centering
    \begin{subfigure}{0.31\textwidth}
    	\includegraphics[width=\textwidth]{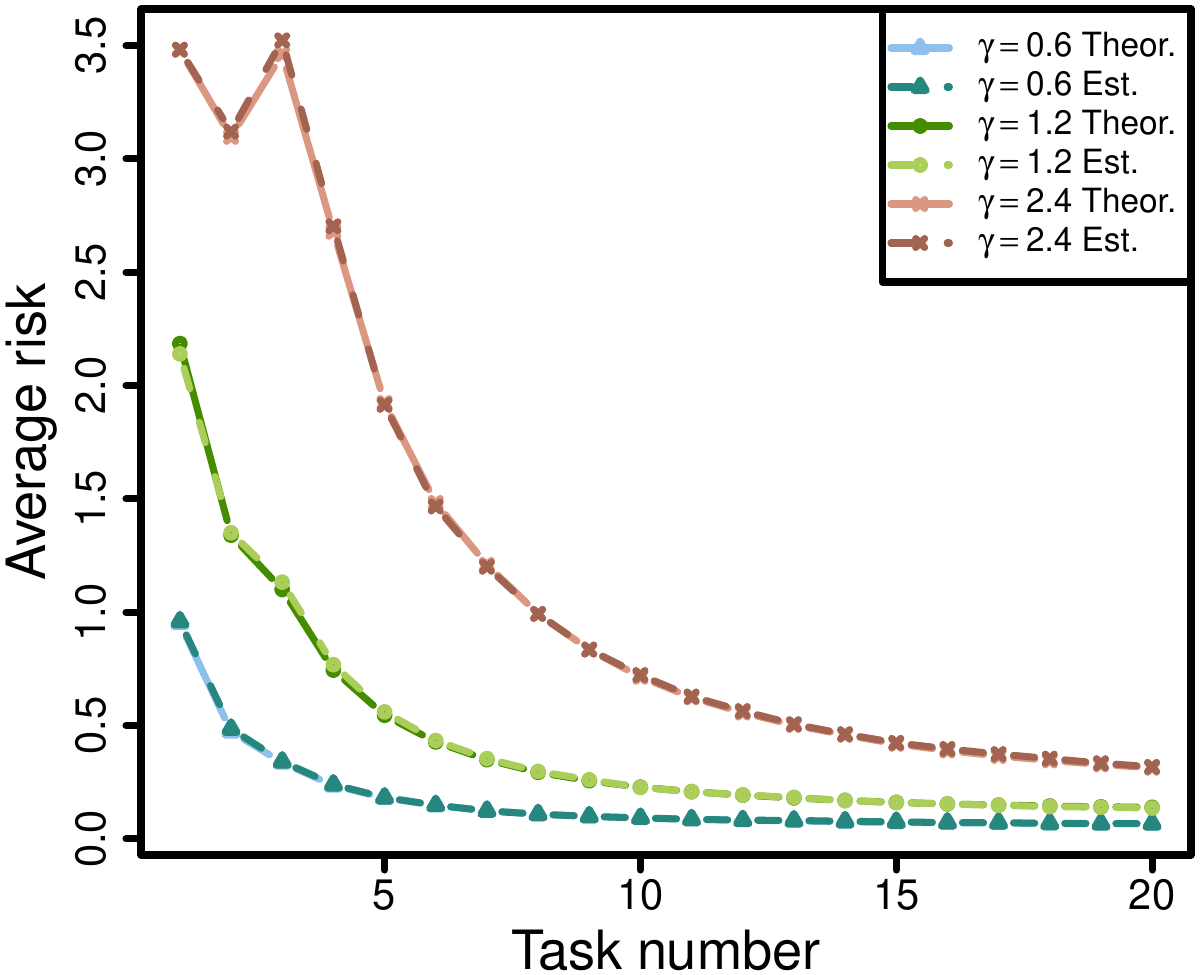}
    	\vspace{-0.13\textwidth}
        \caption{Average risk}
    \end{subfigure}
    \hspace{0.01\textwidth}
    \begin{subfigure}{0.31\textwidth}
        \includegraphics[width=\textwidth]{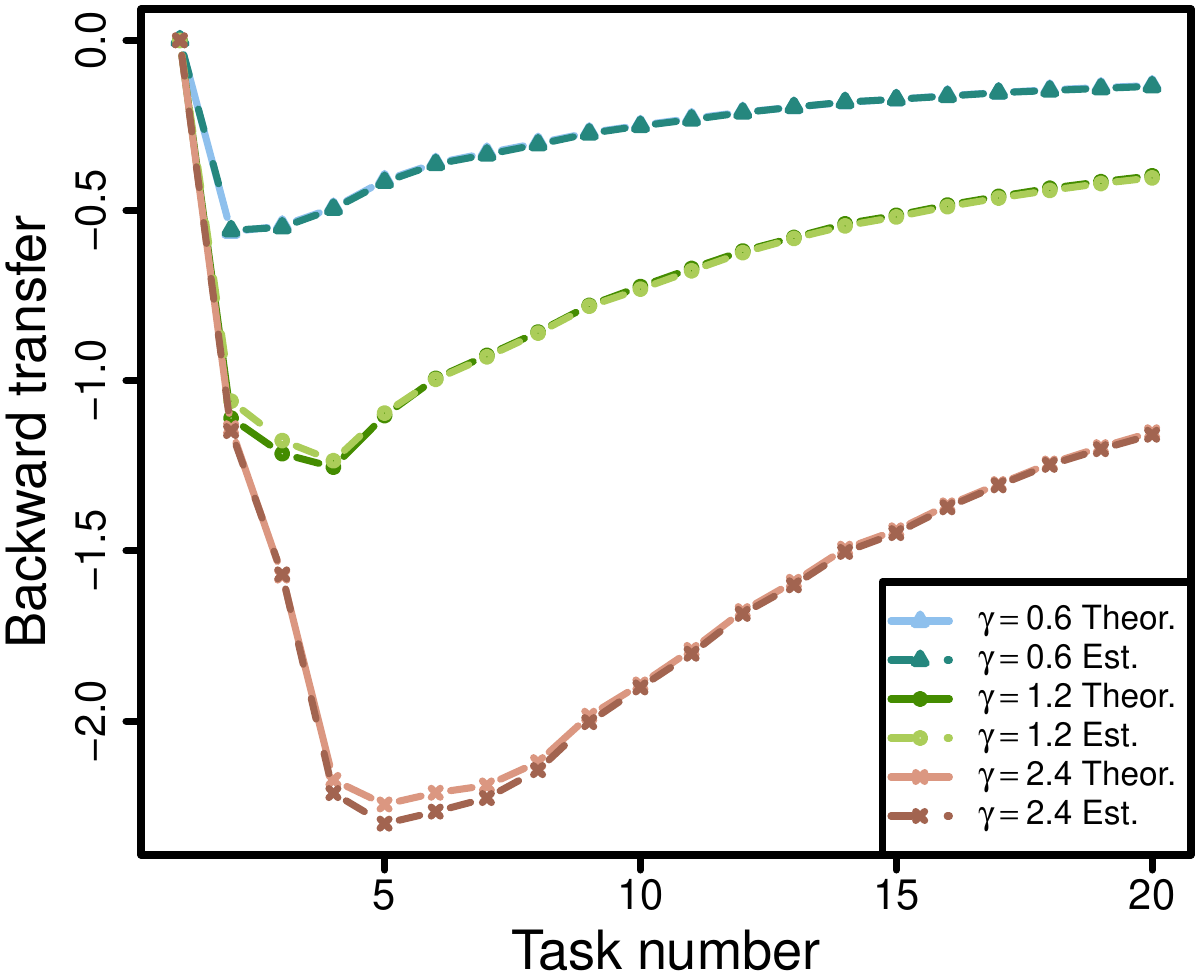}
    	\vspace{-0.13\textwidth}
        \caption{Backward transfer}
    \end{subfigure}
    \hspace{0.01\textwidth}
    \begin{subfigure}{0.31\textwidth}
        \includegraphics[width=\textwidth]{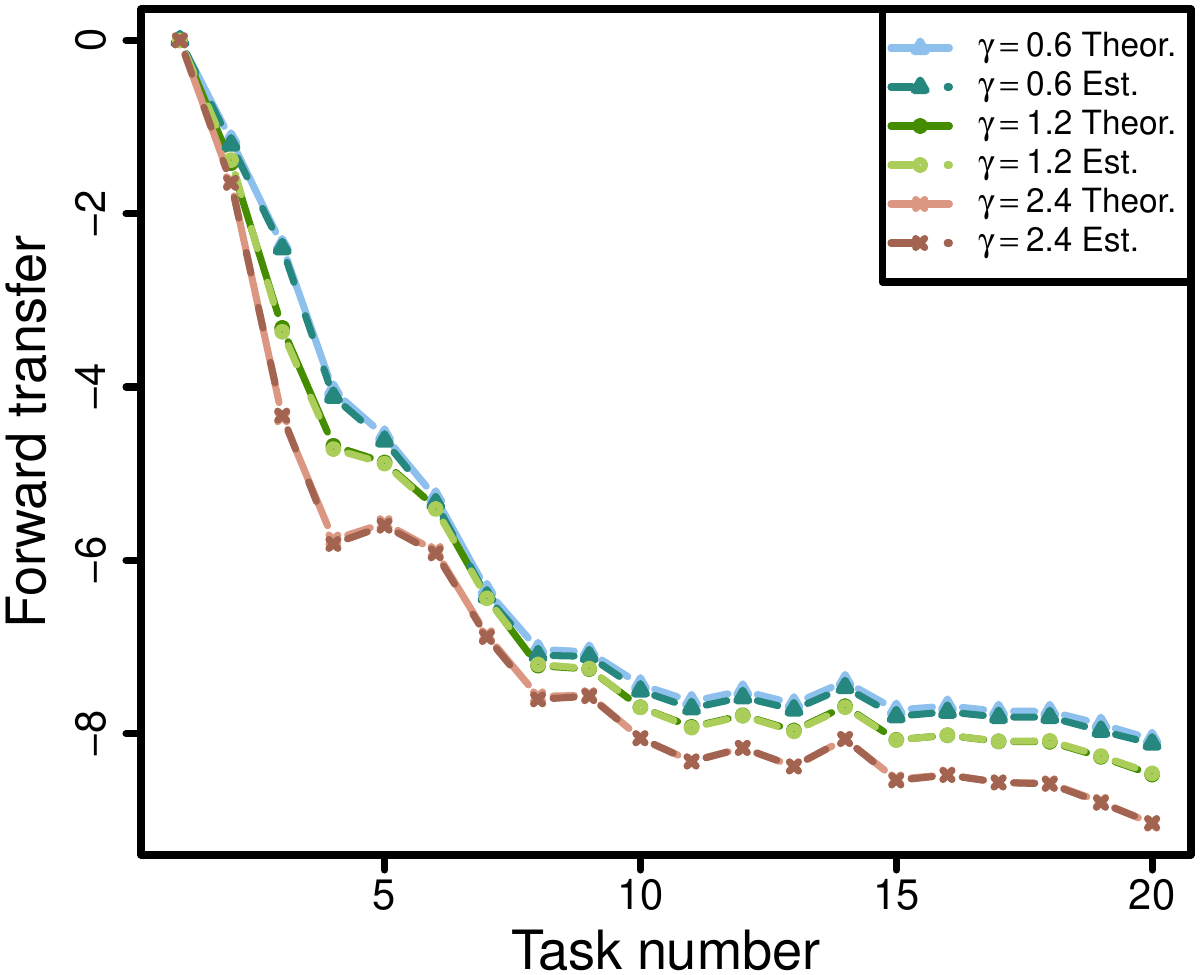}
    	\vspace{-0.13\textwidth}
        \caption{Forward transfer}
    \end{subfigure} \\
    \vspace{0.02\textwidth}
    \begin{subfigure}{0.31\textwidth}
    	\includegraphics[width=\textwidth]{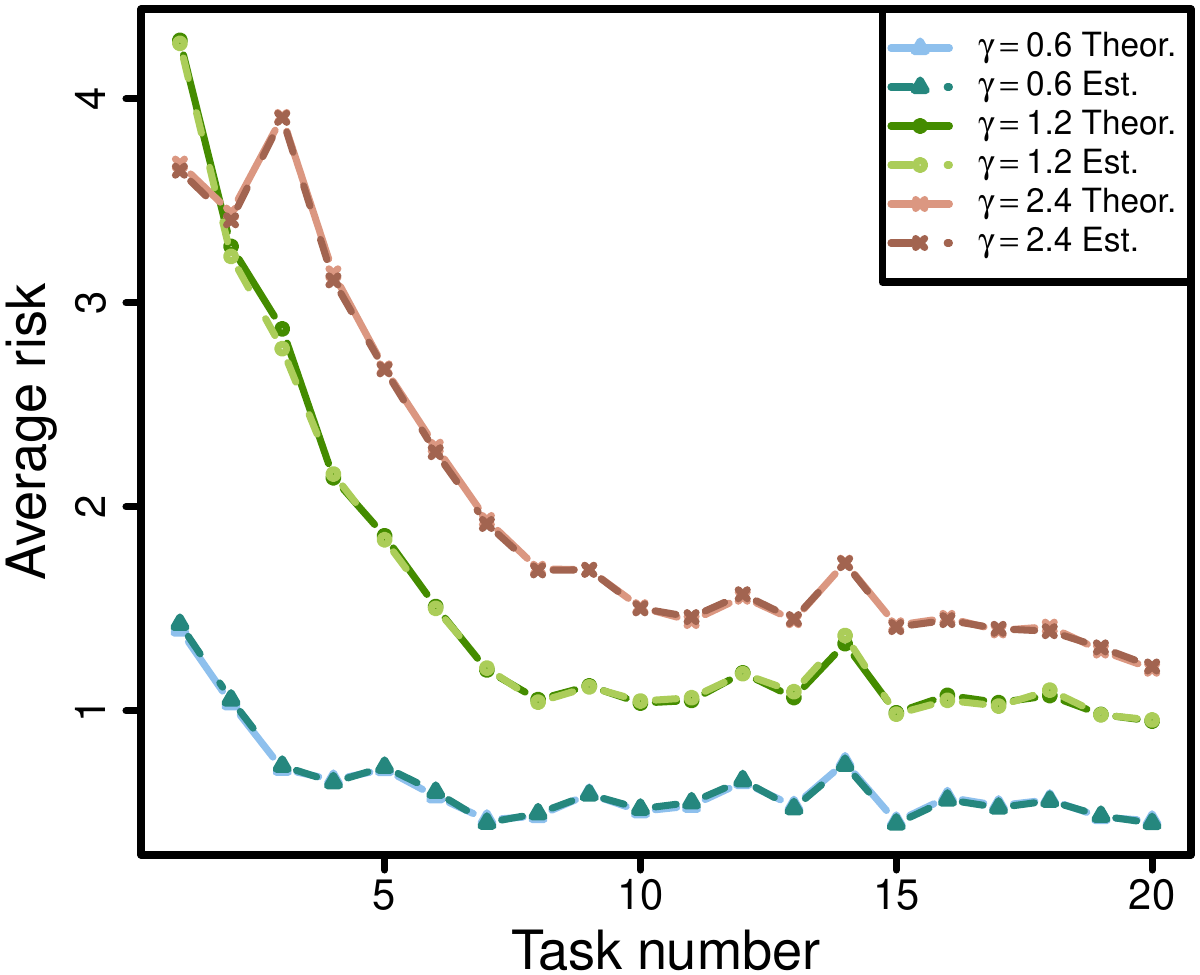}
    	\vspace{-0.13\textwidth}
        \caption{Average risk}
    \end{subfigure}
    \hspace{0.01\textwidth}
    \begin{subfigure}{0.31\textwidth}
        \includegraphics[width=\textwidth]{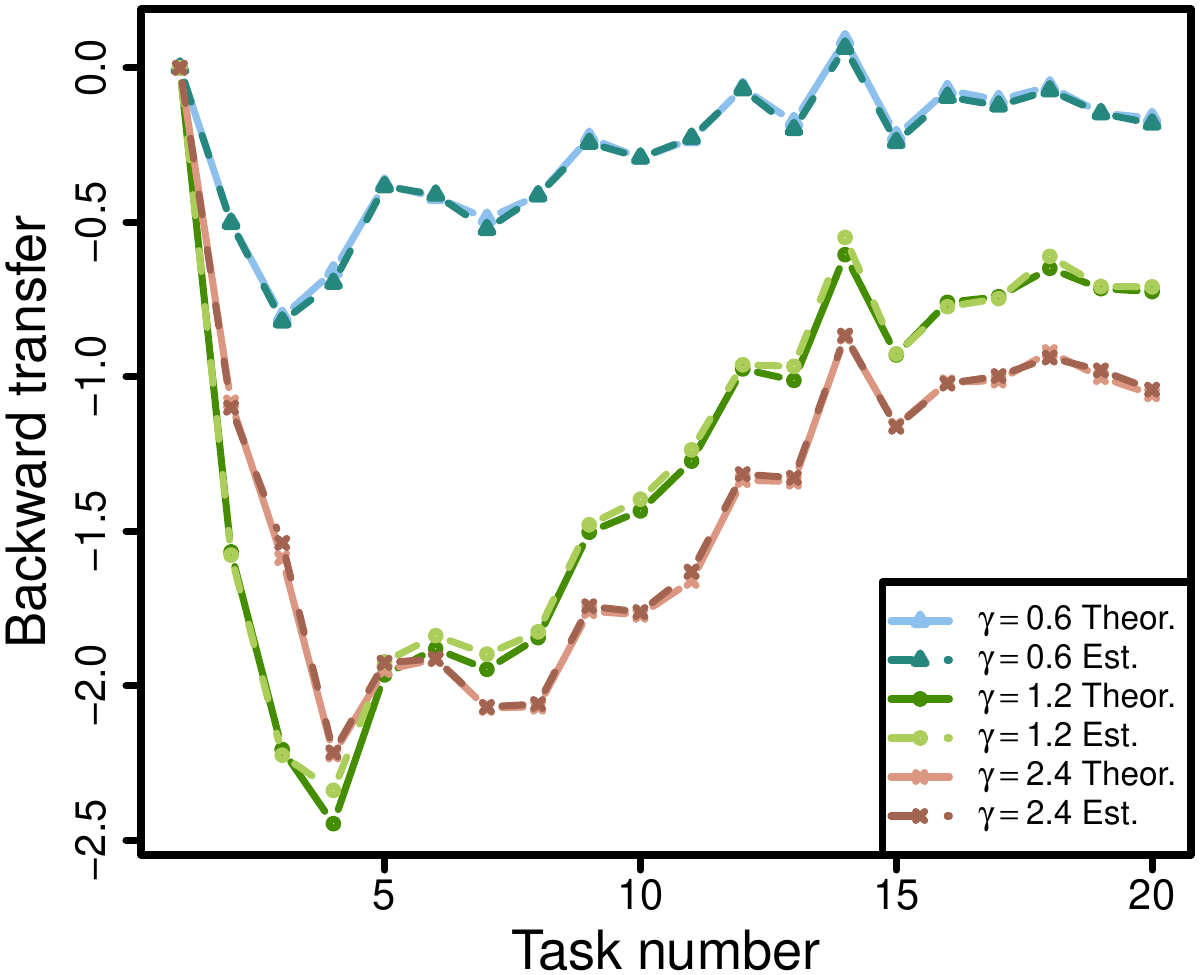}
    	\vspace{-0.13\textwidth}
        \caption{Backward transfer}
    \end{subfigure}
    \hspace{0.01\textwidth}
    \begin{subfigure}{0.31\textwidth}
        \includegraphics[width=\textwidth]{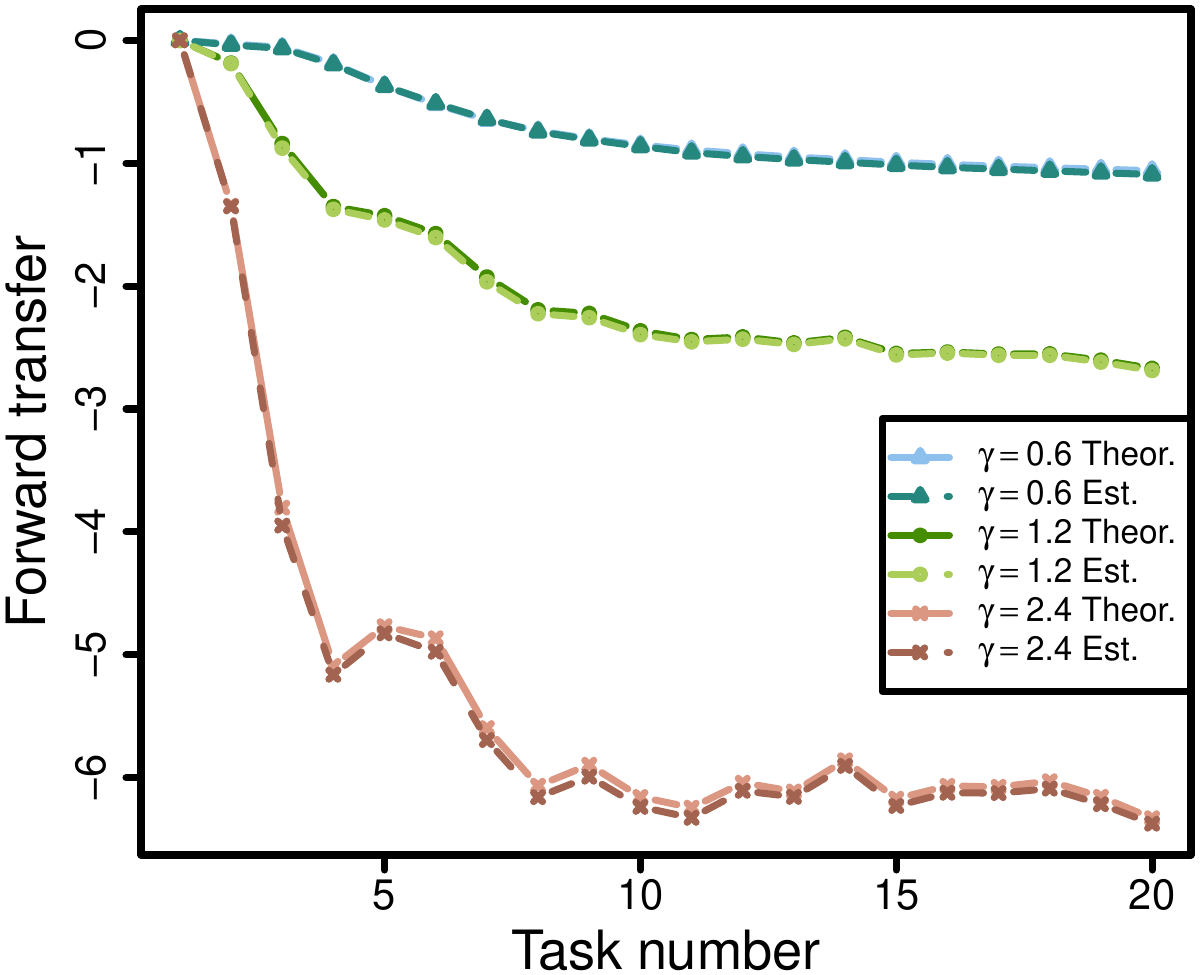}
    	\vspace{-0.13\textwidth}
        \caption{Forward transfer}
    \end{subfigure}\\
    \vspace{0.02\textwidth}
    \begin{subfigure}{0.31\textwidth}
    	\includegraphics[width=\textwidth]{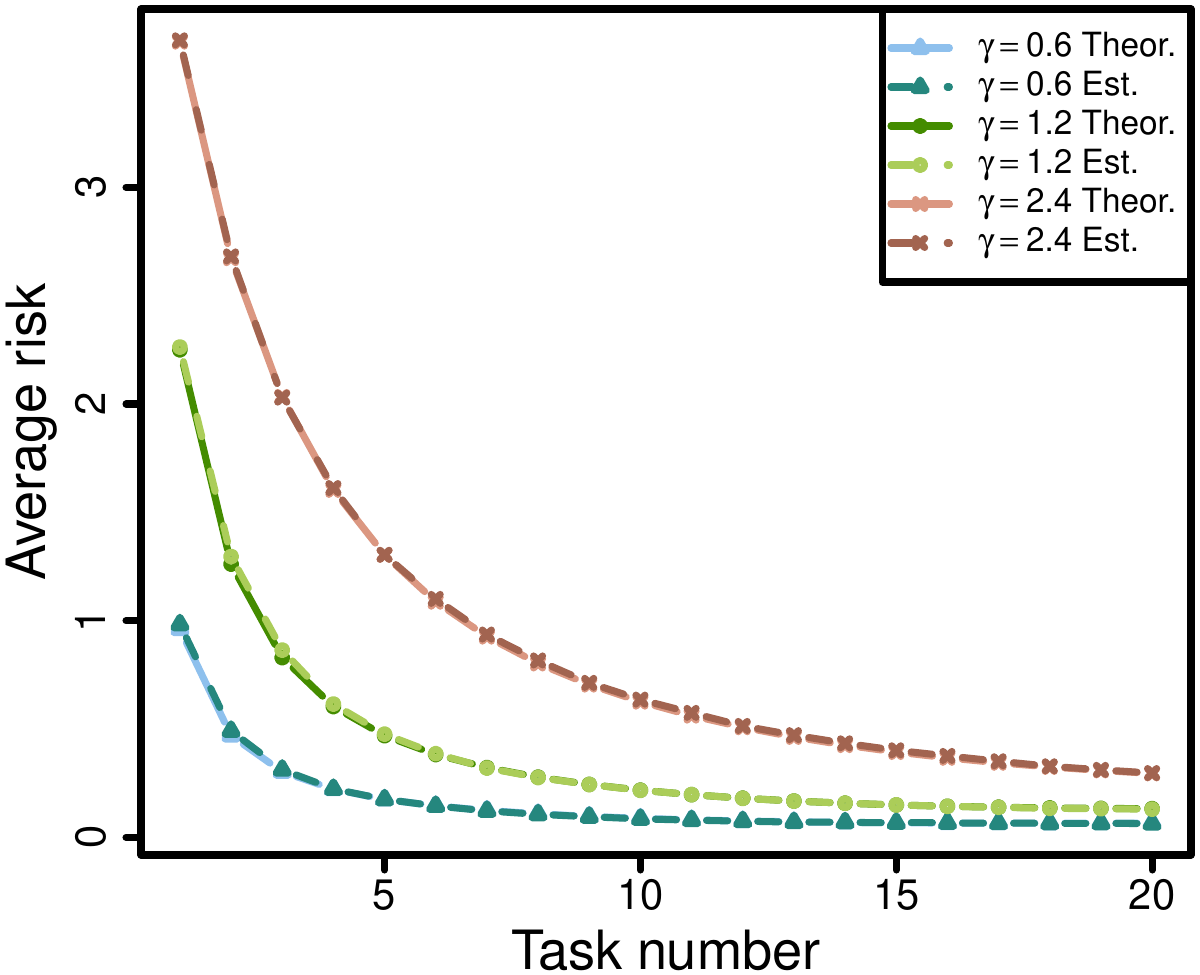}
    	\vspace{-0.13\textwidth}
        \caption{Average risk}
    \end{subfigure}
    \hspace{0.01\textwidth}
    \begin{subfigure}{0.31\textwidth}
        \includegraphics[width=\textwidth]{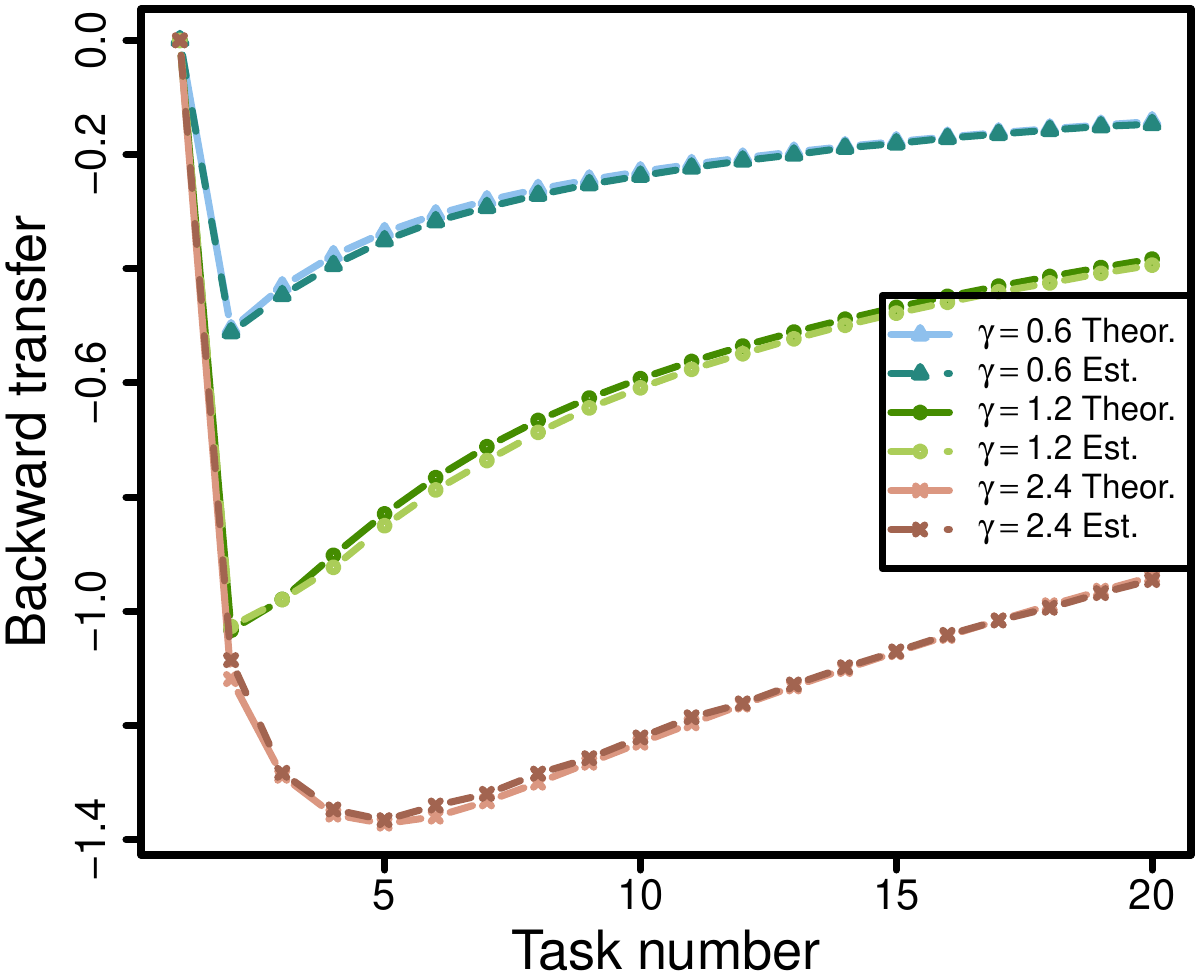}
    	\vspace{-0.13\textwidth}
        \caption{Backward transfer}
    \end{subfigure}
    \hspace{0.01\textwidth}
    \begin{subfigure}{0.31\textwidth}
        \includegraphics[width=\textwidth]{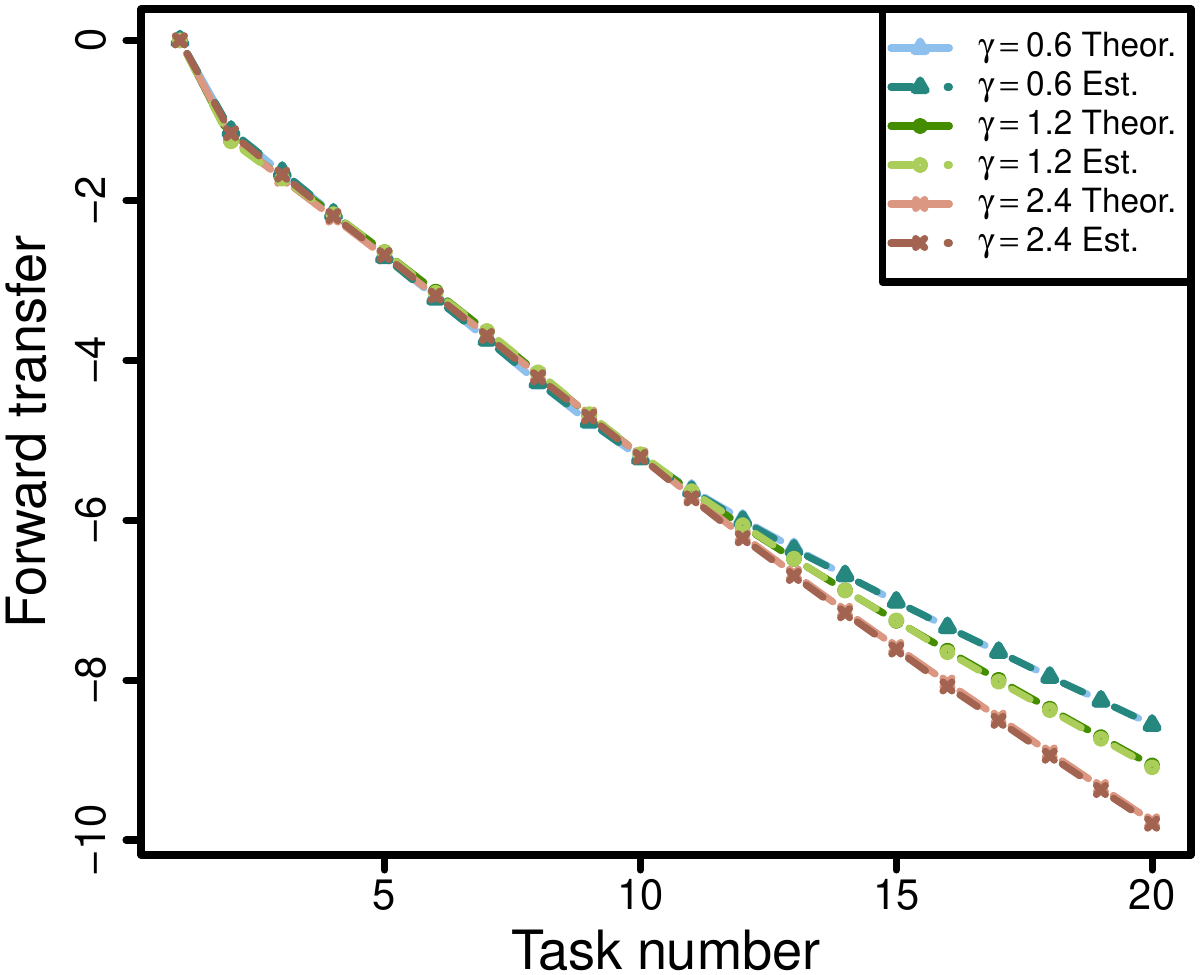}
    	\vspace{-0.13\textwidth}
        \caption{Forward transfer}
    \end{subfigure} \\
    \vspace{0.02\textwidth}
    \begin{subfigure}{0.31\textwidth}
    	\includegraphics[width=\textwidth]{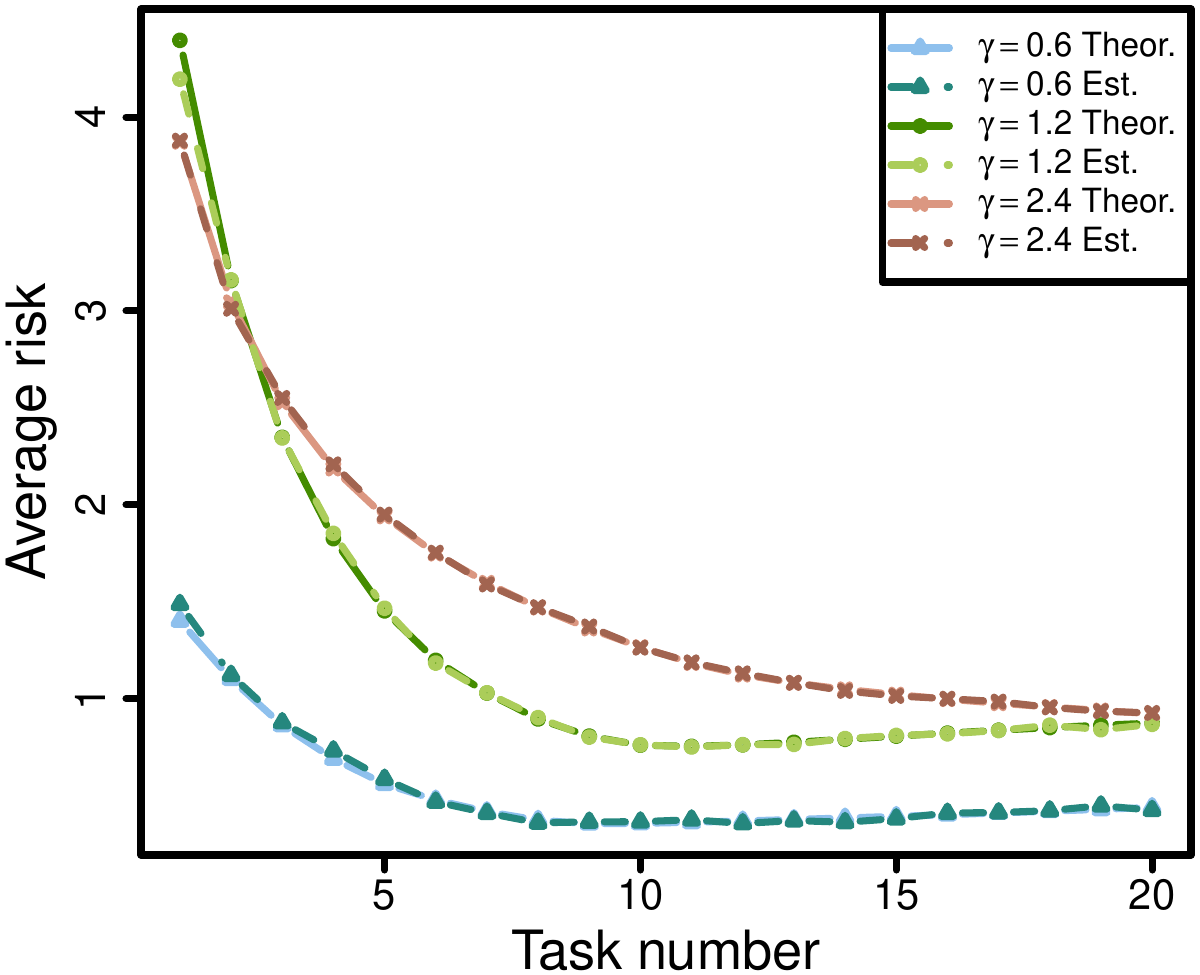}
    	\vspace{-0.13\textwidth}
        \caption{Average risk}
    \end{subfigure}
    \hspace{0.01\textwidth}
    \begin{subfigure}{0.31\textwidth}
        \includegraphics[width=\textwidth]{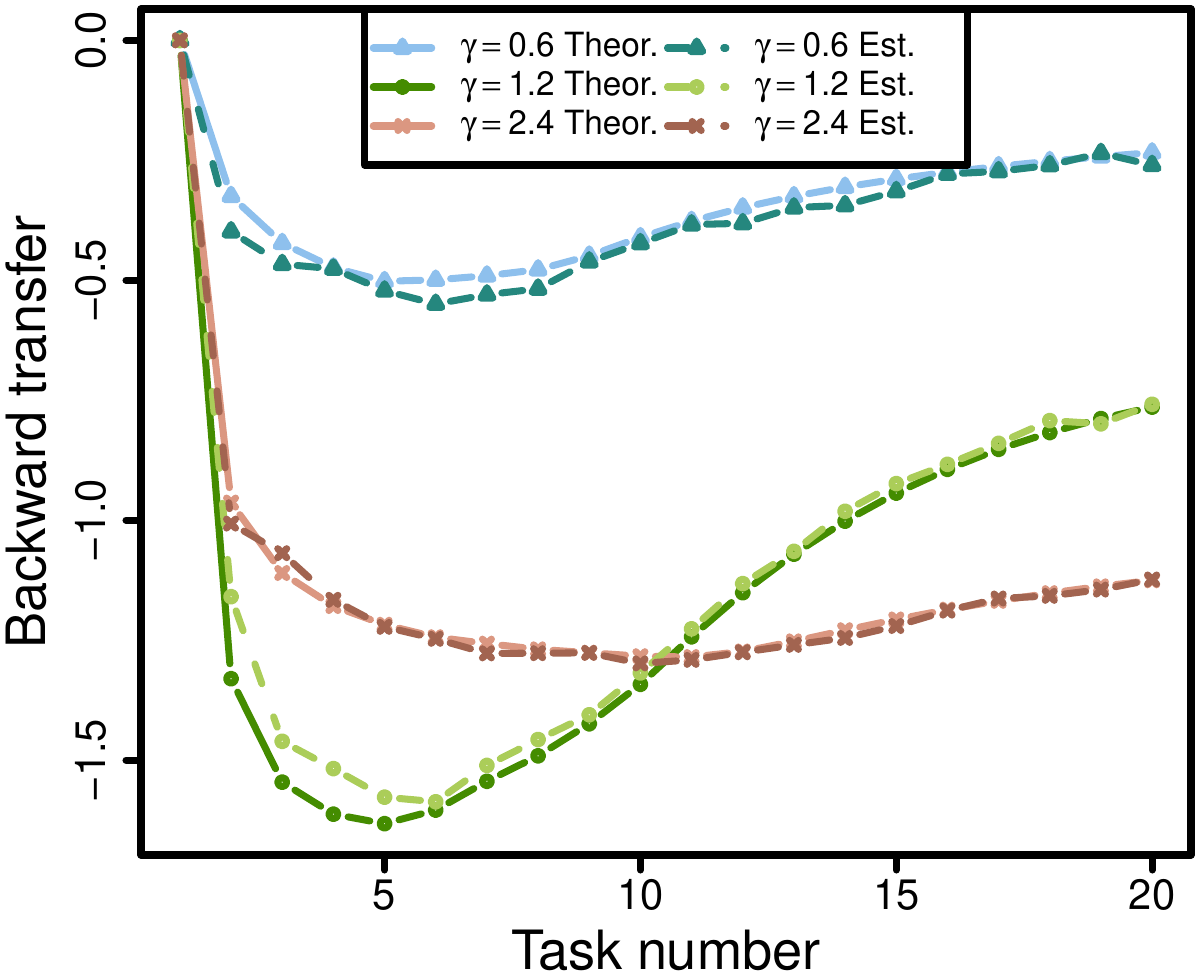}
    	\vspace{-0.13\textwidth}
        \caption{Backward transfer}
    \end{subfigure}
    \hspace{0.01\textwidth}
    \begin{subfigure}{0.31\textwidth}
        \includegraphics[width=\textwidth]{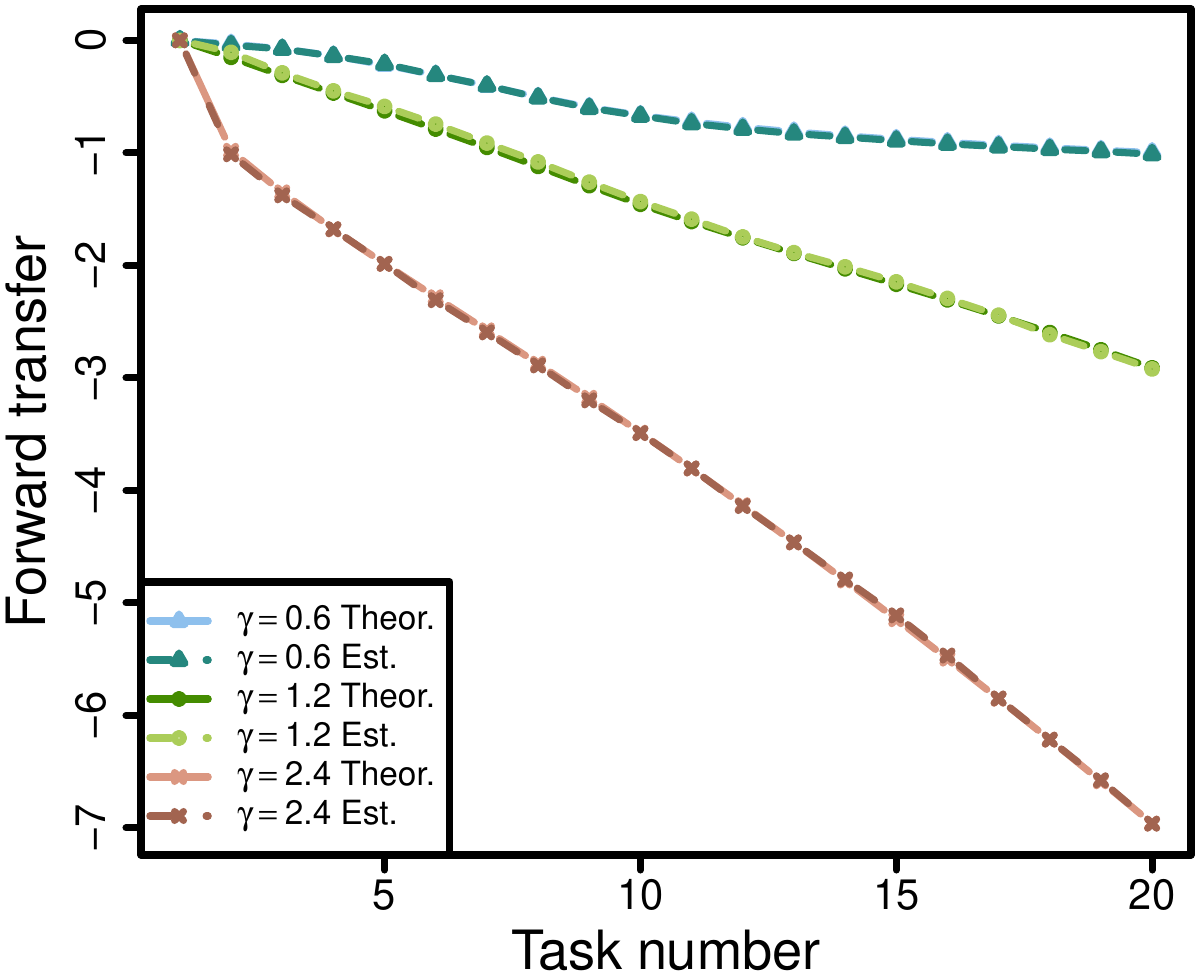}
    	\vspace{-0.13\textwidth}
        \caption{Forward transfer}
    \end{subfigure} 
    \caption{Risk curves with respect to the task number: In the first and second row, the covariance matrices have random block size with $\bm{\lambda}=\bm{\lambda}_{st}$ and $\bm{\lambda}=\bm{\lambda}_{st}/20$ respectively. In the third and fourth row, the covariance matrices have increasing block size with $\bm{\lambda}=\bm{\lambda}_{st}$ and $\bm{\lambda}=\bm{\lambda}_{st}/20$ respectively.}
    \label{102}
\end{figure*}

Our experimental results show that the performance of continual risk regression is highly sensitive to regularization parameters. With properly chosen regularization parameters, the average risk decreases nearly monotonically when the number of tasks is large enough, and the forward transfer becomes increasingly pronounced over time, providing that the task-specific covariance matrices exhibit reasonable structures. Conversely, inappropriate regularization parameters lead to performance reduction, resulting in an unstable relationship between the number of tasks and both average risk and transfer capacities.

\section{Conclusions}
\label{63}

In this paper, we establish a rigorous theoretical framework for analyzing continual ridge regression in high-dimensional linear models with random designs. More specifically, we derive exact asymptotic expressions for prediction risk, explicitly characterizing its dependence on model complexity and task similarity. Average risk, backward transfer, and forward transfer are formalized to evaluate stepwise generalization performance, with their asymptotic behaviors related to asymptotic prediction risk. Through three representative examples, we demonstrate how the performance of risk curves vary with structures of task-specific covariance and choices of regularization parameters, supported by numerical simulations. Our results enrich theoretical analysis for continual learning methods, offering interpretable insights into the interplay of dimensionality, task relationships, and regularization in mitigating catastrophic forgetting. 

An important direction for future work is to establish a data-driven choice for regularization parameters as well as its theoretical guarantees. Our present result can help evaluate the data-driven tuning strategies by providing the interpretable results for oracle regularization parameters. Additionally, the techniques in this work may be helpful for theoretical analysis on continual learning in nonlinear model, which is also considered in our future research.


\acks{This research was partially supported by the National Natural Science Foundation of China (Grant No. 12271286). Part of Wenqing Su’s work was conducted at Tsinghua University during the author’s doctoral studies.}



\appendix

\section{Technical Results for Main Proofs}
\subsection{Deterministic Equivalent}
Before proving the main theorem, we introduce a technique of deterministic equivalence in random matrix theory \citep{10.1214/105051606000000925, MR2817033}. 

\begin{definition}
 Let $A_n, B_n \in \mathbb{R}^{n \times n}$ be sequences of random or deterministic symmetric random matrices, we say $A_n, B_n$ are equivalent, if for any sequence of deterministic matrices $C_n \in \mathbb{R}^{n \times n}$  such that
 \begin{align*}
 \limsup_{n}\Vert C_n\Vert_{op} < \infty,
\end{align*}
and for any sequence of deterministic vectors $v_n \in \mathbb{R}^{n}$  such that
 \begin{align*}
 \limsup_{n}\Vert v_n\Vert < \infty,
\end{align*}
we have, almost surely,
\begin{align*}
 \frac{1}{n}{\rm Tr}[C_n(A_n-B_n)] \rightarrow 0, \quad v_n^\top(A_n-B_n)v_n \rightarrow 0.
\end{align*}
Here the operator norm is defined by  $\Vert A \Vert_{op}:=\sqrt{\lambda_{max}(A^\top A)}$ for any real matrix $A \in \mathbb{R}^{p \times q}$.
\end{definition}

We write $A_n \asymp B_n$ if $A_n, B_n$ are equivalent. For random matrix $A_n$, if there exists a (sequence of) deterministic matrix $\bar{A}_n$ such that $A_n \asymp \bar{A}_n$, then $\bar{A}_n$ is said to be a deterministic equivalent for $A_n$.

There are some useful rules of calculus for matrix equivalents. 
\begin{proposition}
~
\\
1. If $A_n \asymp B_n, C_n \asymp D_n$, then $A_n+C_n \asymp B_n+D_n$. \\
2. If $A_n \asymp B_n$, $E_n$ is a sequence of deterministic matrix such that $\Vert E_n \Vert_{op} < \infty$, then $E_nA_nE_n^\top \asymp E_nB_nE_n^\top$. \\
3. If $A_n \asymp B_n$, then $\frac{1}{n} {\rm Tr}[A_n -B_n]\rightarrow 0$, almost surely.
\label{16}
\end{proposition}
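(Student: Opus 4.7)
The plan is to verify in each case the two defining requirements (the normalized-trace limit against a bounded-operator-norm test matrix, and the quadratic-form limit against a bounded-Euclidean-norm test vector) by reducing them directly to the same statements for $A_n-B_n$ (and, in Part 1, also for $C_n-D_n$). The reductions rely only on elementary linear-algebraic identities: linearity of the trace, cyclicity of the trace, and the operator-norm submultiplicativity $\Vert E_n^\top M_n E_n\Vert_{op}\le \Vert E_n\Vert_{op}^2\Vert M_n\Vert_{op}$, together with the fact that products of deterministic matrices and vectors remain deterministic.

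For Part 1, I would fix an arbitrary deterministic test matrix $M_n$ with bounded operator norm and a deterministic test vector $v_n$ with bounded Euclidean norm, then split linearly,
\[
\tfrac{1}{n}{\rm Tr}[M_n((A_n+C_n)-(B_n+D_n))] = \tfrac{1}{n}{\rm Tr}[M_n(A_n-B_n)] + \tfrac{1}{n}{\rm Tr}[M_n(C_n-D_n)],
\]
and treat the quadratic form analogously. Both summands vanish almost surely by the two equivalences $A_n\asymp B_n$ and $C_n\asymp D_n$. For Part 2, I would apply cyclicity of the trace to write ${\rm Tr}[M_n(E_nA_nE_n^\top-E_nB_nE_n^\top)] = {\rm Tr}[(E_n^\top M_n E_n)(A_n-B_n)]$ and then view $E_n^\top M_n E_n$ as a new test matrix, which is still deterministic and still has bounded operator norm by the hypothesis on $E_n$; the equivalence $A_n\asymp B_n$ then supplies the trace limit. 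The quadratic form is handled by the substitution $v_n\mapsto E_n^\top v_n$, whose Euclidean norm stays bounded. For Part 3, the conclusion is immediate by taking $M_n=I_n$ (operator norm $1$) in the trace condition built into the definition of $A_n\asymp B_n$.

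There is essentially no obstacle; each item is direct unpacking of the definition. The only point that requires a word of attention is in Part 2, where one must check that the transformed objects $E_n^\top M_n E_n$ and $E_n^\top v_n$ genuinely qualify as admissible test objects under the definition, i.e., remain deterministic with bounded operator norm and bounded Euclidean norm, respectively. Both properties are transparent from the standing assumption that $E_n$ is deterministic with $\limsup_n\Vert E_n\Vert_{op}<\infty$, so the argument goes through cleanly without any probabilistic input beyond what is already contained in the hypotheses.
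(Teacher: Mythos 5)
Your argument is correct, and the paper itself states Proposition~\ref{16} without proof, treating these as routine closure properties of the deterministic-equivalence relation; your direct unpacking of the definition is exactly the proof one would supply. All three parts are handled cleanly: Part~1 by linearity of the trace and of quadratic forms, Part~2 by cyclicity of the trace together with the observation that $E_n^\top M_n E_n$ and $E_n^\top v_n$ remain deterministic admissible test objects with $\Vert E_n^\top M_n E_n\Vert_{op}\le \Vert E_n\Vert_{op}^2\Vert M_n\Vert_{op}$ and $\Vert E_n^\top v_n\Vert\le \Vert E_n\Vert_{op}\Vert v_n\Vert$, and Part~3 by taking $M_n=I_n$ as the test matrix. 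One might add, for completeness, that $E_nA_nE_n^\top$ and $E_nB_nE_n^\top$ are again symmetric (since $A_n, B_n$ are), so the equivalence relation is well-defined on the conjugated pair; this is immediate and does not affect the argument.
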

\begin{remark}
The notion of matrix equivalent varies in different literature. For instance, \cite{10.1214/20-AOS1984} relaxed the restriction that $A_n,B_n$ are symmetric, and instead required that
\begin{align*}
 {\rm Tr}[\Theta_n(A_n-B_n)] \rightarrow 0
\end{align*} 
almost surely for any $\Theta_n$ satisfying
\begin{align}
 \limsup_{n}\Vert \Theta_n\Vert_{tr} < \infty,
 \label{23}
\end{align}
where the trace norm is defined by $\Vert A \Vert_{tr}={\rm Tr}((A^\top A)^{1/2})$ for any matrix $A$. This condition is stronger than ours since $\Theta_n=\frac{1}{n}C_n$ and $\Theta_n=v_nv_n^\top$ are all special cases of (\ref{23}). In our work, we are mainly interested in the traces and quadratic forms of symmetric random matrices, therefore, our requirement in the definition of matrix equivalent is enough for analysis. Yet, the rules of calculus in Proposition \ref{16} are commonly true for any reasonable definition of matrix equivalent.
\end{remark}

After introducing the concept of matrix equivalents, the Mar\v{c}enko-Pastur Theorem about the sample covariance matrix $\hat{\Sigma}$ can be modified as the form of deterministic equivalent for resolvent matrix $Q(z)=(\hat{\Sigma}-zI_p)^{-1}$.

\begin{theorem}[Theorem 2.6, \cite{Couillet_Liao_2022}]
Assume $X=Z\Sigma^{1/2}$, where $Z \in \mathbb{R}^{n \times p}$ has i.i.d. entries with zero mean, unit variance and finite eighth-order moment. Suppose $\Sigma \in \mathbb{R}^{p \times p}$ is a deterministic positive semi-definite matrix with bounded operator norm, then as $n,p \rightarrow \infty$ with $p/n \rightarrow \gamma \in (0, \infty)$, we have
\begin{align}
 Q(z) \asymp \bar{Q}(z):=-\frac{1}{z}(I_p+\tilde{m}_{\gamma,p}(z)\Sigma)^{-1},
 \label{44}
\end{align}
where $\tilde{m}_{\gamma,p}(z)$ is the unique solution of
\begin{align}
\tilde{m}_{\gamma,p}(z)=\Big( -z+{\rm Tr}\big[\Sigma(I_p+\tilde{m}_{\gamma,p}(z)\Sigma)^{-1}\big]\Big), \quad(z,\tilde{m}_{\gamma,p}(z))\in \mathbb{C}^+\times\mathbb{C}^+,
\label{26}
\end{align}
Define $\{\tilde{m}_{\gamma,p}(z),z \in (-\infty,0)\}$ by (\ref{26}) and the continuity of $\tilde{m}_{\gamma,p}(z)$, then (\ref{44}) is true for $z \in (-\infty,0)$. Besides, if ESD $F_\Sigma$ converges weakly to some limit distribution $H$, then as $p \rightarrow \infty$, we have $\tilde{m}_{\gamma,p}(z) \rightarrow \tilde{m}_{H,\gamma}(z)$, 
where $\tilde{m}_{H,\gamma}(z)$ is defined by the unique solution of 
\begin{align}
\tilde{m}_{H,\gamma}(z)=\Big( -z+\gamma\int \frac{t}{1+\tilde{m}_{H,\gamma(z)}t}dH(t) \Big)^{-1}, \quad(z,\tilde{m}_{\gamma,p}(z))\in \mathbb{C}^+\times\mathbb{C}^+.
\end{align}
\label{8}
\end{theorem}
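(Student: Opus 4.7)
The plan is to prove this by the classical Stieltjes-transform leave-one-out method, which is the standard route to the Mar\v{c}enko--Pastur theorem in its resolvent form. First I would work with both $Q(z)$ and the $n \times n$ dual resolvent $\tilde{Q}(z) := (\tfrac{1}{n} Z \Sigma Z^\top - z I_n)^{-1}$, linked by the identity $\tfrac{1}{n} \Sigma^{1/2} Z^\top \tilde{Q}(z) Z \Sigma^{1/2} = I_p + z Q(z)$. Writing $z_i^\top$ for the $i$-th row of $Z$ and $\tilde{Q}_{-i}(z)$ for the dual resolvent with row $i$ deleted, the Sherman--Morrison formula gives a rank-one expansion of each diagonal entry $[\tilde{Q}(z)]_{ii}$ in terms of the quadratic form $\tfrac{1}{n} z_i^\top \Sigma^{1/2} Q_{-i}(z) \Sigma^{1/2} z_i$.

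The trace (Hanson--Wright type) concentration lemma for vectors with i.i.d. entries of finite eighth moment then shows that each such quadratic form concentrates around $\tfrac{1}{n} \mathrm{Tr}(\Sigma Q_{-i}(z))$, and a rank-one perturbation bound lets me replace $Q_{-i}(z)$ by $Q(z)$ at negligible cost. Averaging the resulting approximations over $i$ produces a scalar self-consistent equation whose unique $\mathbb{C}^+$-valued solution is exactly $\tilde{m}_{\gamma,p}(z)$ as defined in (\ref{26}). Plugging this back into the matrix form of the resolvent identity gives the candidate deterministic equivalent $\bar{Q}(z) = -\tfrac{1}{z} (I_p + \tilde{m}_{\gamma,p}(z) \Sigma)^{-1}$.

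To upgrade from the scalar Stieltjes transform to the matrix equivalent (\ref{44}) in the sense defined at the start of this appendix, I would, for any deterministic $C_n$ with $\limsup \Vert C_n \Vert_{op} < \infty$ and any deterministic $v_n$ with $\limsup \Vert v_n \Vert < \infty$, expand $\tfrac{1}{p}\mathrm{Tr}[C_n (Q(z) - \bar{Q}(z))]$ and $v_n^\top (Q(z) - \bar{Q}(z)) v_n$ as telescoping sums of rank-one modifications of $\hat{\Sigma}$, and bound each summand using the same quadratic-form concentration. The finite eighth moment is exactly what makes the probabilistic bounds summable, so Borel--Cantelli converts them into almost-sure convergence. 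Finally, weak convergence $F_\Sigma \Rightarrow H$ together with the uniform boundedness of the eigenvalues of $\Sigma$ lets me pass to the limit in (\ref{26}) and obtain $\tilde{m}_{\gamma,p}(z) \to \tilde{m}_{H,\gamma}(z)$, and analyticity of both sides extends everything from $\mathbb{C}^+$ to $(-\infty,0)$ by continuity and the identity theorem.

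The main obstacle will be the uniform bookkeeping: controlling $\Vert Q_{-i}(z) \Vert_{op}$ for all $i$ simultaneously when $z$ is only assumed to lie off the positive real axis, verifying uniqueness of the fixed point of (\ref{26}) so that the scalar limit is actually well-defined, and ensuring the cumulative error from $n$ rank-one replacements is $o(1)$ almost surely rather than merely in probability. Once the quadratic-form concentration estimate and the uniqueness of the fixed point are in hand, the remainder is essentially algebra on resolvents.
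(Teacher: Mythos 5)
The paper does not actually prove this statement: Theorem~\ref{8} is cited directly from Couillet and Liao (2022, Theorem~2.6), and the appendix contains no independent derivation of it. The paper's only original resolvent work is the proof of Theorem~\ref{25}, which takes Theorem~\ref{8} as a black box and bootstraps it to a second-order deterministic equivalent for $QAQ$ via the martingale inequality of Lemma~\ref{33}. So there is no ``paper's own proof'' to compare against; you are effectively reconstructing the argument that lives in the cited reference.

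That said, your sketch is a faithful outline of the standard route and would succeed. The push-through identity $\tfrac{1}{n}\Sigma^{1/2}Z^\top\tilde Q Z\Sigma^{1/2} = I_p + zQ$, the Sherman--Morrison expansion in $\tilde Q_{-i}$, quadratic-form concentration, rank-one replacement, and the fixed-point equation are exactly the pieces one needs, and your observation that the finite eighth moment is what makes the tail estimates Borel--Cantelli summable matches how the paper itself uses moment assumptions elsewhere (e.g., Lemma~\ref{32} and Lemma~\ref{34} are polynomial-moment concentration bounds, not sub-Gaussian). Two small points of imprecision: calling the quadratic-form bound ``Hanson--Wright type'' slightly overstates the moment assumption --- what is available and what the paper's Lemma~\ref{32} provides is a polynomial-moment analogue, not a sub-Gaussian tail --- and the step where you upgrade from trace convergence $\tfrac{1}{p}\mathrm{Tr}[C_n(Q-\bar Q)]\to 0$ to the anisotropic statement $v_n^\top(Q-\bar Q)v_n\to 0$ required by the paper's definition of $\asymp$ is genuinely harder than the trace version and deserves its own martingale argument in the style of Step~1 of the Theorem~\ref{25} proof, rather than being folded into ``the same quadratic-form concentration.'' Neither issue is a conceptual gap, just places where the bookkeeping you flag at the end is heavier than your prose suggests.
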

Here we note that, if $\Sigma=I_p$, the results of Theorem \ref{8} reduce to
\begin{align}
 Q(z) \asymp \bar{Q}(z):=m_\gamma(z)I_p,
 \label{12}
\end{align}
where $m_{\gamma}(z)$ is the Stieltjes transform of limiting ESD. It is the unique solution of equation 
\begin{align}
z\gamma m_\gamma(z)-(1-\gamma-z)m_\gamma(z)+1=0, \quad(z,\tilde{m}_{\gamma,p}(z))\in \mathbb{C}^+\times\mathbb{C}^+,
\label{45}
\end{align}
which coincides with (\ref{6}). 

In the rest of this subsection, we provide a deterministic equivalent for the quadratic form of resolvent matrix. This result will be applied repeatedly to deal with the products of resolvent matrix, which frequently appear in the expressions of bias and variance terms. For proof convenience, we consider $z\in (-\infty,0)$ for the resolvent matrix $Q(z)$.

\begin{theorem}
Assume $X=Z\Sigma^{1/2}$, where $Z \in \mathbb{R}^{n \times p}$ has i.i.d. entries with zero mean, unit variance and finite 16th-order moment. Let $\lambda>0$ be a fixed real number, $Q(-\lambda)=(\frac{1}{n}X^\top X+\lambda I_p)^{-1}$ be the resolvent matrix, $A \in \mathbb{R}^{p \times p}$ is a symmetric deterministic matrix or symmetric random matrix independent of $Q(-\lambda)$, and having bounded operator norm. Suppose $\Sigma \in \mathbb{R}^{p \times p}$ is a deterministic positive semi-definite matrix with bounded operator norm, then as $n,p \rightarrow \infty$ with $p/n \rightarrow \gamma \in (0, \infty)$, we have
\begin{align}
 Q(-\lambda)AQ(-\lambda) \asymp \bar{Q}A\bar{Q}+\frac{1}{n}{\rm Tr}\big[\Sigma \bar{Q} A\bar{Q}\big] \cdot \Big\{\big[1+\frac{1}{n}{\rm Tr}(\Sigma \bar{Q})\big]^2-\frac{1}{n}{\rm Tr}(\Sigma \bar{Q})^2\Big\}^{-1}\cdot \bar{Q}\Sigma\bar{Q},
\end{align}
\label{25}
where $\bar{Q}:=\bar{Q}(-\lambda)$ is defined in Theorem \ref{8}.
\end{theorem}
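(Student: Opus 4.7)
The plan is to follow the standard leave-one-out / Sherman-Morrison strategy of random matrix theory, combined with the quadratic-form trace lemma, to produce a self-consistent equation for the unknown deterministic equivalent $R:=\overline{QAQ}$ and then verify the ansatz $R = \bar Q A \bar Q + c\cdot \bar Q \Sigma \bar Q$ by solving for the scalar $c$.

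First I would exploit the resolvent identity $Q^{-1}=\hat\Sigma+\lambda I$ to obtain
\[
\lambda QAQ \;=\; QA - QAQ\hat\Sigma \;=\; QA - \tfrac{1}{n}\sum_{i=1}^n QAQ\,x_ix_i^\top,
\]
where $x_i^\top$ is the $i$-th row of $X$. Introducing the leave-one-out resolvent $Q_{-i}=(\hat\Sigma-\tfrac{1}{n}x_ix_i^\top+\lambda I)^{-1}$ and applying Sherman--Morrison, $Qx_i=Q_{-i}x_i/(1+d_i)$ with $d_i=\tfrac{1}{n}x_i^\top Q_{-i}x_i$, a direct substitution yields
\[
QAQ\,x_i \;=\; \frac{Q_{-i}AQ_{-i}x_i}{1+d_i} \;-\; \frac{\tfrac{1}{n}x_i^\top Q_{-i}AQ_{-i}x_i}{(1+d_i)^2}\,Q_{-i}x_i.
\]

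Next I would test the identity against $\tfrac{1}{n}{\rm Tr}(C\,\cdot\,)$ and $v^\top(\cdot)v$ for any deterministic $C$ of bounded operator norm and any bounded $v$. The trace/quadratic-form concentration lemma for $x_i=\Sigma^{1/2}z_i$ (which needs the $16$-th moment assumption, strengthening the $8$-th moment used in Theorem \ref{8}, since a product of two resolvents produces quartic quantities in $z_i$) replaces $\tfrac{1}{n}x_i^\top B x_i$ by $\tfrac{1}{n}{\rm Tr}(\Sigma B)$ for any $B$ bounded and independent of $x_i$, and Theorem \ref{8} gives $Q_{-i}\asymp \bar Q$. Writing $\alpha := \tfrac{1}{n}{\rm Tr}(\Sigma \bar Q)$ and letting $R$ denote the sought deterministic equivalent of $QAQ$ (so $\tfrac{1}{n}{\rm Tr}(\Sigma Q_{-i}AQ_{-i})\to \tfrac{1}{n}{\rm Tr}(\Sigma R)$ up to $o(1)$), this produces the self-consistent relation
\[
\lambda\,\tfrac{1}{n}{\rm Tr}(CR) + \tfrac{1}{1+\alpha}\,\tfrac{1}{n}{\rm Tr}(\Sigma CR) - \tfrac{\tfrac{1}{n}{\rm Tr}(\Sigma R)}{(1+\alpha)^2}\,\tfrac{1}{n}{\rm Tr}(\Sigma C\bar Q) \;=\; \tfrac{1}{n}{\rm Tr}(C\bar Q A) + o(1),
\]
uniformly over $C$ with bounded operator norm.

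Finally I would verify the ansatz $R = \bar Q A \bar Q + c\,\bar Q \Sigma \bar Q$ by solving for $c$. The key algebraic input, extracted from Theorem \ref{8} and the Silverstein equation (\ref{26}) by taking normalized traces, is the identity $\bar Q^{-1}=\lambda I+\Sigma/(1+\alpha)$, equivalently $\lambda\bar Q+\bar Q\Sigma/(1+\alpha)=I$ (this corresponds to $\tilde m_{\gamma,p}(-\lambda)=1/[\lambda(1+\alpha)]$). Multiplying this identity by $A\bar Q$ and by $\Sigma\bar Q$ produces two ``rewrite'' formulas for $\lambda\bar Q A\bar Q$ and $\lambda\bar Q\Sigma\bar Q$ which, after substitution into the self-consistent equation, cancel the $\tfrac{1}{n}{\rm Tr}(C\bar Q A)$ contributions. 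Matching the remaining coefficients (this is where the ansatz being of the form $\bar Q A\bar Q+c\,\bar Q\Sigma \bar Q$ is essential, since both summands transform cleanly under the identity) reduces the whole problem to a single scalar equation, whose solution is exactly
\[
c \;=\; \frac{\tfrac{1}{n}{\rm Tr}[\Sigma \bar Q A\bar Q]}{\bigl[1+\tfrac{1}{n}{\rm Tr}(\Sigma\bar Q)\bigr]^2 - \tfrac{1}{n}{\rm Tr}[(\Sigma\bar Q)^2]}.
\]
The quadratic-form version $v^\top QAQ\,v - v^\top R v\to 0$ comes out of the same manipulations with $v$ playing the role of the test object.

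The main obstacle will be closing the self-referential step in which $Q_{-i}AQ_{-i}$ is asserted to have, to leading order, the same trace-level behaviour as $R$ itself, i.e., that dropping one row does not alter the deterministic equivalent. The customary remedy is to establish that $\tfrac{1}{n}{\rm Tr}(\Sigma Q_{-i}AQ_{-i})$ and $\tfrac{1}{n}{\rm Tr}(\Sigma QAQ)$ differ by $o(1)$ uniformly in $i$, which is the precise source of the denominator $(1+\alpha)^2-\tfrac{1}{n}{\rm Tr}((\Sigma\bar Q)^2)$ in $c$; the rest of the argument is a careful but routine bookkeeping of $o(1)$ errors using the $16$-th moment concentration.
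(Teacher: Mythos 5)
Your proposal captures the correct family of arguments (leave--one--out plus Sherman--Morrison, quadratic--form concentration, a linear self--consistent scalar equation in $c$), and your algebra checks out: the identity $\bar Q^{-1}=\lambda I+\Sigma/(1+\alpha)$ is exactly $\lambda(I_p+\tilde m\Sigma)$ with $\lambda\tilde m=1/(1+\alpha)$, plugging the ansatz into your self-consistent relation does collapse to a scalar equation, and the value of $c$ you extract matches the paper's. The paper reaches the same scalar equation, but through a somewhat different organization: it first proves the concentration $QAQ\asymp\mathbb E[QAQ]$ as a standalone step (via a martingale--difference decomposition, Burkholder's inequality, Markov, and Borel--Cantelli), and only then computes $\mathbb E[QAQ]$ by expanding $Q-\bar Q$ and symmetrizing the resulting matrix equation, so the form $\bar Q A\bar Q + (\text{scalar})\cdot\bar Q\Sigma\bar Q$ emerges directly rather than being posited as an ansatz. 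Your approach of testing against arbitrary $C$ and guessing the form is an equally valid route, and buys a more economical derivation once the concentration lemmas are in place, at the cost of having to argue uniqueness/uniformity over the test matrices $C$ and vectors $v$.

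The genuine gap is that you defer the concentration arguments to ``routine bookkeeping,'' and this is where most of the paper's effort lives. Two things in particular are not routine. First, the martingale concentration of $\tfrac{1}{p}{\rm Tr}[C(QAQ-\mathbb E[QAQ])]$ and of $v^\top(QAQ-\mathbb E[QAQ])v$ requires splitting each martingale increment into two pieces, bounding one by $O(n^{-2k})\|z_i\|^{2k}$ and the other deterministically by $O(n^{-1})$, and invoking a Burkholder/Rosenthal inequality; the $16$-th moment is needed there (and again later for an $8$-th power tail bound on $d_i$), not merely because ``quartic quantities appear.'' Second, when you replace $\tfrac1n x_i^\top Q_{-i}x_i$ by $\alpha$ inside the Sherman--Morrison denominator, the resulting error matrix involves a random diagonal $D=\mathrm{diag}(d_i)$; controlling the operator norm of $\tfrac1n\mathbb E[QAQ X^\top D X]\bar Q$ plus its transpose requires a split $n^{\epsilon/2}\cdot n^{-\epsilon/2}$ and a tail bound $\mathbf P(d_1^2>t)\le Ct^{-4}n^{-4}$, with a non--obvious choice $\epsilon=\theta=3/8$ to make the error $o_{\|\cdot\|}(1)$. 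Your sketch is a correct blueprint for the final scalar equation, but without some version of these two bounds the self-consistent relation does not close rigorously, so the proof is incomplete as written.
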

If we take $\Sigma = I_p$, the deterministic equivalent of resolvent matrix becomes (\ref{12}), thus the following result is immediately derived from Theorem \ref{25}.

\begin{corollary}
Assume $X \in \mathbb{R}^{n \times p}$ has i.i.d. entries with zero mean, unit variance and finite 16th-order moment. Let $\lambda>0$ be a fixed real number, $Q(-\lambda)=(\frac{1}{n}X^\top X+\lambda I_p)^{-1}$ be the resolvent matrix, $A \in \mathbb{R}^{p \times p}$ is a symmetric deterministic matrix or a symmetric random matrix independent of $Q(-\lambda)$, and $\Vert A \Vert_{op}$ is bounded by a constant. Then as $n,p \rightarrow \infty$ with $p/n \rightarrow \gamma \in (0, \infty)$, we have
\begin{align}
 Q(-\lambda)AQ(-\lambda) \asymp m_\gamma^2(-\lambda)A+\frac{1}{n}{\rm Tr}A \cdot \frac{m_\gamma'(-\lambda)m_\gamma^2(-\lambda)}{(1+\gamma m_\gamma(-\lambda))^2}I_p.
\end{align}
Taking $A=I_p$, the equation above becomes
\begin{align}
 Q^2(-\lambda) \asymp  m_\gamma'(-\lambda)I_p.
 \label{13}
\end{align}
\label{9}
\end{corollary}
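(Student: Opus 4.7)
My plan is to establish the equivalent by positing an ansatz of the correct structural form and determining its unknown scalar coefficient via self-consistency. Theorem~\ref{8} only gives $Q\asymp\bar Q$, and by Proposition~\ref{16} this propagates to $QAQ\asymp\bar Q A\bar Q$ only modulo cross-fluctuations; the second, ``correction'' term in the claimed equivalent is exactly the $O(1)$ contribution of these fluctuations when both copies of $Q$ vary in tandem around $\bar Q$. A symmetry and dimensional-analysis argument suggests that this contribution should be rank-one in ``scalar space'' along $\bar Q\Sigma\bar Q$, so I would posit
\begin{align*}
QAQ\asymp \bar Q A\bar Q+\alpha\,\bar Q\Sigma\bar Q,
\end{align*}
for a scalar $\alpha$ to be determined, and then reduce the problem to identifying $\alpha$.

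The first step is to combine the resolvent identity with the form of $\bar Q^{-1}=\lambda I+\lambda\tilde m_{\gamma,p}(-\lambda)\Sigma$ from Theorem~\ref{8}:
\begin{align*}
Q-\bar Q=Q\bigl(\lambda\tilde m_{\gamma,p}(-\lambda)\Sigma-\tfrac{1}{n}X^\top X\bigr)\bar Q.
\end{align*}
Plugging this into the decomposition
\begin{align*}
QAQ-\bar Q A\bar Q=(Q-\bar Q)AQ+\bar Q A(Q-\bar Q),
\end{align*}
and iterating once more with $Q\asymp\bar Q$ on the remaining $Q$'s, the deterministic pieces coming from $\lambda\tilde m_{\gamma,p}\Sigma$ reproduce exactly $\bar Q\Sigma\bar Q$-type contributions, while the pieces containing $\tfrac{1}{n}X^\top X$ need to be resolved by unpacking the entries of $Z$. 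I would do this with Stein's integration-by-parts in the Gaussian case and the generalized cumulant expansion in the non-Gaussian case; the finite sixteenth moment assumption is what lets such expansions be truncated with almost-surely negligible remainders in the sense of the matrix-equivalent norm, since the quadratic forms in $Z$ appearing here are effectively fourth order in $Z$ and thus square the eighth-moment requirement of Theorem~\ref{8}.

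The second step is to identify $\alpha$ by imposing self-consistency. Apply $\tfrac{1}{n}{\rm Tr}(\Sigma\,\cdot\,)$ to both sides of the ansatz, use the convergence obtained from the previous step to express $\tfrac{1}{n}{\rm Tr}(\Sigma QAQ)$ in two different ways, and equate. The resulting scalar equation is linear in $\alpha$, with multiplier $[1+\tfrac{1}{n}{\rm Tr}(\Sigma\bar Q)]^2-\tfrac{1}{n}{\rm Tr}((\Sigma\bar Q)^2)$ and right-hand side $\tfrac{1}{n}{\rm Tr}(\Sigma\bar Q A\bar Q)$, yielding exactly the coefficient stated in the theorem. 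Verifying the ansatz against all bounded-operator-norm test matrices $C$ and bounded-norm test vectors $v$ in the sense defined in Proposition~\ref{16} closes the argument.

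The main obstacle is the bookkeeping in the first step: one has to collect \emph{all} $O(1)$ fluctuation contributions produced by the Stein/cumulant expansion and attribute them correctly between the already-identified $\bar Q A\bar Q$ piece, the unknown $\alpha\bar Q\Sigma\bar Q$ piece, and genuinely vanishing remainders --- errors at this stage would either rescale the coefficient in front of $\bar Q\Sigma\bar Q$ or introduce spurious additional matrix terms. A secondary technical point is verifying that the denominator $[1+\tfrac{1}{n}{\rm Tr}(\Sigma\bar Q)]^2-\tfrac{1}{n}{\rm Tr}((\Sigma\bar Q)^2)$ is bounded away from zero uniformly in $n$, which I would deduce from the fixed-point equation~(\ref{26}) for $\tilde m_{\gamma,p}(-\lambda)$ together with $\lambda>0$ and the spectral bounds on $\Sigma$, ensuring that inverting to solve for $\alpha$ is legitimate.
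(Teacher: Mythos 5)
Your proposal misses the fact that the paper proves this corollary in a single line: it follows immediately from Theorem~\ref{25} by substituting $\Sigma=I_p$, using $\bar Q(-\lambda)=m_\gamma(-\lambda)I_p$ from (\ref{12}), and then applying the derivative identity $m_\gamma'(z)=m_\gamma^2(z)/\bigl[1-\gamma m_\gamma^2(z)/(1+\gamma m_\gamma(z))^2\bigr]$ (which the paper derives by differentiating the fixed-point relation (\ref{45})) to rewrite the coefficient in the form $m_\gamma'm_\gamma^2/(1+\gamma m_\gamma)^2$. Under that substitution, $\bar Q A\bar Q=m_\gamma^2 A$, $\bar Q\Sigma\bar Q=m_\gamma^2 I_p$, $\tfrac{1}{n}{\rm Tr}(\Sigma\bar Q)\to\gamma m_\gamma$, and $\tfrac{1}{n}{\rm Tr}((\Sigma\bar Q)^2)\to\gamma m_\gamma^2$, which yields the stated formula directly; the second display (\ref{13}) then follows by taking $A=I_p$ and using the same identity once more.

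What you propose instead is, in effect, a re-proof of Theorem~\ref{25} itself. The outline --- an ansatz $QAQ\asymp\bar Q A\bar Q+\alpha\,\bar Q\Sigma\bar Q$ closed by a scalar self-consistency equation obtained from $\tfrac{1}{n}{\rm Tr}(\Sigma\,\cdot\,)$ --- does arrive at the correct coefficient, and the self-consistency step mirrors the end of the paper's Step~2 where $\tfrac{1}{n}{\rm Tr}(\mathbb{E}[\Sigma QAQ])$ is solved for. However, two essential pieces are missing and cannot be waved away as ``bookkeeping.'' First, the ansatz that the fluctuation contribution is exactly a scalar multiple of $\bar Q\Sigma\bar Q$ with no additional matrix structure is not justified; in the paper's proof of Theorem~\ref{25} this is the whole content of Step~2, obtained by a careful leave-one-out/Sherman--Morrison expansion of $\mathbb{E}[QAQ]$ and control of the residual martingale-difference terms (the $\Delta,\tilde\Delta$ estimates). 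Second, your plan never addresses the concentration of $QAQ$ around $\mathbb{E}[QAQ]$ in the sense required by Proposition~\ref{16} (the paper's Step~1, proved with Lemma~\ref{33} and Lemma~\ref{32}); without this, determining $\mathbb{E}[QAQ]$ does not by itself give the almost-sure equivalence claimed. The Stein/cumulant machinery you invoke is a legitimate alternative to Sherman--Morrison in random matrix theory and would presumably deliver the same structure if carried out, but since that expansion is exactly where the theorem's content lies, stating that one ``would do this'' leaves the argument incomplete. In short: the route is plausible but substantially harder than the corollary warrants, and the hardest steps are exactly the ones left open.
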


\subsection{Proof of Theorem \ref{25}}
For the convenience of notation, we write $Q:=Q(-\lambda)$, $\bar{Q}:=\bar{Q}(-\lambda)$ and $\tilde{m}:=\tilde{m}_{\gamma,p}(-\lambda)$. For a sequence of deterministic matrices $A_n \in \mathbb{R}^{n \times n}$, we write $A_n=o_{\Vert \cdot \Vert}(1)$ if $\Vert A_n \Vert_{op} \rightarrow 0$ as $n \rightarrow \infty$. The proof of this result can be divided into two steps: \\
i) Show that $QAQ$ is concentrated in mean, that is, $QAQ \asymp \mathbb{E}(QAQ)$.\\
ii) Find a deterministic matrix $R$ such that $\mathbb{E}(QAQ)=R+o_{\Vert \cdot \Vert}(1)$, therefore $\mathbb{E}(QAQ)\asymp R$. 

\subsection*{Step 1: Concentration of $QAQ$}
To prove that $QAQ\asymp \mathbb{E}(QAQ)$, it suffices to show that
\begin{align*}
\frac{1}{p}{\rm Tr}\big[C\big(QAQ-\mathbb{E}(QAQ)\big)\big]\rightarrow 0, a.s.
\end{align*}
for $C \in \mathbb{R}^{p\times p}$ having bounded operator norm, and
\begin{align*}
v^\top\big(QAQ-\mathbb{E}(QAQ)\big)v\rightarrow 0, a.s.
\end{align*}
for $v\in \mathbb{R}^p$ having bounded norm.

Similar to the proof in Lemma \ref{34}, we write
\begin{align*}
\frac{1}{p}{\rm Tr}\big[C\big(QAQ-\mathbb{E}(QAQ)\big)\big] =\frac{1}{p}\sum_{i=1}^n(\mathbb{E}_{\le i}-\mathbb{E}_{\le i-1}){\rm Tr}\big[C\big(QAQ-Q_{-i}AQ_{-i})\big)\big].
\end{align*}
By (\ref{29}), we have
\begin{align*}
\frac{1}{p}{\rm Tr}\big[C\big(QAQ-Q_{-i}AQ_{-i})\big)\big] = & - \frac{1}{pn}{\rm Tr}\big[CQ_{-i}A\frac{Q_{-i}x_ix_i^\top Q_{-i}}{1+\frac{1}{n}x_i^\top Q_{-i}x_i}+C\frac{Q_{-i}x_ix_i^\top Q_{-i}}{1+\frac{1}{n}x_i^\top Q_{-i}x_i}AQ_{-i}\big]\\
& + \frac{1}{pn^2}{\rm Tr}\big[C\frac{Q_{-i}x_ix_i^\top Q_{-i}AQ_{-i}x_ix_i^\top Q_{-i}}{(1+\frac{1}{n}x_i^\top Q_{-i}x_i)^2}\big].
\end{align*}
Let $Y_i=Y_{i,1}+Y_{i,2}$, where
\begin{align*}
Y_{i,1}&=-(\mathbb{E}_{\le i}-\mathbb{E}_{\le i-1})\frac{1}{pn}{\rm Tr}\Big[C\Big(Q_{-i}A\frac{Q_{-i}x_ix_i^\top Q_{-i}}{1+\frac{1}{n}x_i^\top Q_{-i}x_i}+\frac{Q_{-i}x_ix_i^\top Q_{-i}}{1+\frac{1}{n}x_i^\top Q_{-i}x_i}AQ_{-i}\Big)\Big],\\
Y_{i,2}&=(\mathbb{E}_{\le i}-\mathbb{E}_{\le i-1})\frac{1}{pn^2}{\rm Tr}\Big[C\frac{Q_{-i}x_ix_i^\top Q_{-i}AQ_{-i}x_ix_i^\top Q_{-i}}{(1+\frac{1}{n}x_i^\top Q_{-i}x_i)^2}\Big],
\end{align*}
Using the matrix inequality $A^\top B+B^\top A \preceq A^\top A+B^\top B$ and note that $\Vert Q_{-i}\Vert_{op}\le \lambda^{-1}$, we have
\begin{align*}
\Big|\frac{1}{pn}{\rm Tr} & \Big[C\Big(Q_{-i}A\frac{Q_{-i}x_ix_i^\top Q_{-i}}{1+\frac{1}{n}x_i^\top Q_{-i}x_i}+\frac{Q_{-i}x_ix_i^\top Q_{-i}}{1+\frac{1}{n}x_i^\top Q_{-i}x_i}AQ_{-i}\Big)\Big]\Big| \\
\le &\frac{\lambda^{-2}}{pn}\Vert C\Vert_{op}\Big\Vert A\frac{Q_{-i}x_ix_i^\top}{1+\frac{1}{n}x_i^\top Q_{-i}x_i}+\frac{x_ix_i^\top Q_{-i}}{1+\frac{1}{n}x_i^\top Q_{-i}x_i}A\Big\Vert_{tr} \\
\le &\frac{\lambda^{-2}}{pn}\Vert C\Vert_{op}\Big\Vert x_ix_i^\top+\frac{A Q_{-i}x_ix_i^\top Q_{-i}A}{1+\frac{1}{n}x_i^\top Q_{-i}x_i}\Big\Vert_{tr} \\
\le &\frac{\lambda^{-2}}{pn}\Vert C\Vert_{op}(\Vert x_i\Vert^2+ x_i^\top Q_{-i}A^2Q_{-i}x_i)\\
\le &\frac{\lambda^{-2}}{pn}\Vert C\Vert_{op}(1+\lambda^{-2}\Vert A\Vert_{op}^2)\Vert \Sigma\Vert_{op}\Vert z_i\Vert^2,
\end{align*}
where $x_i=\Sigma^{1/2}z_i$. For $k \ge 2$, if the entries of $Z$ has $k$-th moment, we have $\mathbb{E}\Vert z_i\Vert^{2k}=O(n^k)$, therefore,
\begin{equation*}
    \mathbb{E}_{\le i-1}\Big|\frac{1}{pn}{\rm Tr}\Big[C\Big(Q_{-i}A\frac{Q_{-i}x_ix_i^\top Q_{-i}}{1+\frac{1}{n}x_i^\top Q_{-i}x_i}+\frac{Q_{-i}x_ix_i^\top Q_{-i}}{1+\frac{1}{n}x_i^\top Q_{-i}x_i}AQ_{-i}\Big)\Big]\Big|^k= O(n^{-2k})\Vert z_i\Vert^{2k},
\end{equation*}
\begin{equation*}
    \mathbb{E}_{\le i}\Big|\frac{1}{pn}{\rm Tr}\Big[C\Big(Q_{-i}A\frac{Q_{-i}x_ix_i^\top Q_{-i}}{1+\frac{1}{n}x_i^\top Q_{-i}x_i}+\frac{Q_{-i}x_ix_i^\top Q_{-i}}{1+\frac{1}{n}x_i^\top Q_{-i}x_i}AQ_{-i}\Big)\Big]\Big|^k = O(n^{-2k})\mathbb{E}\Vert z_i\Vert^{2k}=O(n^{-k}),
\end{equation*}
and
\begin{align*}
|Y_{i,1}|^k= 2^{k-1}\big(O(n^{-2k})\Vert z_i\Vert^{2k}+O(n^{-k})\big),\quad \mathbb{E}|Y_{i,1}|^k=O(n^{-k}).
\end{align*}
For $Y_{i,2}$, note that
\begin{align*}
\Big|\frac{1}{pn^2}{\rm Tr}\Big[C\frac{Q_{-i}x_ix_i^\top Q_{-i}AQ_{-i}x_ix_i^\top Q_{-i}}{(1+\frac{1}{n}x_i^\top Q_{-i}x_i)^2}\Big]\Big|
\le &\frac{1}{pn^2}\Vert C\Vert_{op}\Vert A\Vert_{op}\Big\Vert \frac{Q_{-i}x_ix_i^\top Q_{-i}Q_{-i}x_ix_i^\top Q_{-i}}{(1+\frac{1}{n}x_i^\top Q_{-i}x_i)^2}\Big\Vert_{tr}\\
\le &\frac{\lambda^{-2}}{pn^2}\Vert C\Vert_{op}\Vert A\Vert_{op} \frac{(x_i^\top Q_{-i}x_i)^2}{(1+\frac{1}{n}x_i^\top Q_{-i}x_i)^2}\\
\le &\frac{\lambda^{-2}}{p}\Vert C\Vert_{op}\Vert A\Vert_{op}=O(n^{-1}),
\end{align*}
thus $|Y_{i,2}|=O(n^{-1})$. By Lemma \ref{33}, we have
\begin{align*}
\mathbb{E}\Big|\frac{1}{p}{\rm Tr} & \big[C\big(QAQ-\mathbb{E}(QAQ)\big)\big]\Big|^k= \mathbb{E}\Big|\sum_{i=1}^nY_i\Big|^k \le  C_k \Big(\mathbb{E}\big(\sum_{i=1}^n \mathbb{E}_{\le i-1}|Y_i|^2\big)^{k/2}+\sum_{i=1}^n\mathbb{E}|Y_i|^k\Big)\\
\le &C_k \Big(\mathbb{E}\big[\sum_{i=1}^n 2\big(\mathbb{E}_{\le i-1}|Y_{i,1}|^2+\mathbb{E}_{\le i-1}|Y_{i,2}|^2\big)\big]^{k/2}+2^{k-1}\sum_{i=1}^n\big(\mathbb{E}|Y_{i,1}|^k+\mathbb{E}|Y_{i,2}|^k\big)\Big)\\
\le &2^{k-1}C_k \Big(\mathbb{E}\big[\sum_{i=1}^n \mathbb{E}_{\le i-1}|Y_{i,1}|^2\big]^{k/2}+\mathbb{E}\big[\sum_{i=1}^n \mathbb{E}_{\le i-1}|Y_{i,2}|^2\big]^{k/2}+\sum_{i=1}^n\big(\mathbb{E}|Y_{i,1}|^k+\mathbb{E}|Y_{i,2}|^k\big)\Big)\\
=&O(n^{-k/2}).
\end{align*}
Taking $k>2$, by Markov's inequality and Borel-Cantelli lemma, we get
\begin{align*}
\frac{1}{p}{\rm Tr}\big[C\big(QAQ-\mathbb{E}(QAQ)\big)\big]\rightarrow 0, \quad a.s.
\end{align*}

To prove that $v^\top\big(QAQ-\mathbb{E}(QAQ)\big)v\rightarrow 0, a.s.$, we write
\begin{align*}
v^\top\big(QAQ-\mathbb{E}(QAQ)\big)v =\sum_{i=1}^n(\mathbb{E}_{\le i}-\mathbb{E}_{\le i-1})v^\top\big(QAQ-Q_{-i}AQ_{-i})\big)v.
\end{align*}
By (\ref{29}), we have
\begin{align*}
v^\top\big(QAQ-Q_{-i}AQ_{-i})\big)v=&-\frac{1}{n}v^\top Q_{-i}A\frac{Q_{-i}x_ix_i^\top Q_{-i}}{1+\frac{1}{n}x_i^\top Q_{-i}x_i}v-\frac{1}{n}v^\top\frac{Q_{-i}x_ix_i^\top Q_{-i}}{1+\frac{1}{n}x_i^\top Q_{-i}x_i}AQ_{-i}v\\
& + \frac{1}{n^2}v^\top\frac{Q_{-i}x_ix_i^\top Q_{-i}AQ_{-i}x_ix_i^\top Q_{-i}}{(1+\frac{1}{n}x_i^\top Q_{-i}x_i)^2}v.
\end{align*}
Let $\tilde{Y}_i=Y_{i,3}+Y_{i,4}$, where
\begin{align*}
Y_{i,3}=-(\mathbb{E}_{\le i}-\mathbb{E}_{\le i-1})\frac{1}{n}v^\top \Big(Q_{-i}A\frac{Q_{-i}x_ix_i^\top Q_{-i}}{1+\frac{1}{n}x_i^\top Q_{-i}x_i}+\frac{Q_{-i}x_ix_i^\top Q_{-i}}{1+\frac{1}{n}x_i^\top Q_{-i}x_i}AQ_{-i}\Big)v,
\end{align*}
\begin{align*}
Y_{i,4}=(\mathbb{E}_{\le i}-\mathbb{E}_{\le i-1})\frac{1}{n^2}v^\top \frac{Q_{-i}x_ix_i^\top Q_{-i}AQ_{-i}x_ix_i^\top Q_{-i}}{(1+\frac{1}{n}x_i^\top Q_{-i}x_i)^2}v,
\end{align*}
Using the matrix inequality $A^\top B+B^\top A \preceq A^\top A+B^\top B$ and note that $\Vert Q_{-i}\Vert_{op}\le \lambda^{-1}$, we have
\begin{align*}
\Big|\frac{1}{n}v^\top &\Big(Q_{-i}A\frac{Q_{-i}x_ix_i^\top Q_{-i}}{1+\frac{1}{n}x_i^\top Q_{-i}x_i}+\frac{Q_{-i}x_ix_i^\top Q_{-i}}{1+\frac{1}{n}x_i^\top Q_{-i}x_i}AQ_{-i}\Big)v\Big| \\
\le &\frac{\lambda^{-2}}{n}\Big|v^\top \Big( A\frac{Q_{-i}x_ix_i^\top}{1+\frac{1}{n}x_i^\top Q_{-i}x_i}+\frac{x_ix_i^\top Q_{-i}}{1+\frac{1}{n}x_i^\top Q_{-i}x_i}A\Big)v\Big| \\
\le &\frac{\lambda^{-2}}{n}\Big| v^\top \Big(x_ix_i^\top+\frac{A Q_{-i}x_ix_i^\top Q_{-i}A}{1+\frac{1}{n}x_i^\top Q_{-i}x_i}\Big)v\Big| \\
\le &\frac{\lambda^{-2}}{n}(1+\lambda^{-2}\Vert A\Vert_{op}^2)\Vert \Sigma\Vert_{op}|v^\top z_i|^2.
\end{align*}
Let $\mathbb{E}z_{11}^4=m_4,\mathbb{E}z_{11}^{2k}=m_{2k}$, by Lemma \ref{32}, we have
\begin{align*}
\mathbb{E}\Big|z_i^\top vv^\top z_i-\Vert v\Vert^2\Big|^k\le C_k\Big[(m_4\Vert v\Vert^2)^{k/2}+m_{2k}(\Vert v\Vert^2)^{k/2}\Big]=O(1),
\end{align*}
\begin{align*}
\mathbb{E}|v^\top z_i|^{2k}\le2^k\Big(\mathbb{E}\Big|z_i^\top vv^\top z_i-\Vert v\Vert^2\Big|^k+\Vert v\Vert^{2k}\Big)=O(1),
\end{align*}
thus for $k \ge 2$, if the entries of $Z$ has $2k$-th moment, we have
\begin{align*}
\mathbb{E}_{\le i-1}\Big|\frac{1}{n}v^\top \Big(Q_{-i}A\frac{Q_{-i}x_ix_i^\top Q_{-i}}{1+\frac{1}{n}x_i^\top Q_{-i}x_i}+\frac{Q_{-i}x_ix_i^\top Q_{-i}}{1+\frac{1}{n}x_i^\top Q_{-i}x_i}AQ_{-i}\Big)v\Big|^k= O(n^{-k})|v^\top z_i|^{2k},
\end{align*}
\begin{align*}
\mathbb{E}_{\le i}\Big|\frac{1}{n}v^\top \Big(Q_{-i}A\frac{Q_{-i}x_ix_i^\top Q_{-i}}{1+\frac{1}{n}x_i^\top Q_{-i}x_i}+\frac{Q_{-i}x_ix_i^\top Q_{-i}}{1+\frac{1}{n}x_i^\top Q_{-i}x_i}AQ_{-i}\Big)v\Big|^k= O(n^{-k})\mathbb{E}|v^\top z_i|^{2k}=O(n^{-k}),
\end{align*}
and
\begin{align*}
|Y_{i,3}|^k= 2^{k-1}\big(O(n^{-k})|v^\top z_i|^{2k}+O(n^{-k})\big),\quad \mathbb{E}|Y_{i,3}|^k=O(n^{-k}).
\end{align*}
For $Y_{i,4}$, note that
\begin{align*}
\Big|\frac{1}{n^2}v^\top\frac{Q_{-i}x_ix_i^\top Q_{-i}AQ_{-i}x_ix_i^\top Q_{-i}}{(1+\frac{1}{n}x_i^\top Q_{-i}x_i)^2}v\Big|
\le &\frac{\lambda^{-2}}{n^2}\Vert A\Vert_{op}\Big|\frac{x_i^\top Q_{-i}^2x_i}{(1+\frac{1}{n}x_i^\top Q_{-i}x_i)^2}v^\top x_ix_i^\top v\Big|\\
\le &\frac{\lambda^{-3}}{n}\Vert A\Vert_{op}\Vert \Sigma\Vert_{op}|v^\top z_i|^2,
\end{align*}
thus
\begin{align*}
\mathbb{E}_{\le i-1}\Big|\frac{1}{n^2}v^\top\frac{Q_{-i}x_ix_i^\top Q_{-i}AQ_{-i}x_ix_i^\top Q_{-i}}{(1+\frac{1}{n}x_i^\top Q_{-i}x_i)^2}v\Big|^k= O(n^{-k})|v^\top z_i|^{2k},
\end{align*}
\begin{align*}
\mathbb{E}_{\le i}\Big|\frac{1}{n^2}v^\top\frac{Q_{-i}x_ix_i^\top Q_{-i}AQ_{-i}x_ix_i^\top Q_{-i}}{(1+\frac{1}{n}x_i^\top Q_{-i}x_i)^2}v\Big|^k= O(n^{-k}),
\end{align*}
and
\begin{align*}
|Y_{i,4}|^k= 2^{k-1}\big(O(n^{-k})|v^\top z_i|^{2k}+O(n^{-k})\big),\quad \mathbb{E}|Y_{i,4}|^k=O(n^{-k}).
\end{align*}
By Lemma \ref{33}, we have
\begin{align*}
\mathbb{E}\Big|v^\top &\big(QAQ-\mathbb{E}(QAQ)\big)v\Big|^k 
= \mathbb{E}\Big|\sum_{i=1}^n\tilde{Y}_i\Big|^k \le  C_k \Big(\mathbb{E}\big(\sum_{i=1}^n \mathbb{E}_{\le i-1}|\tilde{Y}_i|^2\big)^{k/2}+\sum_{i=1}^n\mathbb{E}|\tilde{Y}_i|^k\Big)\\
\le &2^{k-1}C_k \Big(\mathbb{E}\big[\sum_{i=1}^n \mathbb{E}_{\le i-1}|Y_{i,3}|^2\big]^{k/2}+\mathbb{E}\big[\sum_{i=1}^n \mathbb{E}_{\le i-1}|Y_{i,4}|^2\big]^{k/2}+\sum_{i=1}^n\big(\mathbb{E}|Y_{i,3}|^k+\mathbb{E}|Y_{i,4}|^k\big)\Big)\\
=&O(n^{-k/2}).
\end{align*}
Taking $k>2$, by Markov's inequality and Borel-Cantelli lemma, we get 
\begin{align*}
v^\top\big(QAQ-\mathbb{E}(QAQ)\big)v\rightarrow 0, \quad a.s.
\end{align*}

\subsection*{Step 2: Finding Deterministic Matrix}
Note that
\begin{align}
\mathbb{E}[QAQ]=& \mathbb{E}[QA\bar{Q}]+\mathbb{E}[QA(Q-\bar{Q})] \nonumber\\
= & \bar{Q}A\bar{Q}+\mathbb{E}[QAQ(I_p-Q^{-1}\bar{Q})] \nonumber\\
= & \bar{Q}A\bar{Q}+\mathbb{E}\big[QAQ\big(I_p-(\frac{1}{n}X^\top X+\lambda I_p)\bar{Q}\big)\big] \nonumber\\
= & \bar{Q}A\bar{Q}+\mathbb{E}\big[QAQ(\bar{Q}^{-1}-\lambda I_p-\frac{1}{n}X^\top X)\bar{Q}\big] \nonumber\\
= & \bar{Q}A\bar{Q}+\mathbb{E}\big[QAQ(\lambda\tilde{m}\Sigma-\frac{1}{n}X^\top X)\bar{Q}\big] \nonumber\\
= & \bar{Q}A\bar{Q}+\mathbb{E}[QAQ]\frac{\Sigma\bar{Q}}{1+\frac{1}{n}{\rm Tr}(\Sigma\bar{Q})}-\mathbb{E}[QAQ\cdot\frac{1}{n}X^\top X]\bar{Q},
\label{21}
\end{align}
the last equality holds since $\frac{1}{n}{\rm Tr}(\Sigma\bar{Q})=-(\frac{1}{-\lambda\tilde{m}}+1)$ from (\ref{26}). Similarly, 
\begin{align}
\mathbb{E}[QAQ]=& \mathbb{E}[\bar{Q}AQ]+\mathbb{E}[(Q-\bar{Q})AQ] \nonumber\\
= & \bar{Q}A\bar{Q}+\frac{\Sigma\bar{Q}}{1+\frac{1}{n}{\rm Tr}(\Sigma\bar{Q})}\mathbb{E}[QAQ]-\bar{Q}\mathbb{E}[\frac{1}{n}X^\top X\cdot QAQ].
\label{37}
\end{align}

Let $Q_{-i}=(\frac{1}{n}\sum_{j\not= i}x_j x_j^\top-zI_p)^{-1}$ be the resolvent matrix without sample $x_i$. By Sherman-Morrison formula, we have
\begin{align*}
Q=Q_{-i}-\frac{\frac{1}{n}Q_{-i}x_ix_i^\top Q_{-i}}{1+\frac{1}{n}x_i^\top Q_{-i}x_i}
\end{align*}
and
\begin{align}
Qx_i=\frac{Q_{-i}x_i}{1+\frac{1}{n}x_i^\top Q_{-i}x_i},
\label{30}
\end{align}
thus
\begin{align}
\mathbb{E}[QAQ\cdot\frac{1}{n}X^\top X]\bar{Q}=\frac{1}{n}\sum_{i=1}^n\mathbb{E}\Big[\frac{QAQ_{-i}x_ix_i^\top}{1+\frac{1}{n}x_i^\top Q_{-i}x_i}\Big]\bar{Q}.
\end{align}
Let $\alpha=\frac{1}{n}{\rm Tr}[\mathbb{E}(\Sigma Q_{-i})]$, which is a constant for all $i$. Then we have
\begin{align}
\frac{QAQ_{-i}x_ix_i^\top}{1+\frac{1}{n}x_i^\top Q_{-i}x_i}=&\frac{1}{1+\alpha}\Big[Q_{-i}x_ix_i^\top-\frac{Q_{-i}x_ix_i^\top(\frac{1}{n}x_i^\top Q_{-i}x_i-\alpha)}{1+\frac{1}{n}x_i^\top Q_{-i}x_i}\Big] \nonumber\\
=&\frac{1}{1+\alpha}\Big[Q_{-i}x_ix_i^\top-Qx_ix_i^\top(\frac{1}{n}x_i^\top Q_{-i}x_i-\alpha)\Big].
\label{27}
\end{align}
Let $d_i=\frac{1}{n}x_i^\top Q_{-i}x_i-\alpha$, $D=\text{diag}\{d_i\}_{i=1}^n$, by (\ref{27}), we obtain
\begin{align}
\mathbb{E}[QAQ\cdot\frac{1}{n}X^\top X]\bar{Q}= &\frac{1}{(1+\alpha)n}\sum_{i=1}^n\mathbb{E}[QAQ_{-i}x_ix_i^\top-QAQx_id_ix_i^\top ]\bar{Q} \nonumber\\
=&\frac{1}{(1+\alpha)n}\sum_{i=1}^n\mathbb{E}[QAQ_{-i}x_ix_i^\top]\bar{Q}-\frac{1}{(1+\alpha)n}\mathbb{E}[QAQX^\top DX]\bar{Q},
\label{28}
\end{align}
and similarly,
\begin{align}
\bar{Q}\mathbb{E}[\frac{1}{n}X^\top X\cdot QAQ]= \frac{1}{(1+\alpha)n}\sum_{i=1}^n\bar{Q}\mathbb{E}[x_ix_i^\top Q_{-i}AQ]-\frac{1}{(1+\alpha)n}\bar{Q}\mathbb{E}[X^\top DXQAQ],
\label{35}
\end{align}

We wish to control the operator norm of $\Delta:=\frac{1}{n}\mathbb{E}[QAQX^\top DX]\bar{Q}$ plus its transport. Using the matrix inequality $A^\top B+B^\top A \preceq A^\top A+B^\top B$, we have
\begin{align*}
\Big\Vert\Delta+\Delta^\top\Big\Vert &= \frac{1}{n}\Big\Vert\mathbb{E}[n^{\epsilon/2}QAQX^\top D\cdot n^{-\epsilon/2}X\bar{Q}+n^{-\epsilon/2}\bar{Q}X^\top\cdot n^{\epsilon/2}(QAQX^\top D)^\top]\Big\Vert \\
&\le \frac{1}{n}\Big\Vert\mathbb{E}[n^{\epsilon}QAQX^\top D^2XQAQ+n^{-\epsilon}\bar{Q}X^\top X\bar{Q}]\Big\Vert \\
&\le n^{-1+\epsilon}\Vert\mathbb{E}[QAQX^\top D^2XQAQ]\Vert + n^{-1-\epsilon}\Vert\mathbb{E}[\bar{Q}X^\top X\bar{Q}]\Vert \\
&\le n^{-1+\epsilon}\Vert\mathbb{E}[\Vert D \Vert^2 QAQX^\top XQAQ]\Vert + n^{-1-\epsilon}\Vert\mathbb{E}[\bar{Q}X^\top X\bar{Q}]\Vert \\
& \le C_1n^\epsilon \mathbb{E}\Vert D \Vert^2 +C_2 n^{-\epsilon} \Vert \Sigma \Vert,
\end{align*}
the last inequality holds since $\Vert A \Vert$ is bounded by a constant, $\Vert\frac{1}{n}QX^\top X\Vert=\Vert I_p-\lambda Q\Vert$ is bounded by 2, $\Vert Q \Vert \le \lambda^{-1}$, and$\Vert \bar{Q} \Vert \le \Vert EQ\Vert +\Vert \bar{Q}-EQ\Vert \le  \lambda^{-1}+o(1)$.
 
To control $n^\epsilon \mathbb{E}\Vert D \Vert^2$, we notice that
\begin{align*}
n^\epsilon \mathbb{E}\Vert D \Vert^2 =& n^\epsilon\mathbb{E} \max_{i}d_i^2= n^\epsilon\int_{0}^{\infty} \textbf{P}(\max_{i}d_i^2>t)dt\\
\le & n^\epsilon \int_{0}^{n^{-\theta-\epsilon}} \textbf{P}(\max_{i}d_i^2>t)dt+n^{\epsilon}\int_{n^{-\theta-\epsilon}}^\infty n\textbf{P}(d_1^2>t)dt\\
\le & n^{-\theta} +n^{1+\epsilon}\int_{n^{-\theta-\epsilon}}^\infty \textbf{P}(d_1^2>t)dt.
\end{align*}
Since
\begin{align*}
\textbf{P}(d_1^2>t)=& \textbf{P}\Big(\Big|\frac{1}{n}z_1^\top \Sigma^{1/2}Q_{-1}\Sigma^{1/2}z_1-\frac{1}{n}{\rm Tr}[\mathbb{E}(\Sigma Q_{-1})]\Big|^2>t\Big)\\
\le & \frac{1}{t^4}\mathbb{E}\Big|\frac{1}{n}z_1^\top \Sigma^{1/2}Q_{-1}\Sigma^{1/2}z_1-\frac{1}{n}{\rm Tr}[\mathbb{E}(\Sigma Q_{-1})]\Big|^8 \\
\le & \frac{2^7}{t^4n^8}\Big\{\mathbb{E}\Big|z_1^\top \Sigma^{1/2}Q_{-1}\Sigma^{1/2}z_1-{\rm Tr}[\Sigma Q_{-1}]\Big|^4+\mathbb{E}\Big|{\rm Tr}[\Sigma (Q_{-1}-\mathbb{E}Q_{-1})]\Big|^8\Big\},
\end{align*}
taking $k=8$ in Lemma \ref{32}, we have
\begin{align*}
\mathbb{E}\Big|z_1^\top \Sigma^{1/2}Q_{-1}\Sigma^{1/2}z_1-{\rm Tr}[\Sigma Q_{-1}]\Big|^8=\mathbb{E}_{-1}\mathbb{E}_1\Big|z_1^\top \Sigma^{1/2}Q_{-1}\Sigma^{1/2}z_1-{\rm Tr}[\Sigma^{1/2}Q_{-1}\Sigma^{1/2}]\Big|^8 \le Cp^4
\end{align*}
for some $C>0$, and taking $k=8$ in Lemma \ref{34}, we have
\begin{align*}
 \mathbb{E}\Big|\frac{1}{p}{\rm Tr}[\Sigma (Q_{-1}-\mathbb{E}Q_{-1})]\Big|^8=O(n^{-4}),
\end{align*}
thus $\textbf{P}(d_1^2>t) \le Ct^{-4}n^{-4}$ for some $C>0$, and
\begin{align*}
n^\epsilon \mathbb{E}\Vert D \Vert^2 
\le  n^{-\theta} +Cn^{4\epsilon+3\theta-3}.
\end{align*}
Choose $\epsilon=\theta=3/8$, we have $n^\epsilon \mathbb{E}\Vert D \Vert^2=O(n^{-3/8})$, and
\begin{align*}
\Big\Vert \frac{1}{n}\mathbb{E}[QAQX^\top DX]\bar{Q} \Big\Vert = O(n^{-3/8}),
\end{align*}
therefore, (\ref{28}) and (\ref{35}) implies
\begin{align*}
\mathbb{E}[QA&Q\cdot\frac{1}{n}X^\top X]\bar{Q}+\bar{Q}\mathbb{E}[\frac{1}{n}X^\top X\cdot QAQ]\\
=&\frac{1}{(1+\alpha)n}\sum_{i=1}^n\Big\{\mathbb{E}[QAQ_{-i}x_ix_i^\top]\bar{Q}+\bar{Q}\mathbb{E}[x_ix_i^\top Q_{-i}AQ]\Big\}+o_{\Vert\cdot\Vert}(1).
\end{align*}

Applying Sherman-Morrison formula (\ref{29}) and using the same technique as (\ref{27}), we have
\begin{align*}
\frac{1}{(1+\alpha)n}&\sum_{i=1}^n\mathbb{E}[QAQ_{-i}x_ix_i^\top]\bar{Q}\\
= & \frac{1}{(1+\alpha)n} \Big\{\sum_{i=1}^n \mathbb{E}[Q_{-i}AQ_{-i}x_ix_i^T\bar{Q}]-\sum_{i=1}^n\frac{Q_{-i}x_in^{-1}x_i^\top Q_{-i}AQ_{-i}x_ix_i^\top \bar{Q}}{1+\frac{1}{n}x_i^\top Q_{-i} x_i}\Big\} \\ = & \Delta_1 -\Delta_2,
\end{align*}
and
\begin{align*}
\mathbb{E}[QAQ\cdot\frac{1}{n}X^\top X]\bar{Q}+\bar{Q}\mathbb{E}[\frac{1}{n}X^\top X\cdot QAQ]=(\Delta_1 +\Delta_1^\top)-(\Delta_2 +\Delta_2^\top),
\end{align*}
where
\begin{align*}
\Delta_1=\frac{1}{(1+\alpha)n}\sum_{i=1}^n \mathbb{E}[Q_{-i}AQ_{-i}x_ix_i^T]\bar{Q},
\end{align*}
\begin{align*}
\Delta_2=&\frac{1}{(1+\alpha)^2n}\sum_{i=1}^n \mathbb{E}[Q_{-i}x_in^{-1}x_i^\top Q_{-i}AQ_{-i}x_ix_i^\top] \bar{Q}\\
-&\frac{1}{(1+\alpha)^2n}\sum_{i=1}^n \mathbb{E}[Q_{-i}x_in^{-1}x_i^\top Q_{-i}AQx_id_ix_i^\top]\bar{Q},
\end{align*}
since $Q_{-i}$ and $x_i$ are independent, we have
\begin{align*}
\Delta_1=&\frac{1}{(1+\alpha)}\sum_{i=1}^n \mathbb{E}_{-i}[Q_{-i}AQ_{-i}]\mathbb{E}_{i}[n^{-1}x_ix_i^T]\bar{Q} \\
= & \frac{1}{(1+\alpha)}\sum_{i=1}^n \mathbb{E}_{-i}[Q_{-i}AQ_{-i}]\Sigma\bar{Q}\\
=& \frac{1}{1+\alpha} \mathbb{E}[QAQ]\Sigma\bar{Q}+o_{\Vert\cdot\Vert}(1).
\end{align*}
To analyze $\Delta_2+\Delta_2^\top$, we first control the norm of $\tilde{\Delta}:=\frac{1}{n}\sum_{i=1}^n \mathbb{E}[Q_{-i}x_in^{-1}x_i^\top Q_{-i}AQx_id_ix_i^\top]\bar{Q}$ plus its transport. Substituting $Q_{-i}x_i$ by (\ref{30}), we have
\begin{align*}
Q_{-i}x_in^{-1}x_i^\top Q_{-i}AQx_id_ix_i^\top=Qx_in^{-1}x_i^\top QAQx_id_ix_i^\top\times(1+\frac{1}{n}x_i^\top Q_{-i}x_i)^2.
\end{align*}
Let $\tilde{d}_i=d_i\cdot (1+\frac{1}{n}x_i^\top Q_{-i}x_i)^2=d_i(1+\alpha+d_i)^2$, $\tilde{D}=\text{diag}\{|\tilde{d}_i|\}_{i=1}^n$, using similar matrix inequality techniques, we have
\begin{align*}
\Big\Vert \tilde{\Delta}+\tilde{\Delta}^\top \Big\Vert =& \Big\Vert \frac{1}{n}\sum_{i=1}^n \mathbb{E}[Qx_in^{-1}x_i^\top QAQx_i\tilde{d}_ix_i^\top]\bar{Q} +\bar{Q}[Qx_in^{-1}x_i^\top QAQx_i\tilde{d}_ix_i^\top]^\top\Big\Vert \\
\le & \Big\Vert \frac{1}{n} \mathbb{E}[n^{\epsilon/2}\frac{1}{n}QX^\top X  QA QX^\top\tilde{D}\cdot n^{-\epsilon/2}X\bar{Q}]+\mathbb{E}[n^{\epsilon/2}\frac{1}{n}QX^\top X  QA QX^\top\tilde{D}\cdot n^{-\epsilon/2}X\bar{Q}]^\top \Big\Vert \\
\le & C_1n^\epsilon \mathbb{E}\Vert \tilde{D} \Vert^2 +C_2 n^{-\epsilon} \Vert \Sigma \Vert,
\end{align*}
and
\begin{align*}
n^\epsilon \mathbb{E}\Vert \tilde{D} \Vert^2 
\le n^{-\theta} +n^{1+\epsilon}\int_{n^{-\theta-\epsilon}}^\infty \textbf{P}(\tilde{d}_1^2>t)dt.
\end{align*}
Since 
\begin{align*}
\textbf{P}(\tilde{d}_1^2>t) \le &\textbf{P}\big(8d_1^2[(1+|\alpha|)^4+d_1^4]>t\big) \\
\le & \textbf{P}\big(8d_1^2[(1+|\alpha|)^4+d_1^4]>t,d_1^2<(1+|\alpha|)^2\big) +\textbf{P}\big(8d_1^2[(1+|\alpha|)^4+d_1^4]>t,d_1^2\ge(1+|\alpha|)^2\big)\\
\le & \textbf{P}\big(16d_1^2(1+|\alpha|)^4>t\big)+\textbf{P}\big(16d_1^6>t\big) \\
\le & C_1t^{-4}n^{-4}+C_2t^{-4/3}n^{-4},
\end{align*}
we have
\begin{align*}
n^\epsilon \mathbb{E}\Vert D \Vert^2 
\le  n^{-\theta} +C_1n^{4\epsilon+3\theta-3}+C_2n^{\frac{4}{3}\epsilon+\frac{1}{3}\theta-3}.
\end{align*}
Choose $\theta=\epsilon=3/8$, we have $n^\epsilon \mathbb{E}\Vert D \Vert^2=O(n^{-3/8})$, and
\begin{align*}
\Big\Vert \frac{1}{n}\mathbb{E}[QAQX^\top DX]\bar{Q} \Big\Vert = O(n^{-3/8}),
\end{align*}
therefore, 
\begin{align*}
\Delta_2+\Delta_2^\top=&\frac{1}{(1+\alpha)^2n}\sum_{i=1}^n \Big\{\mathbb{E}[Q_{-i}x_in^{-1}x_i^\top Q_{-i}AQ_{-i}x_ix_i^\top] \bar{Q}\\
&+\bar{Q}\mathbb{E}[Q_{-i}x_in^{-1}x_i^\top Q_{-i}AQ_{-i}x_ix_i^\top]^\top\Big\}+o_{\Vert\cdot\Vert}(1).
\end{align*}
For the first term of $\Delta_2+\Delta_2^\top$, note that
\begin{align*}
\mathbb{E}[Q_{-i}&x_in^{-1}x_i^\top Q_{-i}AQ_{-i}x_ix_i^\top]\\
=& \mathbb{E}_{-i}\big[Q_{-i}\mathbb{E}_{i}[x_in^{-1}x_i^\top Q_{-i}AQ_{-i}x_ix_i^\top]\big] \\
= & \mathbb{E}_{-i}\big[Q_{-i}\mathbb{E}_{i}[\frac{1}{n}\Sigma^{1/2}z_iz_i^\top \Sigma^{1/2}Q_{-i}AQ_{-i}\Sigma^{1/2}z_iz_i^\top\Sigma^{1/2}]\big] \\
= & \mathbb{E}_{-i}\big[Q_{-i}\Sigma^{1/2}[\frac{2}{n} \Sigma^{1/2}Q_{-i}AQ_{-i}\Sigma^{1/2}+\frac{1}{n}{\rm Tr}(\Sigma Q_{-i}AQ_{-i})I_p]\Sigma^{1/2}\big] \\
= & \mathbb{E}_{-i}\big[Q_{-i}\Sigma\cdot\frac{1}{n}{\rm Tr}(\Sigma Q_{-i}AQ_{-i})\big]+o_{\Vert\cdot\Vert}(1),
\end{align*}
thus
\begin{align*}
\Delta_2+\Delta_2^\top=& \frac{1}{(1+\alpha)^2}\Big\{\mathbb{E}\big[Q\Sigma\cdot\frac{1}{n}{\rm Tr}(\Sigma QAQ)\big]\bar{Q}+\bar{Q}\big[Q\Sigma\cdot\frac{1}{n}{\rm Tr}(\Sigma QAQ)\big]^\top\Big\}+o_{\Vert\cdot\Vert}(1) \\
= & \frac{2}{(1+\alpha)^2}\cdot\frac{1}{n}{\rm Tr}(\mathbb{E}[\Sigma QAQ])\bar{Q}\Sigma\bar{Q}+o_{\Vert\cdot\Vert}(1).
\end{align*}
Combining the results above and noticing that 
\begin{align*}
\alpha=\frac{1}{n}{\rm Tr}[\mathbb{E}(\Sigma Q_{-i})]=\frac{1}{n}{\rm Tr}[\mathbb{E}(\Sigma Q)]+o(1)=\frac{1}{n}{\rm Tr}[\Sigma \bar{Q}]+o(1),
\end{align*}
 we have
\begin{align*}
\mathbb{E}[Q & AQ\cdot\frac{1}{n}X^\top X]\bar{Q}+\bar{Q}\mathbb{E}[\frac{1}{n}X^\top X\cdot QAQ] \\= 
& \frac{\mathbb{E}[QAQ]\Sigma\bar{Q}}{1+\frac{1}{n}{\rm Tr}[\Sigma \bar{Q}]}+\frac{\Sigma\bar{Q}\mathbb{E}[QAQ]}{1+\frac{1}{n}{\rm Tr}[\Sigma \bar{Q}]}
-\frac{2}{(1+\frac{1}{n}{\rm Tr}[\Sigma \bar{Q}])^2}\cdot\frac{1}{n}{\rm Tr}(\mathbb{E}[\Sigma QAQ])\bar{Q}\Sigma\bar{Q}+o_{\Vert\cdot\Vert}(1).
\end{align*}
Therefore, the sum of equations (\ref{21}) and (\ref{37}) becomes
\begin{align*}
\mathbb{E}[QAQ]=\bar{Q}A\bar{Q}+\frac{1}{(1+\frac{1}{n}{\rm Tr}[\Sigma \bar{Q}])^2}\cdot\frac{1}{n}{\rm Tr}(\mathbb{E}[\Sigma QAQ])\bar{Q}\Sigma\bar{Q}+o_{\Vert\cdot\Vert}(1).
\end{align*}
Taking normalized trace in both sides of this equation, we get
\begin{align*}
\frac{1}{n}{\rm Tr}(\mathbb{E}[\Sigma QAQ])=\frac{1}{n}{\rm Tr}(\Sigma\bar{Q}A\bar{Q})+\frac{\frac{1}{n}{\rm Tr}[(\Sigma\bar{Q})^2]}{(1+\frac{1}{n}{\rm Tr}[\Sigma \bar{Q}])^2}\cdot\frac{1}{n}{\rm Tr}(\mathbb{E}[\Sigma QAQ])+o(1).
\end{align*}
and obtain the solution
\begin{align*}
\frac{1}{n}{\rm Tr}(\mathbb{E}[\Sigma QAQ])=\Big\{1-\frac{\frac{1}{n}{\rm Tr}[(\Sigma\bar{Q})^2]}{(1+\frac{1}{n}{\rm Tr}[\Sigma\bar{Q}])^2} \Big\}^{-1}\cdot \frac{1}{n}{\rm Tr}(\Sigma\bar{Q}A\bar{Q})+o(1).
\end{align*}
Thus
\begin{align*}
\mathbb{E}[QAQ]=\bar{Q}A\bar{Q}+\frac{\frac{1}{n}{\rm Tr}[\Sigma \bar{Q}A\bar{Q}]}{(1+\frac{1}{n}{\rm Tr}[\Sigma \bar{Q}])^2-\frac{1}{n}{\rm Tr}[(\Sigma \bar{Q})^2]}\cdot\bar{Q}\Sigma\bar{Q}+o_{\Vert\cdot\Vert}(1).
\end{align*}

\section{Main Proofs}
\subsection{Proof of Lemma \ref{46}}
\label{58}
 For bias term, note that 
\begin{align*}
    \mathbb{E}(\hat{\beta}_T|X)=  \mathbb{E}(\hat{\beta}^{\rm ridge}_{T}|X) + A_T\mathbb{E}(\hat{\beta}^{\rm ridge}_{T-1}|X)+  \cdots + A_TA_{T-1}\cdots A_2\mathbb{E}(\hat{\beta}^{\rm ridge}_{1}|X).
\end{align*}
Since $\mathbb{E}(\hat{\beta}^{\rm ridge}_{t}|X)=  (\hat{\Sigma}_t+\lambda_tI_p)^{-1}\hat{\Sigma}_t\beta=(I_p-A_t)\beta$, we have
\begin{align*}
    \mathbb{E}(\hat{\beta}_T|X)=& (I_p-A_T)\beta  + A_T(I_p-A_{T-1})\beta+  \cdots + A_TA_{T-1}\cdots A_2(I_p-A_1)\beta\\
    =& \beta-A_TA_{T-1}\cdots A_1\beta.
\end{align*}
by (\ref{3}), $
B_X(\hat{\beta}; \beta,\Sigma_0)=\beta^\top A_1A_2\cdots A_T\Sigma_0 A_T\cdots A_2A_1\beta$.

For variance term, note that ridge estimators using different datasets are independent, we have
\begin{align*}
    \text{Cov}(\hat{\beta}_T|X)=  \sum_{t=1}^TA_T\cdots A_{t+1}\text{Cov}(\hat{\beta}^{\rm ridge}_{1}|X)A_{t+1}\cdots A_T.
\end{align*}
Since 
\begin{align*}
    \text{Cov}(\hat{\beta}^{\rm ridge}_{t}|X)&= \text{Cov}[(\hat{\Sigma}_t+\lambda_t I_p)^{-1}(\frac{1}{n_t}X_t^\top \epsilon_t)|X]\\
    &=\frac{\sigma^2}{n_t}(\hat{\Sigma}_t+\lambda_t I_p)^{-1}\hat{\Sigma}_t(\hat{\Sigma}_t+\lambda_t I_p)^{-1}\\
    &=\frac{\sigma^2}{n_t}[(\hat{\Sigma}_t+\lambda_t I_p)^{-1}-\lambda_t(\hat{\Sigma}_t+\lambda_t I_p)^{-2}]\\
    &=\frac{\sigma^2}{\lambda_tn_t}(A_t-A_t^2),
\end{align*}
we have 
\begin{align*}
    \text{Cov}(\hat{\beta}_T|X)=  \sigma^2\sum_{t=1}^{T} \frac{1}{\lambda_tn_t}A_T\cdots A_{t+1}(A_t-A_t^2)A_{t+1}\cdots A_T.
\end{align*}
Thus by (\ref{4}),
\begin{align*}
  V_X(\hat{\beta}; \beta,\Sigma_0)=\sigma^2\sum_{t=1}^{T} \frac{1}{\lambda_tn_t}{\rm Tr}\big[A_T\cdots A_{t+1}(A_t-A_t^2)A_{t+1}\cdots A_T\Sigma_0\big].
\end{align*}

\subsection{Proof of Theorem \ref{5}}
\label{24}
\subsubsection{Bias Term}
Let $Q_t=A_t/\lambda_t$, then $Q_t$ is a resolvent matrix of sample covariance matrix $\hat{\Sigma}_t=\frac{1}{n_t}X_t^\top X_t$. Note that 
\begin{align*}
B_X(\hat{\beta}; \beta)=\big(\prod_{t=1}^T \lambda_t^2\big) \beta^\top Q_1Q_2\cdots Q_T Q_T\cdots Q_2Q_1\beta,
\end{align*}
Let $B_t=Q_t\cdots Q_T Q_T\cdots Q_t$. For $t<T$, since $B_t=Q_tB_{t+1}Q_t$ and $B_{t+1}$ is independent of $Q_t$, by Corollary \ref{9},
\begin{align}
 B_t \asymp m_{\gamma_t}^2(-\lambda_t) B_{t+1} + \frac{1}{n_t}{\rm Tr}B_{t+1}\cdot m_{\gamma_t}^2(-\lambda_t)\mu_tI_p, 
  \label{17}
\end{align}
where
\begin{align*}
 \mu_t=m_{\gamma_t}'(-\lambda_t)/[1+\gamma_tm_{\gamma_t}(-\lambda_t)]^2,
\end{align*}
by the third rule of Proposition \ref{16}, 
\begin{align}
 \frac{1}{p}{\rm Tr}B_t - m_{\gamma_t}^2(-\lambda_t) (1+ \gamma_t \mu_t) \cdot\frac{1}{p}{\rm Tr}B_{t+1}\rightarrow 0, \quad a.s.
  \label{18}
\end{align}
By differentiating (\ref{45}), we obtain
\begin{align*}
 m'_\gamma(z)=m_\gamma^2(z)\Big/\Big[1-\frac{\gamma m_\gamma^2(z)}{(1+\gamma m_\gamma(z))^2}\Big],
\end{align*}
therefore, 
\begin{align}
 m_{\gamma_t}^2(-\lambda_t)(1+\gamma_t\mu_t)=m_{\gamma_t}'(-\lambda_t).
 \label{19}
\end{align}
Applying (\ref{18}) iteratively and using (\ref{19}), we obtain
\begin{align}
 \frac{1}{p}{\rm Tr}B_t - \prod_{s=t}^{T} m_{\gamma_s}'(-\lambda_s)\rightarrow 0, \quad a.s., \quad 1 \le t \le T,
  \label{20}
\end{align}
substituting (\ref{20}) in (\ref{17}), we get
\begin{align*}
 B_t \asymp m_{\gamma_t}^2(-\lambda_t) B_{t+1} + \gamma_t m_{\gamma_t}^2(-\lambda_t)\mu_t \cdot\prod_{s=t+1}^{T} m_{\gamma_s}'(-\lambda_s)I_p,\quad 1\le t \le T-1. 
\end{align*}
To simplify our formula, we denote $w_t=\gamma_t \mu_t \cdot\prod_{s=t+1}^{T} m_{\gamma_s}'(-\lambda_s)$, then by iteration, 
\begin{align}
 B_1 \asymp & m_{\gamma_1}^2(-\lambda_1) (B_{2} + w_1 I_p)\nonumber\\
 \asymp & m_{\gamma_1}^2(-\lambda_1) \big[m_{\gamma_2}^2(-\lambda_2) (B_{3} + w_2 I_p) + w_1 I_p\big]\nonumber\\
  \asymp & \cdots \asymp \prod_{t=1}^{T-1}m_{\gamma_t}^2(-\lambda_t)B_T+\sum_{t=1}^{T-1} \big(\prod_{s=1}^t m_{\gamma_s}^2(-\lambda_s)\big)w_t \cdot I_p \nonumber\\
  \asymp & \Big\{m_{\gamma_T}'(-\lambda_T)\prod_{t=1}^{T-1}m_{\gamma_t}^2(-\lambda_t)+\sum_{t=1}^{T-1} \Big[\gamma_tu_t \prod_{s=1}^t m_{\gamma_s}^2(-\lambda_s)\prod_{s=t+1}^{T}m_{\gamma_s}'(-\lambda_s) \Big]\Big\} \cdot I_p \nonumber\\
  =&\Big\{m_{\gamma_T}'(-\lambda_T)(1+\gamma_{T-1}\mu_{T-1})\prod_{t=1}^{T-1}m_{\gamma_t}^2(-\lambda_t)+\sum_{t=1}^{T-2} \Big[\gamma_tu_t \prod_{s=1}^t m_{\gamma_s}^2(-\lambda_s)\prod_{s=t+1}^{T}m_{\gamma_s}'(-\lambda_s) \Big]\Big\} \cdot I_p \nonumber\\
  =&\Big\{\prod_{t=T-1}^{T}m_{\gamma_t}'(-\lambda_t)\prod_{t=1}^{T-2}m_{\gamma_t}^2(-\lambda_t)+\sum_{t=1}^{T-2} \Big[\gamma_tu_t \prod_{s=1}^t m_{\gamma_s}^2(-\lambda_s)\prod_{s=t+1}^{T}m_{\gamma_s}'(-\lambda_s) \Big]\Big\} \cdot I_p \nonumber\\
  =&\cdots=\prod_{t=1}^{T}m_{\gamma_t}'(-\lambda_t) \cdot I_p 
  \label{22}
\end{align}
By the definition of deterministic equivalent, we have 
\begin{align*}
 \beta^\top B_1 \beta \rightarrow r^2   \prod_{t=1}^{T}m_{\gamma_t}'(-\lambda_t), \quad a.s.
\end{align*}
thus
\begin{align*}
 B_X(\hat{\beta}_T; \beta) \rightarrow r^2 \prod_{t=1}^T\big(\lambda_t^2m_{\gamma_t}'(-\lambda_t)\big), \quad a.s.
\end{align*}

\subsubsection{Variance Term}
For $1 \le s < t \le T$, Let $C_{s,t}= Q_t\cdots Q_{s+1}Q_s Q_{s+1} \cdots Q_t, D_{s,t}= Q_t\cdots Q_{s+1}Q_s^2Q_{s+1} \cdots Q_t$, then
\begin{align*}
V_X(\hat{\beta}_T;\beta)=\sigma^2\sum_{t=1}^{T} \big(\prod_{s=t+1}^T\lambda_s^2\big)\cdot\frac{1}{n_t}{\rm Tr}\big[C_{t,T}-\lambda_t D_{t,T}\big].
\end{align*}
For $s<t$, since $C_{s,t}=Q_tC_{s,t-1}Q_t$, and $C_{s,t-1}$ is independent of $Q_t$ with bounded operator norm, by Corollary \ref{9},
\begin{align*}
  C_{s,t}\asymp m_{\gamma_t}^2(-\lambda_t)C_{s,t-1}+\frac{1}{n_t}{\rm Tr}C_{s,t-1} \cdot m_{\gamma_t}^2(-\lambda_t)\mu_tI_p,
\end{align*}
By rules of calculus for deterministic equivalents, we have
\begin{align}
  \frac{1}{p}{\rm Tr}C_{s,t}- m_{\gamma_t}^2(-\lambda_t)(1+\gamma_t\mu_t)\cdot\frac{1}{p}{\rm Tr}C_{s,t-1} \rightarrow 0, \quad a.s.
  \label{11}
\end{align}
Applying (\ref{11}) iteratively and notice (\ref{19}), we obtain
\begin{align*}
  \frac{1}{p}{\rm Tr}C_{t,T}- \prod_{s=t+1}^{T}m_{\gamma_s}'(-\lambda_s)\cdot\frac{1}{p}{\rm Tr}Q_t \rightarrow 0, \quad a.s.
\end{align*}
By (\ref{12}), we have
\begin{align*}
  \frac{1}{p}{\rm Tr}Q_t \rightarrow m_{\gamma_t}(-\lambda_t), \quad a.s.
\end{align*}
thus
\begin{align}
  \frac{1}{n_t}{\rm Tr}C_{t,T}\rightarrow \prod_{s=t+1}^{T}m_{\gamma_s}'(-\lambda_s)\cdot\gamma_tm_{\gamma_t}(-\lambda_t), \quad a.s.
    \label{14}
\end{align}
Using similar approach to $D_{s,t}=Q_tD_{s,t-1}Q_t$, we obtain
\begin{align*}
  \frac{1}{p}{\rm Tr}D_{t,T}- \prod_{s=t}^{T}m_{\gamma_s}'(-\lambda_s) \rightarrow 0, \quad a.s.
\end{align*}
thus
\begin{align}
  \frac{1}{n_t}{\rm Tr}D_{t,T}\rightarrow \prod_{s=t}^{T}m_{\gamma_s}'(-\lambda_s)\cdot\gamma_t, \quad a.s.
  \label{15}
\end{align}
Substituting (\ref{14}) and (\ref{15}) in expressions of $V_X$, we get
\begin{align*}
  V_X(\hat{\beta}_T;\beta) \rightarrow \sigma^2   \sum_{t=1}^{T}\gamma_{t}\upsilon_{t}\prod_{s=t+1}^T\big[\lambda_s^2m_{\gamma_s}'(-\lambda_s)\big], \quad a.s.
\end{align*}

\subsection{Proof of Theorem \ref{10}}
\label{43}
\subsubsection{Bias Term}
Let $B_t=Q_t\cdots Q_T \Sigma_0Q_T\cdots Q_t, 1\le t \le T$ and $B_{T+1}=\Sigma_0$. For $s\le t$, define $\bar{Q}_t=\lambda_t^{-1}(I_p+\tilde{m}_t\Sigma_t)^{-1}$, $H_{s,t}=\bar{Q}_s\cdots\bar{Q}_t\Sigma_t\bar{Q}_t\cdots \bar{Q}_s$, and
\begin{align*}
\tilde{\mu}_t&=\Big\{\big[1+\frac{1}{n_t}{\rm Tr}(\Sigma_t \bar{Q}_t)\big]^2-\frac{1}{n_t}{\rm Tr}(\Sigma_t \bar{Q}_t)^2\Big\}^{-1},\\
\tilde{\rho}_t&=\frac{1}{p}{\rm Tr}[\Sigma_t\bar{Q}_tB_{t+1}\bar{Q}_t]=\frac{1}{p}{\rm Tr}[H_{t,t}B_{t+1}],\\
\bar{B}_t&=\bar{Q}_t\cdots\bar{Q}_T\Sigma_0\bar{Q}_T\cdots \bar{Q}_t.
\end{align*}
Notice that $B_t=Q_t B_{t+1}Q_t$ and $B_{t+1}$ has bounded operator norm, by Theorem \ref{25}, we have
\begin{align*}
B_t \asymp &\bar{Q}_tB_{t+1}\bar{Q}_t+\gamma_t\tilde{\rho}_t\tilde{\mu}_tH_{t,t}\\
 \asymp &\bar{Q}_t(\bar{Q}_{t+1}B_{t+2}\bar{Q}_{t+1}+\gamma_{t+1}\tilde{\rho}_{t+1}\tilde{\mu}_{t+1}H_{t+1,t+1})\bar{Q}_t+\gamma_t\tilde{\rho}_t\tilde{\mu}_tH_{t,t}\\
=&\bar{Q}_t\bar{Q}_{t+1}B_{t+2}\bar{Q}_{t+1}\bar{Q}_t+\gamma_{t+1}\tilde{\rho}_{t+1}\tilde{\mu}_{t+1}H_{t,t+1}+\gamma_t\tilde{\rho}_t\tilde{\mu}_tH_{t,t}\\
\asymp& \cdots \asymp \bar{Q}_t\cdots\bar{Q}_{T}B_{T+1}\bar{Q}_{T}\cdots\bar{Q}_t+\sum_{s=t}^T\gamma_s\tilde{\rho}_s\tilde{\mu}_sH_{t,s}\\
=&\bar{B}_t+\sum_{s=t}^T\gamma_s\tilde{\rho}_s\tilde{\mu}_sH_{t,s}.
\end{align*}
For $1\le t\le T-1$, since $H_{t,t}$ has bounded operator norm, we have
\begin{align*}
H_{t,t}^{1/2}B_{t+1}H_{t,t}^{1/2} \asymp H_{t,t}^{1/2}\bar{B}_{t+1}H_{t,t}^{1/2}+\sum_{s=t+1}^T\gamma_s\tilde{\rho}_s\tilde{\mu}_sH_{t,t}^{1/2}H_{t+1,s}H_{t,t}^{1/2},
\end{align*}
by the property of deterministic equivalence,
\begin{align*}
\tilde{\rho}_t=\frac{1}{p}{\rm Tr}[H_{t,t}B_{t+1}] \rightarrow\frac{1}{p}{\rm Tr}[H_{t,t}\bar{B}_{t+1}]+\sum_{s=t+1}^T\gamma_s\tilde{\rho}_s\tilde{\mu}_s\frac{1}{p}{\rm Tr}[H_{t,t}H_{t+1,s}], \quad a.s.
\end{align*}
By Assumption \ref{51} and Assumption \ref{52}, we have $\frac{1}{p}{\rm Tr}[H_{t,t}\bar{B}_{t+1}]\rightarrow a_t, \frac{1}{p}{\rm Tr}[H_{t,t}H_{t+1,s}]\rightarrow a_{t,s},\tilde{\mu}_t \rightarrow \mu_t$,
thus $\tilde{\rho}_t \rightarrow \rho_t,a.s.$, where $\rho_t$ is defined recursively by $\rho_T=a_T$ and
\begin{align*}
\rho_t=a_t+\sum_{s=t+1}^T\gamma_s\mu_sa_{t,s}\rho_s,\quad 1\le t \le T-1.
\end{align*}
Therefore, we have
\begin{align*}
B_{1} \asymp \bar{B}_{1}+\sum_{t=1}^T\gamma_t\rho_t\mu_tH_{1,t},
\end{align*}
similar to the proof of Theorem \ref{5}, the deterministic equivalence above yields
\begin{align*}
\beta^\top B_{1}\beta -\beta^\top\Big[\bar{B}_{1}+\sum_{t=1}^T\gamma_t\rho_t\mu_tH_{1,t}\Big]\beta \rightarrow 0,\quad a.s.
\end{align*}
and by definition of $G_n$,we have almost surely
\begin{align*}
\beta^\top\Big[\bar{B}_{1}+\sum_{t=1}^T\gamma_t\rho_t\mu_tH_{1,t}\Big]\beta=& \Vert\beta \Vert^2 \Big\{\int\frac{s_0}{\prod_{t=1}^T\lambda_t^2(1+\tilde{m}_ts_t)^2}dG_n(\textbf{s})+\\ &\sum_{t=1}^T\gamma_t\mu_t\rho_t\int\frac{s_t}{\prod_{j=1}^t\lambda_j^2(1+\tilde{m}_js_j)^2}dG_n(\textbf{s})\Big\}\\
\rightarrow & r^2 \Big(g_0+ 
\sum_{t=1}^T\gamma_t\mu_t\rho_tg_{1,t}\Big),
\end{align*}
therefore,
\begin{align*}
   B_X(\hat{\beta}_T; \beta) \rightarrow  \tilde{B}_T(r, \bm{\gamma}, \bm{\lambda},G,H):=r^2 \Big(\prod_{j=1}^T\lambda_j^2\Big)\Big(g_1+ 
   \sum_{t=1}^T\gamma_t\mu_t\rho_tg_{1,t}\Big),\quad a.s.
\end{align*}

\subsubsection{Variance Term}
For $1 \le s < t \le T$, Let $C_{s,t}= Q_t\cdots Q_{s+1}Q_s Q_{s+1} \cdots Q_t, D_{s,t}= Q_t\cdots Q_{s+1}Q_s^2Q_{s+1} \cdots Q_t$, then
\begin{align}
V_X(\hat{\beta}_T;\beta)=\sigma^2\sum_{t=1}^{T} \big(\prod_{s=t+1}^T\lambda_s^2\big)\cdot\frac{1}{n_t}{\rm Tr}\big[(C_{t,T}-\lambda_t D_{t,T})\Sigma_0\big].
\label{42}
\end{align}

For $s<t$, let $G_{s,t}=\bar{Q}_t\cdots\bar{Q}_{s}\Sigma_s\bar{Q}_{s}\cdots\bar{Q}_t$, $\bar{C}_{s,t}=\bar{Q}_t\cdots\bar{Q}_{s+1}\bar{Q}_s\bar{Q}_{s+1}\cdots\bar{Q}_t$, $\tilde{\rho}_{s,t}^{(1)}=p^{-1}{\rm Tr}(G_{t,t}C_{s,t-1})$. Notice that $C_{s,t}=Q_t C_{s,t-1}Q_t$ and $C_{s,t-1}$ has bounded operator norm, by Theorem \ref{25}, we have
\begin{align*}
C_{s,t} \asymp &\bar{Q}_tC_{s,t-1}\bar{Q}_t+\gamma_t\tilde{\rho}_{s,t}^{(1)}\tilde{\mu}_tG_{t,t}\\
 \asymp &\bar{Q}_t(\bar{Q}_{t-1}C_{s,t-2}\bar{Q}_{t-1}+\gamma_{t-1}\tilde{\rho}_{s,t-1}^{(1)}\tilde{\mu}_{t-1}G_{t-1,t-1})\bar{Q}_t+\gamma_t\tilde{\rho}_{s,t}^{(1)}\tilde{\mu}_tG_{t,t}\\
\asymp&\bar{Q}_t\bar{Q}_{t-1}C_{s,t-2}\bar{Q}_{t-1}\bar{Q}_t+\gamma_{t-1}\tilde{\rho}_{s,t-1}^{(1)}\tilde{\mu}_{t-1}G_{t-1,t}+\gamma_t\tilde{\rho}_{s,t}^{(1)}\tilde{\mu}_tG_{t,t}\\
\asymp& \cdots \asymp \bar{Q}_t\cdots\bar{Q}_{s+1}Q_{s}\bar{Q}_{s+1}\cdots\bar{Q}_t+\sum_{j=s+1}^t\gamma_j\tilde{\rho}_{s,j}^{(1)}\tilde{\mu}_jG_{j,t}\\
\asymp&\bar{C}_{s,t}+\sum_{j=s+1}^t\gamma_j\tilde{\rho}_{s,j}^{(1)}\tilde{\mu}_jG_{j,t}.
\end{align*}
Since $G_{t,t}$ has bounded operator norm, we have
\begin{align*}
G_{t,t}^{1/2}C_{s,t-1}G_{t,t}^{1/2} \asymp G_{t,t}^{1/2}\bar{C}_{s,t-1}G_{t,t}^{1/2}+\sum_{j=s+1}^t\gamma_j\tilde{\rho}_{s,j}^{(1)}\tilde{\mu}_jG_{t,t}^{1/2}G_{j,t-1}G_{t,t}^{1/2}.
\end{align*}
If we define $\tilde{\rho}_{s,s}^{(1)}=0$, we have
\begin{align*}
\tilde{\rho}_{s,t}^{(1)}=\frac{1}{p}{\rm Tr}[G_{t,t}C_{s,t-1}] \rightarrow\frac{1}{p}{\rm Tr}[G_{t,t}\bar{C}_{s,t-1}]+\sum_{j=s}^{t-1}\gamma_j\tilde{\rho}_{s,j}^{(1)}\tilde{\mu}_j\frac{1}{p}{\rm Tr}[G_{t,t}G_{j,t-1}], \quad a.s.,
\end{align*}
By Assumption 4 and Assumption 5, we have $\frac{1}{p}{\rm Tr}[G_{t,t}\bar{C}_{s,t-1}]\rightarrow b_{s,t},\frac{1}{p}{\rm Tr}[G_{t,t}G_{j,t-1}]\rightarrow a_{j,t}$, thus $\tilde{\rho}_{s,t}^{(1)} \rightarrow \rho_{s,t}^{(1)},a.s.$, where $\rho_{s,t}^{(1)}$ is defined recursively by $\rho_{s,s}^{(1)}=0$ and
\begin{align*}
\rho_{s,t}^{(1)}=b_{s,t}+\sum_{j=s}^{t-1}\gamma_j\mu_ja_{j,t}\rho_{s,j}^{(1)},\quad 1\le s< t \le T.
\end{align*}
Since $\frac{1}{p}{\rm Tr}[\bar{C}_{t,T}\Sigma_0]\rightarrow b_{t}, \frac{1}{p}{\rm Tr}[G_{j,T}\Sigma_0]\rightarrow a_{j}$, and
\begin{align*}
\frac{1}{p}{\rm Tr}[C_{t,T}\Sigma_0]\rightarrow \frac{1}{p}{\rm Tr}[\bar{C}_{t,T}\Sigma_0]+\sum_{j=t}^T \gamma_j\mu_j\rho_{t,j}^{(1)}\frac{1}{p}{\rm Tr}[G_{j,T}\Sigma_0],\quad a.s.
\end{align*}
we have
\begin{align}
\frac{1}{p}{\rm Tr}[C_{t,T}\Sigma_0]\rightarrow b_t+\sum_{j=t}^T\gamma_j\mu_ja_{j}\rho_{t,j}^{(1)}=L_{1,t},\quad a.s.
\label{40}
\end{align}
Similarly, let $\bar{D}_{s,t}=\bar{Q}_t\cdots\bar{Q}_{s+1}\bar{Q}_s^2\bar{Q}_{s+1}\cdots\bar{Q}_t$, $\tilde{\rho}_{s,t}^{(2)}=p^{-1}{\rm Tr}(G_{t,t}D_{s,t-1})$, we have
\begin{align*}
D_{s,t} \asymp &\bar{Q}_t\cdots\bar{Q}_{s+1}Q_{s}^2\bar{Q}_{s+1}\cdots\bar{Q}_t+\sum_{j=s+1}^t\gamma_j\tilde{\rho}_{s,j}^{(2)}\tilde{\mu}_jG_{j,t}\\
\asymp&\bar{D}_{s,t}+\frac{1}{n_s}{\rm Tr}[G_{s,s}]\tilde{\mu}_sG_{s,t}+\sum_{j=s+1}^t\gamma_j\tilde{\rho}_{s,j}^{(2)}\tilde{\mu}_jG_{j,t}.
\end{align*}
If we define $\tilde{\rho}_{s,s}^{(2)}=\frac{1}{p}{\rm Tr}[G_{s,s}]$, we have
\begin{align*}
\tilde{\rho}_{s,t}^{(2)}=\frac{1}{p}{\rm Tr}[G_{t,t}D_{s,t-1}] \rightarrow\frac{1}{p}{\rm Tr}[G_{t,t}\bar{D}_{s,t-1}]+\sum_{j=s}^{t-1}\gamma_j\tilde{\rho}_{s,j}^{(2)}\tilde{\mu}_j\frac{1}{p}{\rm Tr}[G_{t,t}G_{j,t-1}], \quad a.s.,
\end{align*}
By Assumption \ref{51} and Assumption \ref{52}, we have $\frac{1}{p}{\rm Tr}[G_{t,t}\bar{D}_{s,t-1}]\rightarrow c_{s,t},\frac{1}{p}{\rm Tr}[G_{t,t}G_{j,t-1}]\rightarrow a_{j,t}$, and $\frac{1}{p}{\rm Tr}[G_{s,s}]=c_{s,s}$, thus $\tilde{\rho}_{s,t}^{(2)} \rightarrow \rho_{s,t}^{(2)},a.s.$, where $\rho_{s,t}^{(2)}$ is defined recursively by $\rho_{s,s}^{(2)}=c_{s,s}$ and
\begin{align*}
\rho_{s,t}^{(2)}=c_{s,t}+\sum_{j=s}^{t-1}\gamma_j\mu_ja_{j,t}\rho_{s,j}^{(2)},\quad 1\le s< t \le T.
\end{align*}
Since $\frac{1}{p}{\rm Tr}[\bar{D}_{t,T}\Sigma_0]\rightarrow c_{t}$, we obtain 
\begin{align}
\frac{1}{p}{\rm Tr}[D_{t,T}\Sigma_0]\rightarrow c_t+\sum_{j=t}^T\gamma_j\mu_ja_{j}\rho_{t,j}^{(2)}=L_{2,t},\quad a.s.
\label{41}
\end{align}
Substituting (\ref{40}) and (\ref{41}) in (\ref{42}), we get
\begin{align*}
   V_X(\hat{\beta}_T; \beta) \rightarrow \tilde{V}_T(\bm{\gamma}, \bm{\lambda},H):=\sigma^2
\sum_{t=1}^{T}\gamma_{t}\big(\prod_{s=t+1}^T\lambda_s^2\big)(L_{1,t}-\lambda_tL_{2,t}).
\end{align*}

\section{Auxiliary Lemmas}
\begin{lemma}[Lemma B.26, \cite{book1}]
Let $A \in \mathbb{R}^{n \times n}$ be nonrandom matrix, $x=(x_1, \cdots, x_n)^\top\in\mathbb{R}^{n}$ be random vector with independent entries. Assume that $\mathbb{E}x_i=0,\mathbb{E}|x_i|^2=1$, and $\mathbb{E}|x_i|^l\le v_l$. Then, for any $k \ge 1$,
\begin{align*}
\mathbb{E}|x^\top A x-{\rm Tr}A|^k \le C_k\Big[\big(v_4{\rm Tr}(AA^\top)\big)^{k/2}+v_{2k}{\rm Tr}(AA^\top)^{k/2}\Big],
\end{align*}
for some $C_k>0$.
\label{32}
\end{lemma}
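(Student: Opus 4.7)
The plan is to decompose
$$x^\top A x - \mathrm{Tr} A = \underbrace{\sum_{i=1}^{n} a_{ii}(x_i^2 - 1)}_{=:\, D} + \underbrace{\sum_{i \neq j} a_{ij} x_i x_j}_{=:\, S}$$
and bound $\mathbb{E}|D|^k$ and $\mathbb{E}|S|^k$ separately, combining them via $\mathbb{E}|D+S|^k \le 2^{k-1}(\mathbb{E}|D|^k + \mathbb{E}|S|^k)$. The cases $1 \le k < 2$ follow from $k = 2$ by Lyapunov's inequality, so I focus on $k \ge 2$.

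For the diagonal contribution $D$, the summands $a_{ii}(x_i^2-1)$ are independent and mean-zero, with $\mathbb{E}(x_i^2-1)^2 \le v_4$ and $\mathbb{E}|x_i^2-1|^k \le 2^k v_{2k}$ (using $v_{2k} \ge 1$ by Jensen applied to $t \mapsto t^k$ on $\mathbb{E}x_i^2=1$). The classical Rosenthal inequality for sums of independent centered random variables then gives
$$\mathbb{E}|D|^k \le C_k\Big[\big(v_4 \textstyle\sum_i a_{ii}^2\big)^{k/2} + v_{2k}\sum_i |a_{ii}|^k\Big],$$
and combining $\sum_i a_{ii}^2 \le \mathrm{Tr}(AA^\top)$ with the $\ell^2 \hookrightarrow \ell^k$ embedding $\sum_i |a_{ii}|^k \le (\sum_i a_{ii}^2)^{k/2} \le \mathrm{Tr}(AA^\top)^{k/2}$ puts this into the stated form.

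For the off-diagonal part $S$, introduce the filtration $\mathcal{F}_k = \sigma(x_1,\ldots,x_k)$, under which $S = \sum_{k=1}^n d_k$ with $d_k = x_k \sum_{j<k}(a_{kj}+a_{jk})x_j$ a martingale difference sequence. The martingale Burkholder--Rosenthal inequality yields
$$\mathbb{E}|S|^k \le C_k\Big[\mathbb{E}\big(\textstyle\sum_k \mathbb{E}[d_k^2 \mid \mathcal{F}_{k-1}]\big)^{k/2} + \sum_k \mathbb{E}|d_k|^k\Big].$$
The increment moments $\mathbb{E}|d_k|^k$ reduce, after conditioning on $\mathcal{F}_{k-1}$ to extract $\mathbb{E}|x_k|^k \le v_k \le v_{2k}^{1/2}$, to another Rosenthal bound for the linear form $\sum_{j<k}(a_{kj}+a_{jk})x_j$; summing over $k$ and applying $\sum_j (\sum_i a_{ij}^2)^{k/2} \le \mathrm{Tr}(AA^\top)^{k/2}$ and $\sum_{i,j}|a_{ij}|^k \le \mathrm{Tr}(AA^\top)^{k/2}$ (both via $\ell^2 \hookrightarrow \ell^k$) collapses everything into the target form with factors $v_4$ and $v_{2k}$.

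The main obstacle is the conditional quadratic-variation term $V := \sum_k \mathbb{E}[d_k^2 \mid \mathcal{F}_{k-1}] = \sum_k\bigl(\sum_{j<k}(a_{kj}+a_{jk})x_j\bigr)^2$, whose mean is at most $4\,\mathrm{Tr}(AA^\top)$ but whose $(k/2)$-th moment is not directly controlled by Jensen when $k>2$. I would resolve this by induction on $k$: $V$ is itself a quadratic form $x^\top B x$ in $x$ with coefficient matrix $B$ satisfying $\mathrm{Tr}(BB^\top) \lesssim \mathrm{Tr}(AA^\top)^2$, so writing $V = \mathbb{E}V + (V - \mathbb{E}V)$ and applying the inductive hypothesis at exponent $k/2$ to the centered quadratic form $V - \mathbb{E}V$ provides the required estimate, while $\mathbb{E}V$ directly supplies the $(v_4\,\mathrm{Tr}(AA^\top))^{k/2}$ contribution. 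Base case $k=2$ is a direct variance computation using independence. Careful tracking of constants through the recursion yields a single $C_k$ depending only on $k$, completing the proof.
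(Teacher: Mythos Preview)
The paper does not prove this lemma; it is quoted verbatim as Lemma~B.26 of Bai--Silverstein (\texttt{book1}) and used as a black box in the appendix. There is therefore no ``paper's proof'' to compare against.

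That said, your sketch is the standard argument and is essentially the one given in Bai--Silverstein. The diagonal/off-diagonal split, Rosenthal for $D$, the martingale Burkholder--Rosenthal for $S$, and the recursion on the conditional quadratic variation $V$ via the observation that $V=x^\top B x$ with $\mathrm{Tr}(BB^\top)\lesssim \mathrm{Tr}(AA^\top)^2$ are exactly the ingredients used there. Two minor points worth tightening: (i) your inequalities $v_k\le v_{2k}^{1/2}$ and $v_{2k}\ge 1$ rely on the $v_\ell$ being the actual moments (or at least satisfying Lyapunov-type relations), which is the intended reading but is not literally implied by ``$\mathbb{E}|x_i|^\ell\le v_\ell$'' for arbitrary upper bounds; (ii) the induction halves the exponent, so it terminates in the base range $1\le k\le 2$ after finitely many steps, and you should state explicitly that the inductive hypothesis is invoked for the \emph{new} matrix $B$, i.e., the statement being proved by induction is uniform over all coefficient matrices at each exponent. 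With those clarifications the argument is complete.
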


\begin{lemma}[Lemma 2.13, \cite{book1}]
Let $x_i$ be a martingale difference sequence with
respect to the increasing $\sigma$-field $\{\mathcal{F}_i\}$. Then for $k\ge 2$,
\begin{align*}
\mathbb{E}\Big|\sum_{i=1}^n x_i\Big|^k \le C_k \Big(\mathbb{E}\big(\sum_{i=1}^n \mathbb{E}_{\le i-1}|x_i|^2\big)^{k/2}+\sum_{i=1}^n\mathbb{E}|x_i|^k\Big).
\end{align*}
for some $C_k>0$, where $\mathbb{E}_{\le i}$ denotes the expectation taken over $\mathcal{F}_i$.
\label{33}
\end{lemma}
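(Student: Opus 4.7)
The plan is to derive this Burkholder--Rosenthal type bound via the two-step reduction that is standard for martingale $L^k$ inequalities. First, I would invoke the Burkholder--Davis--Gundy (BDG) inequality for discrete-time martingales, which gives $\mathbb{E}\bigl|\sum_{i=1}^n x_i\bigr|^k \le c_k\, \mathbb{E}\bigl(\sum_{i=1}^n x_i^2\bigr)^{k/2} = c_k\, \mathbb{E} Q_n^{k/2}$ for $k \ge 2$, where $c_k$ depends only on $k$. This collapses the problem onto controlling the moments of the quadratic variation $Q_n := \sum_i x_i^2$.

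Next, I would split $Q_n$ into its predictable compensator and a martingale remainder: $Q_n = V_n + M_n$, where $V_n = \sum_{i=1}^n \mathbb{E}_{\le i-1} x_i^2$ and $M_n = \sum_{i=1}^n (x_i^2 - \mathbb{E}_{\le i-1} x_i^2)$. Since each summand of $M_n$ is $\mathcal{F}_i$-measurable with conditional mean zero given $\mathcal{F}_{i-1}$, the sequence $\{x_i^2 - \mathbb{E}_{\le i-1} x_i^2\}$ is itself a martingale difference sequence. Using $(a+b)^{k/2} \le 2^{k/2-1}(a^{k/2} + |b|^{k/2})$ for $a,b \ge 0$ and $k \ge 2$, matters reduce to the bound $\mathbb{E} Q_n^{k/2} \le 2^{k/2-1}\bigl(\mathbb{E} V_n^{k/2} + \mathbb{E}|M_n|^{k/2}\bigr)$, so only $\mathbb{E}|M_n|^{k/2}$ still needs to be matched against $\sum_{i=1}^n \mathbb{E}|x_i|^k$.

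I would control $\mathbb{E}|M_n|^{k/2}$ by induction on $k$. The base $k=2$ is immediate from orthogonality of martingale differences, which gives $\mathbb{E}|S_n|^2 = \sum_i \mathbb{E}|x_i|^2 = \mathbb{E} V_n$ and thus the claim with $C_2 = 1$. For $k \in (2,4]$, the exponent $k/2$ lies in $(1,2]$, so the von Bahr--Esseen inequality for martingale differences applies to $M_n$ and yields $\mathbb{E}|M_n|^{k/2} \le 2 \sum_{i=1}^n \mathbb{E}|x_i^2 - \mathbb{E}_{\le i-1} x_i^2|^{k/2} \le C \sum_{i=1}^n \mathbb{E}|x_i|^k$, where the last step combines $|a-b|^{k/2} \le 2^{k/2-1}(|a|^{k/2}+|b|^{k/2})$ with the conditional Jensen inequality $(\mathbb{E}_{\le i-1} x_i^2)^{k/2} \le \mathbb{E}_{\le i-1}|x_i|^k$. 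For $k > 4$, I would reapply BDG to $M_n$ and iterate the whole decomposition on its new quadratic variation; the recursion terminates after $O(\log k)$ stages.

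The main obstacle is the bookkeeping inside this recursion for large $k$: each application of BDG to the residual martingale generates a new quadratic variation whose summands, of the form $(x_i^2 - \mathbb{E}_{\le i-1} x_i^2)^2$, must be expanded by $(a-b)^2 \le 2(a^2+b^2)$ into combinations of $x_i^{2^j}$ and their conditional expectations, and these must eventually be routed through conditional Jensen so as to land in either $\mathbb{E} V_n^{k/2}$ or $\sum_i \mathbb{E}|x_i|^k$. Tracking the constants $C_k$ through this iteration requires care, but at every stage the mechanism mirrors the $k \in (2,4]$ case above, so no genuinely new ingredient is needed beyond BDG, the predictable/martingale decomposition, and von Bahr--Esseen.
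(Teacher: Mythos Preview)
The paper does not prove this lemma at all: it is stated in the appendix as an auxiliary result and attributed directly to Lemma~2.13 of Bai and Silverstein's monograph, with no argument given. So there is no ``paper's own proof'' to compare against. Your sketch is a correct and standard route to the Burkholder--Rosenthal inequality---BDG to pass to the quadratic variation, split into the predictable bracket $V_n$ plus the martingale remainder $M_n$, then handle $M_n$ by von Bahr--Esseen for $k\in(2,4]$ and by iterating BDG for larger $k$. This is essentially how the result is proved in the literature (e.g.\ Burkholder's original paper and the treatment in Hall--Heyde), so nothing is missing conceptually; the only labor is the constant-tracking you already flagged.
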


\begin{lemma}
Let $A$ be a deterministic matrix with bounded operator norm, $Q$ be the resolvent matrix defined in Theorem \ref{25}. Then for $k\ge 2$,
\begin{align*}
\mathbb{E}\Big|\frac{1}{p}{\rm Tr}[A(Q-\mathbb{E}Q)]\Big|^k =O(n^{-k/2}).
\end{align*}
for some $C_k>0$.
\label{34}
\end{lemma}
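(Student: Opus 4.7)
The plan is to use the standard martingale difference decomposition with respect to the filtration generated by the rows of the data matrix $Z$, combined with the Sherman-Morrison leave-one-out identity and Burkholder's inequality (Lemma \ref{33}).

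First, let $\mathcal{F}_i = \sigma(x_1,\ldots,x_i)$ and let $\mathbb{E}_{\le i}$ denote conditional expectation given $\mathcal{F}_i$. Writing the deviation as a telescoping martingale sum,
\begin{align*}
\frac{1}{p}{\rm Tr}[A(Q-\mathbb{E}Q)] = \sum_{i=1}^n Y_i, \quad Y_i := (\mathbb{E}_{\le i}-\mathbb{E}_{\le i-1})\frac{1}{p}{\rm Tr}[AQ].
\end{align*}
Since $Q_{-i}=(\frac{1}{n}\sum_{j\ne i}x_jx_j^\top+\lambda I_p)^{-1}$ does not depend on $x_i$, one has $(\mathbb{E}_{\le i}-\mathbb{E}_{\le i-1}){\rm Tr}[AQ_{-i}]=0$, so $Y_i = (\mathbb{E}_{\le i}-\mathbb{E}_{\le i-1})\frac{1}{p}{\rm Tr}[A(Q-Q_{-i})]$.

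Next I would apply Sherman-Morrison to get the explicit rank-one formula
\begin{align*}
{\rm Tr}[A(Q-Q_{-i})] = -\frac{n^{-1}\,x_i^\top Q_{-i}AQ_{-i}x_i}{1+n^{-1}x_i^\top Q_{-i}x_i}.
\end{align*}
Using $\|Q_{-i}\|_{op}\le\lambda^{-1}$, positivity of the denominator, and $x_i=\Sigma^{1/2}z_i$ with $\|\Sigma\|_{op}\le C$, this yields the pointwise bound $|{\rm Tr}[A(Q-Q_{-i})]|\le C n^{-1}\|z_i\|^2$, hence
\begin{align*}
|Y_i|\le \frac{C}{p}\Big(\frac{\|z_i\|^2}{n}+1\Big), \qquad \mathbb{E}|Y_i|^k = O(n^{-k}),
\end{align*}
provided $Z$ has a finite $2k$-th moment (which is covered by the 16th-moment assumption in the relevant range). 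For the quadratic variation, bounding the squared conditional expectation by the conditional $L^2$ norm of ${\rm Tr}[A(Q-Q_{-i})]/p$ and using $\mathbb{E}\|z_i\|^4=O(n^2)$ give $\mathbb{E}_{\le i-1}|Y_i|^2 \le C/p^2 = O(n^{-2})$.

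Finally, I would plug these into Burkholder's inequality (Lemma \ref{33}):
\begin{align*}
\mathbb{E}\Big|\sum_{i=1}^n Y_i\Big|^k \le C_k\Big(\mathbb{E}\big(\sum_{i=1}^n \mathbb{E}_{\le i-1}|Y_i|^2\big)^{k/2} + \sum_{i=1}^n\mathbb{E}|Y_i|^k\Big) = O(n^{-k/2}) + O(n^{-k+1}),
\end{align*}
and the first term dominates for all $k\ge 2$, giving the claimed $O(n^{-k/2})$ rate. The only delicate step is the conditional-variance bound: a crude $\|x_i\|^2$-type estimate suffices here because of the extra $1/p^2$ factor, but if one wanted a tighter constant one would split $x_i^\top Q_{-i}AQ_{-i}x_i$ into its trace ${\rm Tr}(\Sigma Q_{-i}AQ_{-i})$ plus a Hanson-Wright-type fluctuation via Lemma \ref{32}. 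The main obstacle is simply the careful bookkeeping of moments of $\|z_i\|^2$ needed to bound $\mathbb{E}|Y_i|^k$ and $\mathbb{E}_{\le i-1}|Y_i|^2$ uniformly in $i$; once those are in hand, the result follows directly from the martingale inequality.
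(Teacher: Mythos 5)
Your proof is correct and follows the paper's own high-level structure exactly (martingale telescoping, Sherman--Morrison leave-one-out, Burkholder's inequality). The only substantive difference is in how you bound the single increment ${\rm Tr}[A(Q-Q_{-i})]/p$. You drop the denominator via $1+n^{-1}x_i^\top Q_{-i}x_i\ge 1$ and bound the numerator crudely by $\|x_i\|^2\le \|\Sigma\|_{op}\|z_i\|^2$, which yields the random bound $|Y_i|\lesssim p^{-1}(\|z_i\|^2/n+1)$ and requires a finite $2k$-th moment of $z_{11}$ to conclude $\mathbb{E}|Y_i|^k=O(n^{-k})$. The paper instead keeps the denominator and writes
\begin{align*}
\Big|\frac{n^{-1}\,x_i^\top Q_{-i}AQ_{-i}x_i}{1+n^{-1}x_i^\top Q_{-i}x_i}\Big|
\le \Big|\frac{x_i^\top Q_{-i}AQ_{-i}x_i}{x_i^\top Q_{-i}x_i}\Big|
\le \big\Vert Q_{-i}^{1/2}AQ_{-i}^{1/2}\big\Vert_{op}
\le \lambda^{-1}\Vert A\Vert_{op},
\end{align*}
i.e.\ a Rayleigh-quotient bound that gives the \emph{deterministic} estimate $|Y_i|\le 2\lambda^{-1}\Vert A\Vert_{op}/p$ with no moment assumptions at all. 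Both approaches then feed the same $O(p^{-2})$ bound on $\mathbb{E}_{\le i-1}|Y_i|^2$ and the same $O(n^{-k})$ bound on $\mathbb{E}|Y_i|^k$ into Lemma~\ref{33} and get $O(n^{-k/2})$. So your argument is valid and essentially the same proof; the paper's Rayleigh-quotient trick is simply tighter bookkeeping---it avoids ever touching moments of $\|z_i\|^2$, which is worth knowing since the lemma is later invoked at $k=8$ and a moment-dependent route would quietly consume the $16$-th-moment budget.
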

\begin{proof}
Let $Q_{-i}=(\frac{1}{n}\sum_{j\not= i}x_j x_j^\top-zI_p)^{-1}$ be the resolvent matrix without sample $x_i$. Note that
\begin{align*}
\frac{1}{p}{\rm Tr}[A(Q-\mathbb{E}Q)] =& \sum_{i=1}^n\frac{1}{p}{\rm Tr}[A\mathbb{E}_{\le i}(Q)]-\sum_{i=1}^n\frac{1}{p}{\rm Tr}[A\mathbb{E}_{\le i-1}(Q)] \\
= & \frac{1}{p}\sum_{i=1}^n(\mathbb{E}_{\le i}-\mathbb{E}_{\le i-1}){\rm Tr}[A(Q-Q_{-i})].
\end{align*}
where $\mathbb{E}_{\le i}$ is the expectation taken over $\mathcal{F}_i$ generated by $z_1 \cdots, z_i$, and $\mathbb{E}_{\le 0}Q=Q$. By Sherman-Morrison formula, we have
\begin{align}
Q=Q_{-i}-\frac{\frac{1}{n}Q_{-i}x_ix_i^\top Q_{-i}}{1+\frac{1}{n}x_i^\top Q_{-i}x_i},
\label{29}
\end{align}
thus
\begin{align*}
\frac{1}{p}{\rm Tr}[A(Q-Q_{-i})] =\frac{1}{pn}\cdot\frac{x_i^\top Q_{-i}AQ_{-i}x_i}{1+\frac{1}{n}x_i^\top Q_{-i}x_i}.
\end{align*}
Let $Y_i=(\mathbb{E}_{\le i}-\mathbb{E}_{\le i-1})\frac{1}{p}{\rm Tr}[A(Q-Q_{-i})]$, then $\{Y_i\}$ is a martingale difference sequence. Since
\begin{align*}
\Big|\frac{1}{pn}\cdot\frac{x_i^\top Q_{-i}AQ_{-i}x_i}{1+\frac{1}{n}x_i^\top Q_{-i}x_i}\Big|\le \Big|\frac{1}{p}\cdot\frac{x_i^\top Q_{-i}AQ_{-i}x_i}{x_i^\top Q_{-i}x_i}\Big|
\le \frac{1}{p}\cdot\Vert Q_{-i}^{1/2}AQ_{-i}^{1/2}\Vert_{op}
\le p^{-1}\lambda^{-1}\Vert A\Vert_{op},
\end{align*}
we have $|Y_i| \le 2p^{-1}\lambda^{-1}\Vert A\Vert_{op}=O(p^{-1})$. By Lemma \ref{33},
\begin{align*}
\mathbb{E}\Big|\frac{1}{p}{\rm Tr}[A(Q-\mathbb{E}Q)]\Big|^k\le C_k \Big(\mathbb{E}\big(\sum_{i=1}^n \mathbb{E}_{\le i-1}|Y_i|^2\big)^{k/2}+\sum_{i=1}^n\mathbb{E}|Y_i|^k\Big)=O(n^{-k/2}).
\end{align*}
\end{proof}

\vskip 0.2in
\bibliography{sample}

\end{document}